\newcommand{\Guus}[1]{}
\newcommand{\bd}[1]{}
\newcommand{\todo}[1]{}
\newtheorem{proposition}{Proposition}
\newtheorem{lemma}{Lemma}
\newtheorem{corollary}{Corollary}
\newtheorem{theorem}{Theorem}
\theoremstyle{definition}
\newtheorem{definition}{Definition}
\DeclareMathOperator{\id}{$I$}
\begin{document}

\preprint{APS/123-QED}

\title{On the accuracy of twirled approximations in repeater chains}

\author{Bethany Davies}
 \affiliation{QuTech, Delft Univ. of Technology, Lorentzweg 1, 2628 CJ Delft, The Netherlands}
\affiliation{Quantum Computer Science, EEMCS, Delft Univ. of Technology, Mekelweg 4, 2628 CD Delft, The Netherlands}
\affiliation{Kavli Institute of Nanoscience, Delft Univ. of Technology, Lorentzweg 1, 2628 CJ Delft, The Netherlands}

\author{Guus Avis}
\affiliation{QuTech, Delft Univ. of Technology, Lorentzweg 1, 2628 CJ Delft, The Netherlands}
\affiliation{Quantum Computer Science, EEMCS, Delft Univ. of Technology, Mekelweg 4, 2628 CD Delft, The Netherlands}
\affiliation{Kavli Institute of Nanoscience, Delft Univ. of Technology, Lorentzweg 1, 2628 CJ Delft, The Netherlands}
\affiliation{University of Massachusetts Amherst, Amherst, Massachusetts, USA}

\author{Stephanie Wehner}
\affiliation{QuTech, Delft Univ. of Technology, Lorentzweg 1, 2628 CJ Delft, The Netherlands}
\affiliation{Quantum Computer Science, EEMCS, Delft Univ. of Technology, Mekelweg 4, 2628 CD Delft, The Netherlands}
\affiliation{Kavli Institute of Nanoscience, Delft Univ. of Technology, Lorentzweg 1, 2628 CJ Delft, The Netherlands}

\date{\today}

\begin{abstract}
In the performance analysis of quantum networks, it is common to approximate bipartite entangled states as either being Bell-diagonal or Werner states.
We refer to these as \textit{twirled approximations} because it is possible to bring any state to such a form with a twirling map.
Although twirled approximations can simplify calculations, they can lead to an inaccuracy in performance estimates.
The goal of this work is to quantify this inaccuracy. 
We consider repeater chains where end-to-end entanglement is achieved by performing an entanglement swap at each repeater in the chain. 
We consider two scenarios: postselected and non-postselected entanglement swapping, where postselection is performed based on the Bell-state measurement outcomes at the repeaters. 
We show that, for non-postselected swapping, the Bell-diagonal approximation is exact for the computation of the Bell-diagonal elements of the end-to-end state. 
We also find that the Werner approximation accurately approximates the end-to-end fidelity when the infidelity of each initial state is small with respect to the number of repeaters in the chain. 
For postselected swapping, we find bounds on the difference in end-to-end fidelity from what is obtained with the twirled approximation, for initial states with a general noisy form. Finally, for the example of performing quantum key distribution over a repeater chain, we demonstrate how our insights can be used to understand how twirled approximations affect the secret-key rate. 
\end{abstract}

\maketitle


\section{\label{sec:intro} Introduction}
A common form of quantum repeater utilises \textit{entanglement swapping} \cite{zukowski1993event,wehner2018vision,munro2015inside}. Consider a simple scenario with two end nodes, each equipped with a single qubit, and an intermediate (repeater) node equipped with two qubits. Suppose that the repeater node shares the entangled two-qubit states $\rho_1$ and $\rho_2$ with each end node, such that the total initial state of the chain is $\rho_1 \otimes \rho_2$.
An entanglement swap transforms $\rho_1 \otimes \rho_2$ into a two-qubit state $\rho'$ shared between the end nodes.
An entanglement swap consists of a Bell-state measurement (BSM) at the repeater node and classical communication of the BSM outcome to the end nodes, followed by local Pauli corrections at the end nodes. 
If the level of noise in the initial states, the BSM and the Pauli corrections is low enough, then the end-to-end state $\rho'$ will be entangled (see Figure \ref{fig:twirled_approx_in_swapping}a). 

In a \textit{repeater chain}, $N-1$ repeater nodes are placed between the end nodes. Entanglement is firstly shared between adjacent nodes, in the form of $N$ entangled two-qubit states $\bigotimes_{k=1}^N\rho_k$. End-to-end entanglement is achieved by performing an entanglement swap at each repeater node.

Swapping-based repeaters are the form of repeater most within experimental reach, and present-day demonstrations of quantum networks have distributed entanglement in such a way --- see e.g. \cite{pompili2021multinode, hermans2022teleportation}. 
For this reason, theoretical work on the design and performance analysis of large-scale quantum networks typically uses swapping-based repeaters as a basic assumption in the network model \cite{dur1999repeaters,jiang2007optimal,vardoyan2023qnum,vanmilligen2024utilizing, meng2024percolationnoisy,avis2023requirements,victora2023entanglement,avis2025stochastic,inesta2023continuous}. 

By the \textit{Bell-diagonal} and \textit{Werner} approximations of the two-qubit state $\rho$, we respectively refer to the states
\begin{align}
    \mathcal{B}(\rho) &= \sum_{i,j=0}^1 \lambda_{ij}\ketbra{\Psi_{ij}}, \label{eqn:twirled_approx_BD} \\
    \mathcal{W}(\rho) &= \frac{4F-1}{3} \ket{\Psi_{00}}\bra{\Psi_{00}} + \frac{(1-F)}{3}I_4
    \label{eqn:twirled_approx_werner}
\end{align}
such that $\lambda_{ij} = \bra{\Psi_{ij}}\rho \ket{\Psi_{ij}}$, $F=\lambda_{00}$ is the fidelity with respect to $\ket{\Psi_{00}}$, and $I_4$ is the identity matrix. Both approximations are equivalent to applying the symmetrising map $\mathcal{B}$ ($\mathcal{W}$) to $\rho$ \cite{werner1989correlations,bennett1996mixed}, which is also known as \textit{twirling} the state $\rho$. We therefore refer to the Bell-diagonal and Werner approximations as \textit{twirled approximations}. 


When modelling states in a repeater chain, it can be convenient to use twirled approximations for the initial states of the chain. See Figure \ref{fig:twirled_approx_in_swapping}b for an illustration of how twirled approximations are used in a repeater chain. Without the the approximation, we input a pair of two-qubit initial states $\rho_1\otimes \rho_2$, and after the entanglement swap we have end-to-end state $\rho'$. With the approximation, each initial state is twirled with the map $\mathcal{B}$ ($\mathcal{W}$), and after the entanglement swap the end-to-end state is $\rho'_{\mathcal{B}}$ ($\rho'_{\mathcal{W}}$).

Twirled approximations have a symmetrised form, which requires only a few parameters to be specified and has a direct operational interpretation. For the Bell-diagonal approximation (\ref{eqn:twirled_approx_BD}), there are three parameters $\lambda_{01}$, $\lambda_{11}$, and $\lambda_{10}$, which are interpreted as the probabilities of $X$, $Y$ and $Z$ errors when applying a Pauli channel to the state $\ket{\Psi_{00}}$ to obtain the noisy state $\mathcal{B}(\rho)$. For the Werner approximation (\ref{eqn:twirled_approx_werner}), only a single parameter is required, which is the fidelity $F$ \cite{werner1989correlations}. 
Another important property is that the symmetric form is preserved after entanglement swapping, i.e. $\rho'_{\mathcal{B}}$ is Bell-diagonal and $\rho'_{\mathcal{W}}$ is Werner.
Consequently, $\rho'_{\mathcal{B}}$ ($\rho'_{\mathcal{W}}$) is often simpler to compute than $\rho'$, which has many advantages in the performance analysis of large-scale quantum networks, potentially with complex topologies. 
In particular, twirled approximations enable the analytical study of high-level network performance metrics, because the metrics may then be more easily understood as a function of the initial states and therefore low-level properties of the hardware \cite{vardoyan2023qnum,goodenough2024swapasap,davies2024buffering,guedes2024sequentialswaps,vanmilligen2024utilizing, meng2024percolationnoisy}. 
Twirled approximations enable a more efficient numerical simulation and optimisation of large-scale quantum networks \cite{victora2023entanglement,avis2025stochastic,inesta2023continuous,haldar2024fast,addala2025faster,zwerger2017trappedions}. They are also used to avoid making overly specific assumptions when modelling quantum hardware \cite{dasilva2024requirements,abruzzo2013SKR,zwerger2017trappedions}, because any noise model can in principle be transformed into such a case with twirling.

\begin{figure}[t]
\includegraphics[width=80mm]{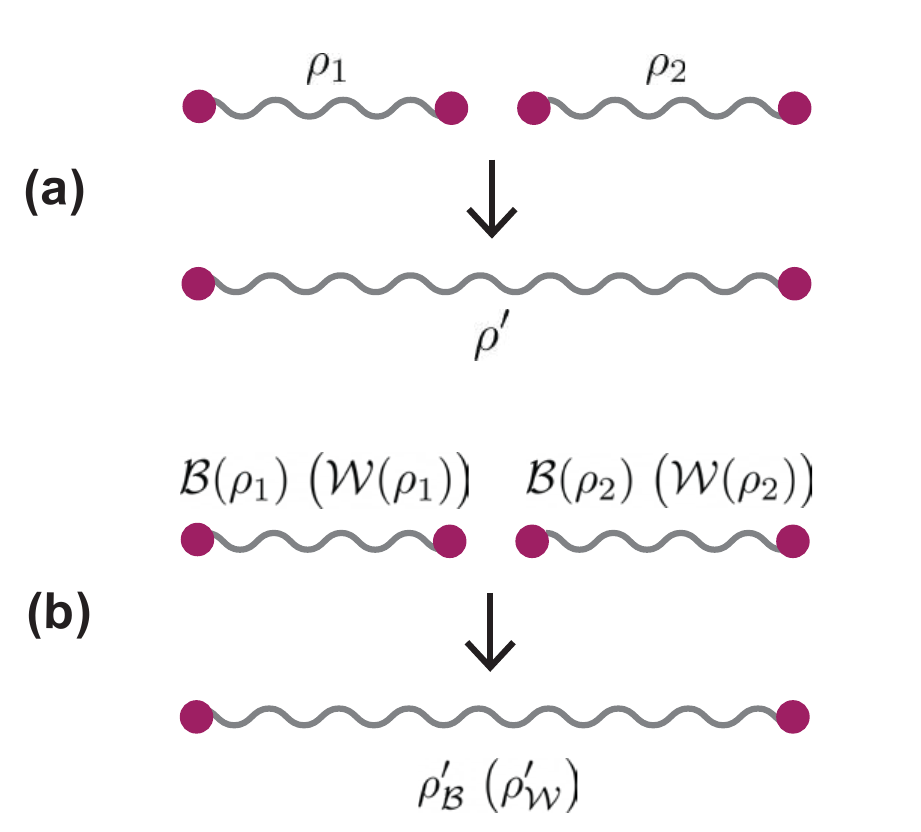}
\caption{\raggedright \label{fig:twirled_approx_in_swapping} \textbf{Twirled approximations in a repeater chain with $N=2$ initial states.} (a) The entanglement swapping of a pair of two-qubit entangled states $\rho_1\otimes\rho_2$ results in the end-to-end state $\rho'$. (b) With the Bell-diagonal (Werner) approximation $\mathcal{B}(\rho_k)$ ($\mathcal{W}(\rho_k)$) of the initial states, the end-to-end state $\rho'_{\mathcal{B}}$ ($\rho'_{\mathcal{W}}$) is simpler to compute than in case (a).}
\end{figure}

Despite their advantages, twirled approximations can cause inaccuracies in performance estimates.
For example, suppose that the initial states $\rho_1\otimes \rho_{2}$ have the same fidelity $\bra{\Psi_{00}}\rho_k\ket{\Psi_{00}}=F$, for $k=1,2$. 
Using the Werner approximation for both states in this scenario, the initial state is $\mathcal{W}(\rho_1)\otimes \mathcal{W}(\rho_2)$, and the end-to-end state $\rho'_{\mathcal{W}}$ is therefore also Werner with fidelity $F'_{\mathcal{W}} = \bra{\Psi_{00}}\rho'_{\mathcal{W}}\ket{\Psi_{00}} =  F^2 + (1-F)^2/3$ \cite{munro2015inside}. However, depending on the exact form of the initial states, the true fidelity $F' = \bra{\Psi_{00}}\rho'\ket{\Psi_{00}}$ of the end-to-end state can lie (potentially significantly) above or below the value of $F'_{\mathcal{W}}$.
The same holds if we use the Bell-diagonal approximations to obtain end-to-end fidelity $F'_{\mathcal{B}}$.
The principle question that we address in our work is: \textit{what is the maximum difference $|F'-F'_{\mathcal{B}}|$ ($|F'-F'_{\mathcal{W}}|$) between the true end-to-end fidelity $F'$ and the end-to-end fidelity with the twirled approximation $F'_{\mathcal{B}}$ ($F'_{\mathcal{W}}$)?} 


We consider two different scenarios: \textit{postselected} and \textit{non-postselected} swapping. 
In postselected swapping, the end-to-end state $\rho'_{\vec{s}}$ is postselected on the BSM outcomes $\vec{s} = (s_1,\dots,s_{N-1})$ obtained at the $N-1$ repeater nodes. 
In non-postselected swapping, the end-to-end state is a weighted average of the postselected outcomes. Let $p'_{\vec{s}}$ be the probability of measuring $\vec{s}$. Then, the end-to-state after non-postselected swapping is $\sum_{\vec{s}} p'_{\vec{s}} \rho'_{\vec{s}}$. This is the expected end-to-end state after swapping, with the expectation taken over all BSM outcomes. We note that in both schemes, the relevant Pauli corrections are still applied for each syndrome $\vec{s}$ to rotate the state to the target maximally-entangled state $\ket{\Psi_{00}}$.


Having introduced the problem, we now outline our main contributions. 
In Section \ref{sec:advantages_full}, we study the case of \textbf{non-postselected swapping} on a chain with $N$ initial states $\bigotimes_{k=1}^N\rho_k$ and $N-1$ repeaters, we consider a general class of entanglement-swapping protocols that we term \textit{swap-and-correct} protocols (see Definition \ref{def:swap_and_correct_protocol}). Swap-and-correct protocols consist of BSMs and Pauli corrections that can be applied at any node in the chain. For all such protocols, we show that:
    \begin{enumerate}[(i)]
        \item $\mathcal{B}(\rho') = \rho_{\mathcal{B}}'$, i.e. the Bell-diagonal approximation is exact for the computation of the Bell-diagonal components of the end-to-end state (Theorem \ref{thm:swap_and_correct_equivalence}). The Bell-diagonal components include the fidelity, and so $F' = F'_{\mathcal{B}}$.
        \item If the initial fidelities $F_k = \bra{\Psi_{00}}\rho_k \ket{\Psi_{00}}$ satisfy $1-F_k\ll 1/N$ for all $k=1,...,N$, then $F' \approx F'_{\mathcal{W}}$, i.e. the Werner approximation accurately approximates the end-to-end fidelity. More precisely, we have $|F' - F'_{\mathcal{W}}| = \mathcal{O}((1-F)^2 N^2)$ (Theorem \ref{thm:werner_approximation_accuracy}). 
    \end{enumerate}
    A key insight is that, in many important cases, non-postselected swapping and the use of the Bell-diagonal approximation are equivalent.
    Such cases include protocols whose performance may be expressed solely in terms of the Bell-diagonal elements $\mathcal{B}(\rho')$ of the end-to-end state $\rho'$. 
    For example, consider the channel $\Lambda^{\mathrm{tel}}_{\rho'}$ induced by standard quantum teleportation with resource state $\rho'$ \cite{bennett1993teleporting}. In Proposition \ref{prop:teleportation_inv_twirling}, we show that $\Lambda^{\mathrm{tel}}_{\rho'} = \Lambda^{\mathrm{tel}}_{\mathcal{B}(\rho')}$. By result ($\mathrm{i}$), we thus have $\Lambda^{\mathrm{tel}}_{\mathcal{B}(\rho')} = \Lambda^{\mathrm{tel}}_{\rho_{\mathcal{B}}'}$. In particular, the Bell-diagonal approximation for each initial state in the chain is exact when subsequently performing teleportation over the end-to-end state. However, exactness does not hold for all applications: we also see that for quantum key distribution, using the twirled approximation for certain input states can lead to a large reduction in performance (Section \ref{sec:qkd}).
    
    In Section \ref{sec:advantages_full}, we study the case of \textbf{postselected swapping} on a repeater chain with two initial states $\rho_1 \otimes \rho_2$ and one repeater, we restrict to swapping initial states of the form $\rho_k = p\ketbra{\Psi_{00}} + (1-p)\sigma_k$ such that $F = \bra{\Psi_{00}}\rho_k\ket{\Psi_{00}}$. The density matrix $\sigma_k$ is interpreted as an arbitrary noise term. We fix $F$ to perform a comparison with the twirled approximation, and also fix $p\in [0,F]$ to provide meaningful bounds. (It is necessary to fix a second parameter because for any $F$, there exists a state $\omega$ with $F=\bra{\Psi_{00}}\omega\ket{\Psi_{00}}$ such that there is a probability $p'_s>0$ of obtaining $\rho'_s = \ketbra{\Psi_{00}}$ when swapping the initial states $\omega^{\otimes 2}$ \cite{Bose99}. Thus, if $F$ is the only fixed parameter, it is always possible to obtain unit end-to-end fidelity with some non-zero probability.) Letting $F'_{s}$ denote the end-to-end fidelity postselected on BSM outcome $s$, we find:
    \begin{itemize}
        \item[$(\mathrm{iii})$] A tight, analytical upper bound for the achievable end-to-end fidelity, $F'_{s} \leq 1-2p(1-F)$ (Theorem \ref{thm:tight_upper_bound}). 
        We show tightness by finding an example of a state $\rho_{\mathrm{opt}}$ such that $ \rho_{\mathrm{opt}}^{\otimes 2}$ achieves this upper bound. The state $\rho_{\mathrm{opt}}$ has a physical interpretation: it corresponds to applying a $Y$-rotation (of a specified angle) with probability $1-p$ to a qubit of $\ket{\Psi_{00}}$.        
        \item[$(\mathrm{iv})$] A simple, analytical lower bound for $F'_{s}$ in terms of $p$ and $F$. Unlike the upper bound, this is not tight. Moreover, we find a tighter lower bound for $F'_{s}$ by formulating the problem as a semi-definite program. We perform a symmetry reduction of the problem, enabling efficient computation of the bound.        
    \end{itemize}
Our simple formulation with the parameters $p$ and $F$, as well as the efficiently computable bounds, allows for direct interpretation and comparison with twirled approximations (see Section \ref{sec:bounds_discussion}). For example, let us consider swapping the initial states $\rho_R^{\otimes 2}$, where $$\rho_R = p\ketbra{\Psi_{00}} + (1-p)\ketbra{01}.$$ The state $\rho_R$ (up to a local unitary rotation) closely approximates states generated in certain physical entanglement generation schemes \cite{campbell2008singleclick,cabrillo1999singleclick}.
It has $$p = \bra{\Psi_{00}}\rho_R \ket{\Psi_{00}} =  F.$$ By $(\mathrm{iii})$, we see that $$F'_{s}(\rho_R^{\otimes 2})\leq 1 - 2F(1 - F) =   F^2 + (1-F)^2.$$ 
Therefore, $F'_{s}(\rho_R^{\otimes 2}) - F'_{\mathcal{W}} \leq  2(1-F)^2/3$. For large $F$, we see that the Werner approximation does not cause a large reduction in the maximum end-to-end fidelity.

Building on this example, in Section \ref{sec:bounds_discussion}, we provide further discussion of how our bounds may be used to assess whether the twirled approximation is accurate for given input states. In Section \ref{sec:qkd}, we further discuss the implications of our results for an explicit example application: specifically, we look at the impact of twirled approximations for the performance of quantum key distribution over a repeater chain.

\section{Related work}
\label{sec:related_work}
The idea that entanglement can increase (or decrease) after a postselected entanglement swap has existed for many years \cite{Bose99}. 
Much work has since focused on a fundamental investigation of how much the entanglement can change after the entanglement swap, when compared to the initial states \cite{Oppliger2023,Song2014,Xie2022,Roa2014,Modlawska2008,Li2015,Kirby2016_2,Sen2005,Dajka2013, Bergou2021,cong2025formal}. To answer this question, in many studies, the concurrence has been used as a measure of entanglement \cite{Oppliger2023,Song2014,Roa2014,Modlawska2008,Li2015,Kirby2016_2,Sen2005, Bergou2021,cong2025formal}
because for this there exists a computable formula in the two-qubit case \cite{wootters1998arbitraryconcurrence}. 
Other work has used the negativity \cite{dajka2013swapping}, or instead of measures of entanglement, used measures of quantum correlation \cite{Xie2022}. 
Entanglement swapping \cite{zukowski1993event} can be seen as applying teleportation to one end of an entangled state \cite{bennett1993teleporting, hu2023progress}. Much work has focused on the analysis and optimisation of quantum teleportation with a noisy resource state -- see e.g. \cite{brauer2024enhancing,oh2002noisyteleportation,bang2018fidelity,badziag2000local}. However, to our knowledge, no systematic comparison has been performed with the twirled approximation. 
Moreover, the idea of a postselected swap is related to that of probabilistic teleportation \cite{mor1999conclusiveteleportation,li2000probabilistic,agrawal2002probabilisticteleportation}, where a qubit may be teleported with maximum fidelity, even if the resource state not maximally entangled. This is often made possible by measuring in a non-maximally-entangled basis and postselecting based on the measurement outcome.
Other work has focused on the effect of noise in the resource state on probabilistic teleportation \cite{fortes2016probabilistic}. Again, to our knowledge, no systematic comparison of (probabilistic) teleportation has been carried out with the twirled approximation of the resource state.



Finally, we note that twirling is a technique that is used widely outside of the context of repeater chains. For example, twirling is used in quantum error correction to reduce a general noise channel to a Pauli channel \cite{dur2005standardviadepol}. 
There has been work on quantifying the accuracy of such an approximation for the calculation of the error correction threshold \cite{geller2013efficient,gutierrez2015comparison}. 
Twirling is also used as a simplifying step in security proofs \cite{pironio2009device}.
Twirling is also used in randomised benchmarking \cite{emerson2005scalable, helsen2022general}, not as an approximation to the noise model, but as a tool that can be applied to extract important information about a noisy gate set. 







\section{non-postselected swapping}
\label{sec:advantages_twirled}
\subsection{Preliminaries}
We now introduce the basic notation and twirling results that will be used in this work.
Let $X$, $Y$ and $Z$ denote the usual Pauli gates, given by
\begin{equation}
    X \equiv 
    \left( \begin{array}{cc}
        0 & 1 \\
        1 & 0
    \end{array}\right),\;
    Y \equiv 
    \left( \begin{array}{cc}
        0 & -i \\
        i & 0
    \end{array}\right),\;
    Z \equiv 
    \left( \begin{array}{cc}
        1  &  0 \\
        0  & -1
    \end{array}\right).
\end{equation}
In what follows, we will denote the Bell basis vectors as
\begin{equation}
    \ket{\Psi_{ij}} \coloneqq I_2 \otimes X^i Z^j \left[ \frac{1}{\sqrt{2}}\left(\ket{00} + \ket{11}\right)\right],
    \label{eqn:bell_basis}
\end{equation}
where $i,j \in \{0,1\}$.

In the following lemmas, we restate the well-known results for the Bell-diagonal and Werner twirling of a two-qubit state. 
\begin{lemma}[Bell-diagonal twirl]
Suppose that Alice and Bob share the two-qubit state $\rho$ and each apply the same gate chosen from $\{I_2,X,Y,Z\}$ uniformly at random, creating the channel
\begin{multline}
       \rho \mapsto \mathcal{B}(\rho) =  \frac{1}{4} \Big[ \rho + (X\otimes X)\rho (X^{\dag} \otimes X^{\dag}) \\ +  (Y\otimes Y)\rho (Y^{\dag} \otimes Y^{\dag})+ (Z\otimes Z)\rho (Z^{\dag} \otimes Z^{\dag}) \Big]. \label{eqn:pauli_twirling}
\end{multline}
Then, $\mathcal{B}(\rho)$ is diagonal in the Bell basis and $\bra{\Psi_{ij}} \mathcal{B}(\rho) \ket{\Psi_{ij}} = \bra{\Psi_{ij}} \rho \ket{\Psi_{ij}}$, for all $i,j\in \{0,1 \}$. In other words, the eigenvalues of $\mathcal{B}(\rho)$ are given by the diagonal elements of $\rho$ when written in the Bell basis.
\label{lem:pauli_twirling}
\end{lemma}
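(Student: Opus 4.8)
The plan is to verify the claim by direct computation, exploiting the fact that the Bell basis $\{\ket{\Psi_{ij}}\}$ consists (up to global phase) of simultaneous eigenvectors of the correlated Pauli operators $P\otimes P$ for $P\in\{I_2,X,Y,Z\}$. First I would establish the key structural fact: for each $P\in\{I_2,X,Y,Z\}$ and each $i,j$, one has $(P\otimes P)\ket{\Psi_{ij}} = (-1)^{c_P(i,j)}\ket{\Psi_{ij}}$ for an appropriate sign $c_P(i,j)\in\{0,1\}$. This follows from the definition $\ket{\Psi_{ij}} = (I_2\otimes X^iZ^j)\ket{\Phi^+}$ together with the ``transpose trick'' $(A\otimes I_2)\ket{\Phi^+} = (I_2\otimes A^{T})\ket{\Phi^+}$, so that $(P\otimes P)\ket{\Psi_{ij}} = (I_2\otimes P X^iZ^j P^T)\ket{\Phi^+}$; since each Pauli either commutes or anticommutes with each other Pauli and $P P^T = \pm I_2$, the operator $PX^iZ^jP^T$ equals $\pm X^iZ^j$, giving the claimed eigenvalue.

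Next I would use this to compute $\mathcal{B}(\rho)$ in the Bell basis. Writing $\mathcal{B}(\rho) = \frac14\sum_{P}(P\otimes P)\rho(P\otimes P)^\dagger$ and sandwiching between $\bra{\Psi_{ij}}$ and $\ket{\Psi_{kl}}$, each term becomes $\frac14(-1)^{c_P(i,j)+c_P(k,l)}\bra{\Psi_{ij}}\rho\ket{\Psi_{kl}}$. For the diagonal entries $(i,j)=(k,l)$ the signs cancel in every term and one gets exactly $\bra{\Psi_{ij}}\rho\ket{\Psi_{ij}}$, proving the eigenvalue statement. For the off-diagonal entries one checks that $\sum_{P}(-1)^{c_P(i,j)+c_P(k,l)} = 0$ whenever $(i,j)\neq(k,l)$: this is a character-orthogonality argument, since the map $P\mapsto \big((-1)^{c_P(i,j)},(-1)^{c_P(k,l)}\big)$ realises distinct characters of the group generated by the correlated Paulis for distinct Bell vectors. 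Hence all off-diagonal elements vanish and $\mathcal{B}(\rho)$ is Bell-diagonal.

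The only mildly delicate point — and the step I would be most careful about — is the bookkeeping of the signs $c_P(i,j)$ and verifying the cancellation $\sum_P (-1)^{c_P(i,j)+c_P(k,l)} = 0$ for all three off-diagonal pairs; this is a finite check over the $4\times 4$ table of signs, so it is routine but must be done correctly (equivalently, one invokes the standard fact that $\{I_2\otimes I_2,\,X\otimes X,\,Y\otimes Y,\,Z\otimes Z\}$ spans the same operator space as the projectors onto the four Bell states, which immediately forces the twirl to be the diagonal-extraction map). I would present the transpose-trick computation explicitly and then either tabulate the four signs or cite the spanning fact to close the argument.
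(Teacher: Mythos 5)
Your proposal is correct, and it is essentially the standard argument: the paper itself does not spell out a proof but simply cites Appendix~A of Bennett \emph{et al.} (Ref.~[bennett1996mixed]), where the same reasoning appears. The two ingredients you identify both check out. First, the transpose-trick computation is sound: for $P\in\{I_2,X,Y,Z\}$ one has $(P\otimes P)\ket{\Psi_{ij}} = (I_2\otimes P X^i Z^j P^{T})\ket{\Psi_{00}} = \pm\ket{\Psi_{ij}}$, since $P^{T}=\pm P$ and Paulis commute or anticommute, so every Bell state is a $\pm1$ eigenvector of each correlated Pauli. Second, the off-diagonal cancellation is a genuine finite check, and it passes: the sign vectors over $P\in\{I_2,X,Y,Z\}$ are $(+,+,-,+)$, $(+,-,+,+)$, $(+,+,+,-)$, $(+,-,-,-)$ for $\ket{\Psi_{00}},\ket{\Psi_{01}},\ket{\Psi_{10}},\ket{\Psi_{11}}$ respectively, and any two distinct rows have vanishing overlap, so $\sum_P(-1)^{c_P(i,j)+c_P(k,l)}=0$ whenever $(i,j)\neq(k,l)$, while the diagonal terms reproduce $\bra{\Psi_{ij}}\rho\ket{\Psi_{ij}}$ exactly. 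Your alternative closing via the spanning fact is also valid: since each $P\otimes P$ is a $\pm1$ combination of the four Bell projectors with mutually orthogonal sign patterns, the twirl collapses to the pinching map $\rho\mapsto\sum_{ij}\ketbra{\Psi_{ij}}\rho\ketbra{\Psi_{ij}}$, which is precisely the statement of the lemma. Either route closes the argument; just make sure that, if you tabulate the signs, you use the paper's convention $\ket{\Psi_{ij}}=(I_2\otimes X^iZ^j)\ket{\Psi_{00}}$ consistently, since phase conventions shift individual signs (though never the orthogonality of the rows).
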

\begin{proof}
    See Appendix A of \cite{bennett1996mixed}.
\end{proof}
\begin{lemma}[Werner twirl]
    Suppose that Alice and Bob share the two-qubit state $\rho$. Alice applies the unitary $U$ to register $A$ and Bob applies $U^*$ to register $B$. The unitary $U$ is chosen uniformly at random from the Haar measure, creating the channel
    \begin{equation}
        \rho \mapsto \mathcal{W}(\rho) = \int \left( U\otimes U^* \right) \rho \left(U \otimes U^* \right)^{\dag} \dd U.
        \label{eqn:werner_twirling}
    \end{equation}
    Then, the resultant state is of the form
    \begin{equation}
        \mathcal{W}(\rho) = \frac{4F - 1}{3} \ket{\Psi_{00}}\bra{\Psi_{00}} + \frac{1-F}{3} I_4,
        \label{eqn:werner_state}
    \end{equation}
    where $F = \bra{\Psi_{00}}\rho \ket{\Psi_{00}}$ is the fidelity of $\rho$ to $\ket{\Psi_{00}}$, and $I_4$ is the identity matrix.
\label{lem:unitary_twirling}
\end{lemma}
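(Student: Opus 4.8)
The plan is to recognise $\mathcal{W}$ as the projection onto the commutant of the group $\mathcal{G} = \{U \otimes U^* : U \in SU(2)\}$ and then to fix the two remaining free parameters. First I would observe that $U \otimes U^*$ depends on $U$ only up to a global phase, so integrating over the Haar measure of $U(2)$ is the same as integrating over $SU(2)$; and by left-invariance of the Haar measure, $(V \otimes V^*)\, \mathcal{W}(\rho)\, (V \otimes V^*)^{\dagger} = \mathcal{W}(\rho)$ for every unitary $V$ (substitute $W = VU$ in the integral). Hence $\mathcal{W}(\rho)$ lies in the commutant $\mathcal{G}' = \{ M : [M, U \otimes U^*] = 0 \text{ for all } U \}$.

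The key step is to identify $\mathcal{G}'$. Using the ``transpose trick'' $(U \otimes I_2)\ket{\Psi_{00}} = (I_2 \otimes U^{T})\ket{\Psi_{00}}$ together with $U^* U^{T} = I_2$, one checks that $(U \otimes U^*)\ket{\Psi_{00}} = \ket{\Psi_{00}}$, so $\mathrm{span}\{\ket{\Psi_{00}}\}$ carries the trivial subrepresentation; on its orthogonal complement, $U \mapsto U \otimes U^*$ acts as the three-dimensional adjoint (spin-$1$) representation of $SU(2)$, which is irreducible and inequivalent to the trivial one. Equivalently, $\mathbf{2} \otimes \overline{\mathbf{2}} \cong \mathbf{1} \oplus \mathbf{3}$, using that the spin-$\tfrac12$ representation is self-conjugate. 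By Schur's lemma, $\mathcal{G}'$ is two-dimensional, spanned by the projectors $P_0 \coloneqq \ketbra{\Psi_{00}}$ and $P_1 \coloneqq I_4 - P_0$, so that $\mathcal{W}(\rho) = a P_0 + b P_1$ for some real scalars $a, b$.

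It then remains to determine $a$ and $b$. Taking the trace (which equals $1$, since $\mathcal{W}$ is a channel) gives $a + 3b = 1$. Sandwiching $\mathcal{W}(\rho)$ between $\bra{\Psi_{00}}$ and $\ket{\Psi_{00}}$ and using that $(U \otimes U^*)^{\dagger}\ket{\Psi_{00}} = (U^{\dagger} \otimes (U^{\dagger})^{*})\ket{\Psi_{00}} = \ket{\Psi_{00}}$ gives $a = \bra{\Psi_{00}}\mathcal{W}(\rho)\ket{\Psi_{00}} = \bra{\Psi_{00}}\rho\ket{\Psi_{00}} = F$. Hence $b = (1-F)/3$, and
\[
    \mathcal{W}(\rho) = F\, \ketbra{\Psi_{00}} + \frac{1-F}{3}\bigl( I_4 - \ketbra{\Psi_{00}} \bigr) = \frac{4F-1}{3}\, \ketbra{\Psi_{00}} + \frac{1-F}{3}\, I_4 ,
\]
which is (\ref{eqn:werner_state}).

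The only non-routine part is the commutant computation: one must know (or re-derive) that $U \otimes U^*$ splits into the trivial and the adjoint irreps and then invoke Schur's lemma. If one prefers to avoid representation theory, the same conclusion follows from a direct linear-algebra computation --- write a general $4 \times 4$ matrix $M$, impose $[M, U \otimes U^*] = 0$ for $U$ ranging over the one-parameter subgroups generated by $X$, $Y$ and $Z$, and solve the resulting linear system to obtain $M \in \mathrm{span}\{ I_4, \ketbra{\Psi_{00}} \}$ --- at the cost of a somewhat more tedious calculation. Everything after this step is elementary bookkeeping; alternatively one may simply cite Ref.~\cite{werner1989correlations}, and we include the sketch only for completeness.
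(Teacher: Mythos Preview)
Your argument is correct. The paper itself does not prove this lemma at all --- its entire proof reads ``See Section V of \cite{horodecki1999reduction}'' --- so you have supplied strictly more than the paper does. The Schur--lemma/commutant argument you give is the standard one and is essentially what one finds in the cited reference; your fixing of the coefficients via $\Tr[\mathcal{W}(\rho)]=1$ and $\bra{\Psi_{00}}\mathcal{W}(\rho)\ket{\Psi_{00}}=F$ is clean and correct. One minor remark: the paper cites \cite{horodecki1999reduction} rather than \cite{werner1989correlations} for this particular statement, though both are appropriate.
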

\begin{proof}
    See Section V of \cite{horodecki1999reduction}.
\end{proof}
In what follows, we refer to states of the form (\ref{eqn:werner_state}) as \textit{Werner states} \cite{werner1989correlations}, which is standard terminology in the field of quantum networks (see e.g. \cite{dur1999repeaters,munro2015inside,azuma2021tools}). States of the form (\ref{eqn:werner_state}) are also sometimes referred to as isotropic states, which are the states that are invariant under the application of $U\otimes U^*$, for any unitary $U$. However, note that the term \textit{Werner state} is also sometimes used to refer to the states which have $U\otimes U$ symmetry, which were originally studied in \cite{werner1989correlations}. These are equivalent to the states (\ref{eqn:werner_state}) up to a Pauli $Y$ rotation on one of the two qubits. We note that in order to avoid sampling unitaries uniformly from the Haar measure, which can be computationally expensive and difficult to realise experimentally, it is possible to implement the map (\ref{eqn:werner_twirling}) instead by sampling from a finite set of correlated Pauli gates \cite{bennett1996mixed}.
\begin{definition}
    Given a two-qubit state $\rho$, we refer to $\mathcal{B}(\rho)$ as the \textit{Bell-diagonal approximation} of $\rho$. We refer to $\mathcal{W}(\rho)$ as the \textit{Werner approximation} of $\rho$. 
    \label{def:twirled_approximations}
\end{definition}
In this work, we refer to the Bell-diagonal (Werner) approximation \textit{in a repeater chain} as when the approximation is used for all initial states in the chain (see Figure \ref{fig:twirled_approx_in_swapping}).


\subsection{Repeater chains with $N=2$ initial states}
\label{sec:N=2}
In this section, we consider non-postselected swapping on repeater chains with $N=2$ initial states. This section can be seen as a warm-up for our main results for chains with $N>2$ initial states, which are presented in Sections \ref{sec:N>2} and \ref{sec:advantages_full}.

\begin{figure}[t]
\includegraphics[width=80mm]{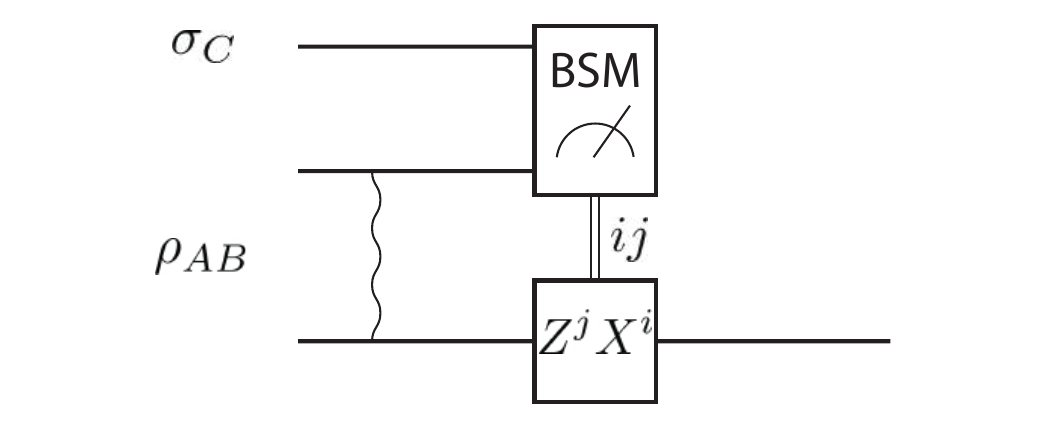}
\caption{\raggedright \label{fig:teleportation} \textbf{The standard teleportation protocol} \cite{bennett1993teleporting}. Input qubit state $\sigma_C$ and an entangled two-qubit resource state $\rho_{AB}$. A Bell-state measurement (BSM) is performed on registers $CA$, to obtain outcome $ij$. Finally, the correction $Z^j X^i$ is applied to register $B$.}
\end{figure}

Entanglement swapping was originally introduced in \cite{zukowski1993event} as a method to entangle particles that have not directly interacted.
The entanglement swapping protocol for $N=2$ initial states that we consider in this work is an extension of the protocol from \cite{zukowski1993event} that is implemented using the standard teleportation protocol from \cite{bennett1993teleporting}. We firstly outline the teleportation protocol in detail. We then go on to show a simple exactness result for the Bell-diagonal approximation in the standard teleportation channel in Proposition \ref{prop:teleportation_inv_twirling}, and extend it to the case of entanglement swapping in Corollary \ref{cor:swap_inv_twirling}. In Lemma \ref{lem:swap_bell_diagonal} and Corollary \ref{cor:swap_werner}, we see how the computation of the end-to-end state is simplified with the twirled approximations.

Given that two parties initially share entanglement, the standard teleportation protocol uses a BSM, classical communication and Pauli corrections to transport a quantum state between the two parties. The protocol is illustrated in Figures \ref{fig:teleportation} and \ref{fig:teleportation_and_swap}a. If all states and measurements involved are perfect, then the protocol teleports a quantum state perfectly and with probability one. If the initial shared entanglement is noisy, the teleportation protocol effectively sends the quantum state down a noisy channel. Properties of such channels, such as the teleportation fidelity, have been widely studied \cite{hu2023progress}. The shared entangled state used for teleportation is also referred to as a \textit{resource state}, because this is consumed in the protocol in order to teleport the target state.



Postselected on BSM outcome $ij$, the output state after standard teleportation is
\begin{equation}
         \sigma'_{ij} = \frac{1}{p'_{ij}}(Z^j X^i)_B\Tr_{CA}\left[ \ketbra{\Psi_{ij}}_{CA} \sigma_C \otimes \rho_{AB} \right](Z^j X^i)_B^{\dag},
         \label{eqn:conditional_teleportation}
\end{equation}
where 
\begin{equation}
    p'_{ij} = \Tr \left[\ketbra{\Psi_{ij}}_{CA} \sigma_C \otimes \rho_{AB}\right]
\end{equation}
is the associated probability of obtaining measurement outcome $ij$. The weighted average of the postselected outcomes is given by 
\begin{align}
    \sigma' &= \sum_{i,j = 0}^1 \sigma'_{ij} p'_{ij} \! \\ &= \! \sum_{i,j= 0}^1 (Z^j X^i)_B\! \bra{\Psi_{ij}} \sigma_C \otimes \rho_{AB} \ket{\Psi_{ij}}_{CA}\! (Z^j X^i)_B^{\dag} \label{eqn:teleportation_weighted_average} \\ &\eqqcolon \Lambda_{\rho}^{\mathrm{tel}}(\sigma),
    \label{eqn:teleportation_channel}
\end{align}
where $\Lambda_{\rho}^{\mathrm{tel}}$ is the channel induced by standard teleportation with resource state $\rho$ \cite{bennett1993teleporting}.
\begin{proposition}[Exactness of Bell-diagonal approximation for teleportation]
    Let $\Lambda^{\mathrm{tel}}_{\rho}(\sigma)$ denote the result of teleporting a qubit state $\sigma$ (register C) with a two-qubit resource state $\rho$ (registers AB) using the standard teleportation channel, as defined in (\ref{eqn:teleportation_channel}).
    Let $\mathcal{B}(\rho)$ denote the Bell-diagonal approximation of $\rho$. Then,
    \begin{equation}
        \Lambda_{\rho}^{\mathrm{tel}}(\sigma) = \Lambda_{\mathcal{B}(\rho)}^{\mathrm{tel}}(\sigma),
    \end{equation}
    i.e. the channel $\Lambda^{\mathrm{tel}}_{\rho}$ is invariant under the Bell-diagonal twirling of $\rho$. 
    \label{prop:teleportation_inv_twirling}
\end{proposition}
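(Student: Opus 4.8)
The plan is to reduce to a single‑Pauli statement and then invoke linearity. By Lemma \ref{lem:pauli_twirling} (equation (\ref{eqn:pauli_twirling})) we have $\mathcal{B}(\rho) = \tfrac14\sum_{P}(P\otimes P)\,\rho\,(P\otimes P)^\dagger$ with $P$ ranging over $\{I_2,X,Y,Z\}$, and from (\ref{eqn:teleportation_weighted_average}) the map $\rho\mapsto\Lambda^{\mathrm{tel}}_\rho(\sigma)$ is linear in $\rho$ for every fixed $\sigma$. Hence it suffices to show that for each Pauli $P$ one has $\Lambda^{\mathrm{tel}}_{(P\otimes P)\rho(P\otimes P)^\dagger} = \Lambda^{\mathrm{tel}}_\rho$; averaging over $P$ then gives $\Lambda^{\mathrm{tel}}_{\mathcal{B}(\rho)} = \Lambda^{\mathrm{tel}}_\rho$, which is the claim.

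For the single‑Pauli statement I start from the defining sum (\ref{eqn:teleportation_weighted_average}) for $\Lambda^{\mathrm{tel}}_{(P\otimes P)\rho(P\otimes P)^\dagger}(\sigma)$. Since the projectors $\ketbra{\Psi_{ij}}_{CA}$ and the corrections $Z^jX^i$ act on registers $C$ and $A$ only, whereas the $P_B$ factors of the twirl act on $B$ alone, I first pull the two copies of $P_B$ out to the outermost positions; this leaves an expression involving the original resource state $\rho_{AB}$ conjugated only by $(I_C\otimes P_A)$ against $\bra{\Psi_{ij}}_{CA}$. The essential identity is then $(I_C\otimes P_A)\ket{\Psi_{ij}}_{CA} = c_P(i,j)\,\ket{\Psi_{i'j'}}_{CA}$, where the pair $(i',j')$ and the unit phase $c_P(i,j)$ are read off from the Pauli‑product relation $P X^i Z^j = c_P(i,j)\,X^{i'}Z^{j'}$; this follows directly from the definition $\ket{\Psi_{ij}}_{CA} = (I_C\otimes (X^iZ^j)_A)[\tfrac{1}{\sqrt2}(\ket{00}+\ket{11})]$. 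Because left multiplication by $P$ permutes the Paulis, $(i,j)\mapsto(i',j')$ is a bijection of $\{0,1\}^2$.

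Substituting this identity, the phases $c_P(i,j)$ coming from the bra and from the ket multiply to $|c_P(i,j)|^2 = 1$, so the inner factor becomes simply $\bra{\Psi_{i'j'}}_{CA}(\sigma_C\otimes\rho_{AB})\ket{\Psi_{i'j'}}_{CA}$; and taking the adjoint of the same Pauli‑product relation gives $Z^jX^i P = \overline{c_P(i,j)}\,Z^{j'}X^{i'}$, so the leftover $P_B$'s recombine with the corrections into exactly $(Z^{j'}X^{i'})_B$ and $(Z^{j'}X^{i'})_B^\dagger$ (the phases again cancelling between the two sides). What remains is precisely $\Lambda^{\mathrm{tel}}_\rho(\sigma)$ written as a sum over the index $(i',j')$, and re‑indexing by the bijection $(i,j)\mapsto(i',j')$ finishes the argument.

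The only delicate point is the bookkeeping in that last step: one must check that the relabelling $(i,j)\mapsto(i',j')$ that $P_A$ induces on the Bell measurement is exactly the one for which $Z^jX^iP$ is proportional to $Z^{j'}X^{i'}$, and that every phase appearing has modulus one so that it cancels against its conjugate. This is elementary Pauli algebra, but it is the only place where a sign or an index could go wrong; it is cleanest either to isolate the relation $PX^iZ^j = c_P(i,j)X^{i'}Z^{j'}$ as a one‑line sublemma or simply to verify the four cases $P\in\{I_2,X,Y,Z\}$ directly.
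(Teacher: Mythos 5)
Your proof is correct, but it is organised differently from the paper's. The paper's argument rewrites the sum (\ref{eqn:teleportation_weighted_average}) directly: using $\ket{\Psi_{ij}}_{CA}=(X^iZ^j)_A\ket{\Psi_{00}}_{CA}$ and commuting the correction through the $CA$ contraction, the sum over BSM outcomes together with the corrections is absorbed into a correlated Pauli twirl of the resource state, giving the closed form $\Lambda^{\mathrm{tel}}_{\rho}(\sigma)=4\bra{\Psi_{00}}\sigma_C\otimes\mathcal{B}(\rho_{AB})\ket{\Psi_{00}}_{CA}$ of (\ref{eqn:twirl_inside_operators}); the claim then follows from idempotence, $\mathcal{B}^2=\mathcal{B}$. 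You instead prove a covariance statement: for each Pauli $P$, conjugating the resource state by $P\otimes P$ merely permutes the Bell-measurement outcomes and their matching corrections (via $PX^iZ^j=c_P(i,j)X^{i'}Z^{j'}$ and its adjoint $Z^jX^iP=\overline{c_P(i,j)}\,Z^{j'}X^{i'}$, with the unit phases cancelling in pairs), so $\Lambda^{\mathrm{tel}}_{(P\otimes P)\rho(P\otimes P)^{\dagger}}=\Lambda^{\mathrm{tel}}_{\rho}$; averaging over $P$ and using linearity in $\rho$ then gives the result. The Pauli bookkeeping you flag as the delicate step is exactly right and checks out, including the fact that the relabelling of the projector index and of the correction come from the same relation. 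The trade-off: the paper's route yields the explicit twirled formula (\ref{eqn:twirl_inside_operators}), which is reused immediately to prove Corollary \ref{cor:teleportation_pauli_channel}, whereas your symmetry-plus-averaging argument is the more standard "channel covariant under the twirl group" pattern and generalises cleanly, but does not by itself produce that closed form.
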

\begin{proof}
    Recalling that $\ket{\Psi_{ij}}_{CA} = (X^i Z^j)_A \ket{\Psi_{00}}_{CA}$, we may rewrite (\ref{eqn:teleportation_weighted_average}) in the following way: bringing the sum inside the inner product and using the identity  (\ref{eqn:pauli_twirling}) for Bell-diagonal twirling yields
    \begin{equation}
         \Lambda_{\rho}^{\mathrm{tel}}(\sigma)\! = 4 \bra{\Psi_{00}} \sigma_C \otimes \mathcal{B}(\rho_{AB}) \ket{\Psi_{00}}_{CA}.
         \label{eqn:twirl_inside_operators}
    \end{equation}
    Now,  
    \begin{align}
         \Lambda_{\mathcal{B}(\rho)}^{\mathrm{tel}}(\sigma)\! &= 4\bra{\Psi_{00}} \sigma_C \otimes \mathcal{B}^2(\rho_{AB}) \ket{\Psi_{00}}_{CA} \\
         &= 4 \bra{\Psi_{00}} \sigma_C \otimes \mathcal{B}(\rho_{AB}) \ket{\Psi_{00}}_{CA} = \Lambda_{\rho}^{\mathrm{tel}}(\sigma),
    \end{align}    
    where in the second line we have used the fact that the Bell-diagonal twirling of Bell-diagonal states leaves them invariant. 
\end{proof}
Interestingly, Proposition \ref{prop:teleportation_inv_twirling} allows us to derive a simple form for the standard teleportation channel.
\begin{corollary}[Standard teleportation is a Pauli channel]
Let $\Lambda^{\mathrm{tel}}_{\rho}(\sigma)$ denote the result of teleporting a qubit state $\sigma$ (register C) with a two-qubit resource state $\rho$ (registers AB) using the standard teleportation channel, as defined in (\ref{eqn:teleportation_channel}). Then, $\Lambda^{\mathrm{tel}}_{\rho}(\sigma)$ is a Pauli channel.
\label{cor:teleportation_pauli_channel}
\end{corollary}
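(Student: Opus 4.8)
The plan is to reduce the claim to the well-known fact that teleportation through a \emph{maximally} entangled resource is a unitary Pauli channel, and then exploit linearity together with Proposition~\ref{prop:teleportation_inv_twirling}.

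First I would use Proposition~\ref{prop:teleportation_inv_twirling} (equivalently, equation~(\ref{eqn:twirl_inside_operators})) to replace $\rho$ by its Bell-diagonal approximation, $\Lambda^{\mathrm{tel}}_{\rho}(\sigma)=\Lambda^{\mathrm{tel}}_{\mathcal{B}(\rho)}(\sigma)$, and expand $\mathcal{B}(\rho)=\sum_{k,l=0}^{1}\lambda_{kl}\ketbra{\Psi_{kl}}$ with $\lambda_{kl}=\bra{\Psi_{kl}}\rho\ket{\Psi_{kl}}\geq 0$ and $\sum_{k,l}\lambda_{kl}=1$ (these are the eigenvalues of $\mathcal{B}(\rho)$ by Lemma~\ref{lem:pauli_twirling}). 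Since the map defined by (\ref{eqn:teleportation_weighted_average}) is manifestly linear in the resource state, this gives $\Lambda^{\mathrm{tel}}_{\rho}(\sigma)=\sum_{k,l}\lambda_{kl}\,\Lambda^{\mathrm{tel}}_{\ketbra{\Psi_{kl}}}(\sigma)$, so it suffices to show that each $\Lambda^{\mathrm{tel}}_{\ketbra{\Psi_{kl}}}$ is a unitary Pauli channel.

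The main step is to establish $\Lambda^{\mathrm{tel}}_{\ketbra{\Psi_{kl}}}(\sigma)=(X^{k}Z^{l})\,\sigma\,(X^{k}Z^{l})^{\dagger}$. I would start from the standard teleportation identity for a perfect resource: each of the four Bell-measurement branches of (\ref{eqn:teleportation_weighted_average}) with $\rho_{AB}=\ketbra{\Psi_{00}}$ outputs $\tfrac14\sigma$ after its correction, i.e. $\bra{\Psi_{ij}}_{CA}\,\sigma_{C}\otimes\ketbra{\Psi_{00}}_{AB}\,\ket{\Psi_{ij}}_{CA}=\tfrac14\,(X^{i}Z^{j})\,\sigma\,(X^{i}Z^{j})^{\dagger}$. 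Using $\ket{\Psi_{kl}}_{AB}=(I_{2}\otimes X^{k}Z^{l})_{B}\ket{\Psi_{00}}_{AB}$ from (\ref{eqn:bell_basis}), the factor $X^{k}Z^{l}$ on register $B$ commutes through the partial trace over $CA$ and through the correction $Z^{j}X^{i}$; using $X^{i}X^{i}=Z^{j}Z^{j}=I_{2}$ and the Pauli (anti)commutation relations, the resulting product of Pauli operators reduces to $X^{k}Z^{l}$ up to an $(i,j)$-dependent sign. That sign disappears upon conjugating $\sigma$, so every branch contributes $\tfrac14(X^{k}Z^{l})\sigma(X^{k}Z^{l})^{\dagger}$ and the sum over $i,j$ equals $(X^{k}Z^{l})\sigma(X^{k}Z^{l})^{\dagger}$, as claimed. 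The only delicate part of the whole argument is this phase bookkeeping; everything else is substitution and linearity.

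Putting the pieces together yields $\Lambda^{\mathrm{tel}}_{\rho}(\sigma)=\sum_{k,l=0}^{1}\lambda_{kl}\,(X^{k}Z^{l})\,\sigma\,(X^{k}Z^{l})^{\dagger}$. Since $X^{0}Z^{0}=I_{2}$, $X^{1}Z^{0}=X$, $X^{1}Z^{1}=-iY$ and $X^{0}Z^{1}=Z$, the operators $\{X^{k}Z^{l}\}_{k,l}$ are the single-qubit Paulis up to phase, and the coefficients $\lambda_{kl}$ are nonnegative and sum to one; hence $\Lambda^{\mathrm{tel}}_{\rho}$ is a Pauli channel, with $I_2$, $X$, $Y$ and $Z$ occurring with probabilities $\lambda_{00}=F$, $\lambda_{10}$, $\lambda_{11}$ and $\lambda_{01}$ respectively.
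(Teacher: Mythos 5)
Your proof is correct and follows essentially the same route as the paper: invoke Proposition~\ref{prop:teleportation_inv_twirling} to pass to $\mathcal{B}(\rho)$, expand in the Bell basis, and reduce each Bell-state contribution to a Pauli conjugation of $\sigma$ via the perfect-teleportation identity. The only cosmetic difference is that you re-sum the four measurement branches for each $\ket{\Psi_{kl}}$ (tracking the sign that cancels under conjugation), whereas the paper works from the already-collapsed expression (\ref{eqn:twirl_inside_operators}), so no new gap or alternative idea is introduced.
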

\begin{proof} See Appendix \ref{app:unconditional_swapping_N=2}.
\end{proof}

\begin{figure}[t]
\includegraphics[width=90mm]{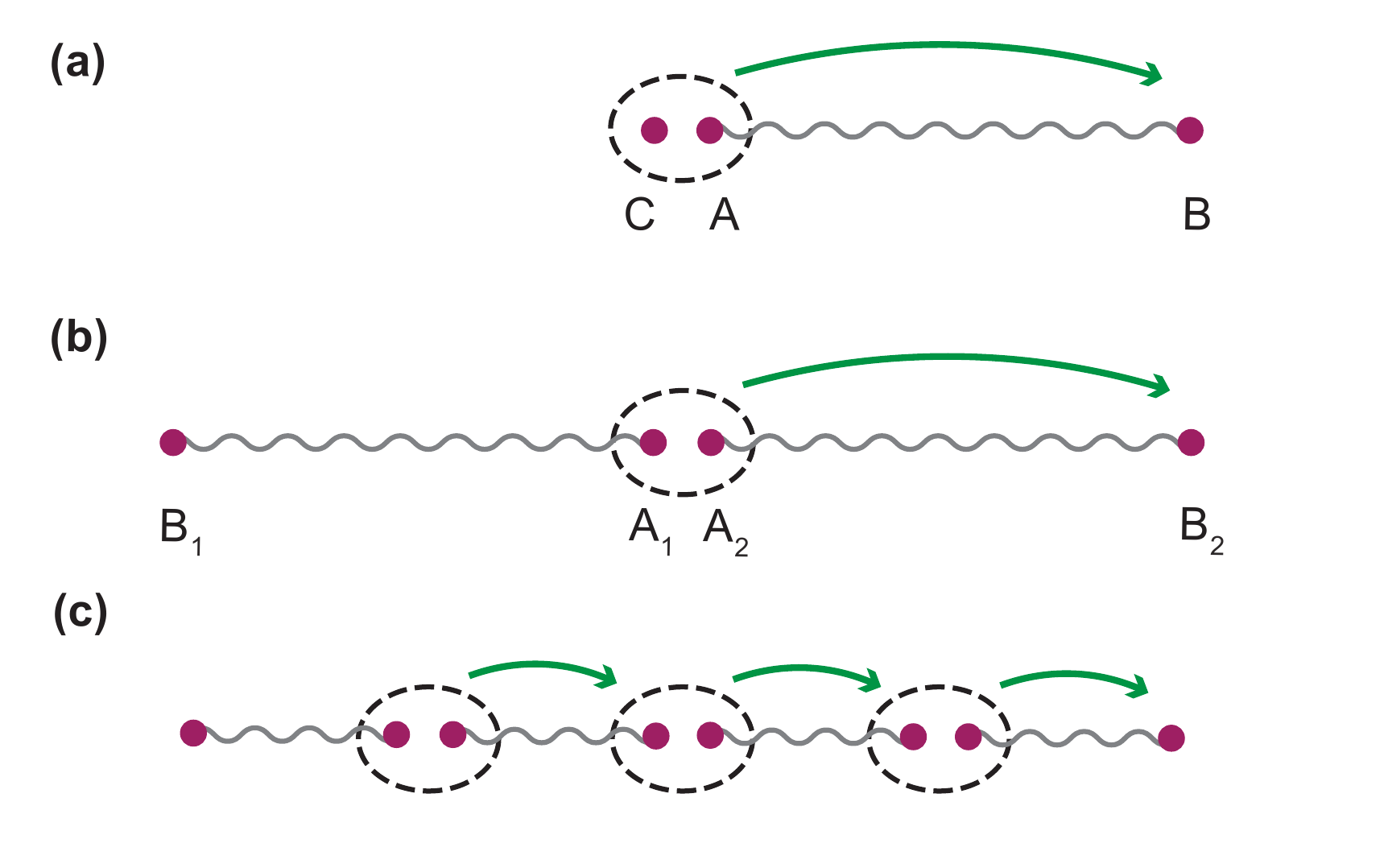}
\caption{\raggedright \label{fig:teleportation_and_swap} \textbf{Teleportation and entanglement swapping.} (a) The standard teleportation protocol involves a BSM on the target qubit and one qubit of an entangled resource state (wavy line), classical communication and corrections on the other half of the resource state (green arrow). (b) The standard entanglement swapping protocol involves teleporting one half of an entangled state. (c) An example of a swap-and-correct protocol for a repeater chain is applying teleportation sequentially.}
\end{figure}

We now define the entanglement swapping protocol for $N=2$ initial states \cite{zukowski1993event,bennett1993teleporting}. In the following, we let $\mathcal{D}(\mathcal{H})$ denote the set of density operators acting on Hilbert space $\mathcal{H}$.
\begin{definition}
    For $k=1,2,$ let $\mathcal{H}_{A_k}$ and $\mathcal{H}_{B_k}$ be qubit Hilbert spaces. Given a pair of two-qubit initial states $\rho_1\otimes\rho_2$ such that $\rho_k \in \mathcal{D}(\mathcal{H}_{A_k} \otimes \mathcal{H}_{B_k})$, the \textit{entanglement swapping} protocol (or just \textit{swapping}) is defined by applying the standard teleportation protocol (Figure \ref{fig:teleportation}) to teleport register $A_1$ to register $B_2$, using $\rho_2$ as the resource state. 
    
    \label{def:entanglement_swapping_protocol}
\end{definition}

See Figures \ref{fig:teleportation_and_swap}a and \ref{fig:teleportation_and_swap}b for an illustration of the entanglement swapping protocol for $N=2$.

Explicitly, the end-to-end state after a \textit{postselected swap} with BSM outcome $ij$ is then given by applying the map (\ref{eqn:conditional_teleportation}) to the initial states $\rho_1 \otimes \rho_2$.

The end-to-end state $\rho'$ after an \textit{non-postselected swap} is given by applying the standard teleportation channel (\ref{eqn:teleportation_channel}) to the initial states $\rho_1 \otimes \rho_2$:
\begin{equation}
   \rho' =  (I_2 \otimes  \Lambda_{\rho_2}^{\mathrm{tel}}) (\rho_1).
   \label{eqn:unconditional_e2e_state_N=2}
\end{equation}
A simple extension of Proposition \ref{prop:teleportation_inv_twirling} means that the Bell-diagonal approximation of the state $\rho_2$ is exact for non-postselected entanglement swapping (see Corollary \ref{cor:swap_inv_twirling}). 
However, this is not the case for postselected entanglement swapping. We will study postselected swapping in Section \ref{sec:advantages_full}.

We now compute the result of swapping Bell-diagonal states. For convenience, we denote a Bell-diagonal state as a length-four vector,
\begin{equation}
    \mathcal{B}(\rho) = \sum_{i,j} \lambda_{ij} \ketbra{\Psi_{ij}} \equiv (\lambda_{00},\lambda_{01},\lambda_{10},\lambda_{11})^T.
    \label{eqn:BD_as_4-vector}
\end{equation}
\begin{lemma}[Postselected swapping of Bell-diagonal states]
    Let $\rho'_{ij}$ be the end-to-end state after performing a postselected swap on a pair of Bell-diagonal states $\mathcal{B}(\rho_1) \otimes \mathcal{B}(\rho_2)$ such that $\mathcal{B}(\rho_1) \equiv (\lambda_0,\dots,\lambda_3)^T$ and $\mathcal{B}(\rho_2) \equiv (\mu_0,\dots,\mu_3)^T$, with BSM outcome $ij$. Then, $\rho'_{ij} = \rho'_{\mathcal{B}}$ for all $i$ and $j$, where
    \begin{equation}
        \rho'_{\mathcal{B}} \equiv (\lambda_0',\dots,\lambda_3')^T,
        \label{eqn:e2e_bell_diag}
    \end{equation}
is Bell-diagonal, and
\begin{equation}
\left( \begin{array}{c}
    \lambda_0'   \\
    \lambda_1'   \\
    \lambda_2'   \\
    \lambda_3'   \\
\end{array} \right)  = 
\left( \begin{array}{c}
    \lambda_0 \mu_0 + \lambda_1 \mu_1 + \lambda_2 \mu_2 + \lambda_3 \mu_3   \\
    \lambda_0 \mu_1 + \lambda_1 \mu_0 + \lambda_2 \mu_3 + \lambda_3 \mu_2   \\
    \lambda_0 \mu_2 + \lambda_2 \mu_0 + \lambda_3 \mu_1 + \lambda_1 \mu_3   \\
    \lambda_0 \mu_3 + \lambda_3 \mu_0 + \lambda_1 \mu_2 + \lambda_2 \mu_1   
\end{array} \right).
\label{eqn:swap_bell_diag_outcome}
\end{equation}
Moreover, the probability of this BSM outcome is
\begin{equation}
    p'_{ij} = \frac{1}{4}
\end{equation}
for all $i$ and $j$.
\label{lem:swap_bell_diagonal}
\end{lemma}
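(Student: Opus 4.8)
The plan is to reduce everything to the single-teleportation formula \eqref{eqn:teleportation_weighted_average} and its twirled rewriting \eqref{eqn:twirl_inside_operators}, applied to Bell-diagonal inputs. Recall that the end-to-end state after a postselected swap with outcome $ij$ is obtained by applying \eqref{eqn:conditional_teleportation} with $\sigma_C$ replaced by the $A_1$-register of $\mathcal{B}(\rho_1)$ and $\rho_{AB}$ replaced by $\mathcal{B}(\rho_2)$ on registers $A_2B_2$. So the first step is to write out
\begin{equation}
  \rho'_{ij} = \frac{1}{p'_{ij}} (Z^jX^i)_{B_2} \bra{\Psi_{ij}}_{A_1A_2}\, \mathcal{B}(\rho_1)_{B_1A_1}\otimes \mathcal{B}(\rho_2)_{A_2B_2}\, \ket{\Psi_{ij}}_{A_1A_2}\, (Z^jX^i)_{B_2}^\dagger,
\end{equation}
where the bra-ket is a partial inner product over $A_1A_2$, leaving an operator on $B_1B_2$.

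The second step is to compute the partial inner product directly in the Bell basis. Writing $\mathcal{B}(\rho_1) = \sum_{ab}\lambda_{ab}\ketbra{\Psi_{ab}}_{B_1A_1}$ and $\mathcal{B}(\rho_2)=\sum_{cd}\mu_{cd}\ketbra{\Psi_{cd}}_{A_2B_2}$, the key calculation is the overlap coefficient
\begin{equation}
  \bra{\Psi_{ij}}_{A_1A_2}\Big(\ket{\Psi_{ab}}_{B_1A_1}\otimes\ket{\Psi_{cd}}_{A_2B_2}\Big),
\end{equation}
which is a vector on $B_1B_2$. Using $\ket{\Psi_{ab}} = (I\otimes X^aZ^b)\ket{\Phi^+}$ together with the transpose trick $(I\otimes M)\ket{\Phi^+} = (M^T\otimes I)\ket{\Phi^+}$ and the standard ``teleportation contraction'' $\bra{\Phi^+}_{A_1A_2}(\ket{\Phi^+}_{B_1A_1}\otimes\ket{\Phi^+}_{A_2B_2}) = \tfrac12\ket{\Phi^+}_{B_1B_2}$, this overlap collapses to a (phase times) Bell vector $\ket{\Psi_{?}}_{B_1B_2}$ whose indices are determined by adding $(a,b)$, $(c,d)$ and $(i,j)$ bitwise — possibly with some signs from the Pauli algebra. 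Because the expression is sandwiched as $|\cdot|^2$ against itself (cross terms between distinct $\ket{\Psi_{ab}}\otimes\ket{\Psi_{cd}}$ vanish since they contract to orthogonal Bell vectors on $B_1B_2$), all phases square away, the $Z^jX^i$ correction removes the $ij$-dependence of which Bell vector survives, and one is left with $\rho'_{ij}$ proportional to $\sum_{ab,cd}\lambda_{ab}\mu_{cd}\,\ketbra{\Psi_{f(ab,cd)}}_{B_1B_2}$ where $f$ is the bitwise XOR $(a\oplus c, b\oplus d)$. Collecting terms by the value of $f$ gives exactly the four convolution sums in \eqref{eqn:swap_bell_diag_outcome} (after matching the $(\lambda_0,\dots,\lambda_3)\leftrightarrow\lambda_{ij}$ indexing convention), and the normalization works out so that the coefficient in front is $1/(4p'_{ij})$; since the four listed sums add to $(\sum\lambda)(\sum\mu)=1$, consistency of the trace forces $p'_{ij}=1/4$ independently of $ij$, which also proves the last claim.

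Alternatively — and this is probably the cleanest route to write up — I would avoid re-deriving the Bell contraction from scratch and instead invoke \eqref{eqn:twirl_inside_operators} from the proof of Proposition~\ref{prop:teleportation_inv_twirling}: for the \emph{non-postselected} swap, $\rho'_{\mathcal B} = 4\bra{\Psi_{00}}\mathcal{B}(\rho_1)\otimes\mathcal{B}(\rho_2)\ket{\Psi_{00}}$, and expanding both Bell-diagonal states and using $\ket{\Psi_{00}}$-overlaps of products of Bell states gives \eqref{eqn:swap_bell_diag_outcome} by a short finite computation. The remaining point — that the \emph{postselected} outcome $\rho'_{ij}$ equals this same state for every $ij$ — then follows because, for a Bell-diagonal resource state, $\mathcal B(\rho_2)$ commutes appropriately with the Pauli frame: each BSM outcome $ij$ just relabels the Bell index of the resource by $(i,j)$, but summing a Bell-diagonal distribution $\{\mu_{cd}\}$ against a relabeled-and-summed $\{\lambda_{ab}\}$ is invariant under a global XOR shift of the convolution, and the $Z^jX^i$ correction exactly undoes that shift. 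I expect the main obstacle to be exactly this bookkeeping of signs and index shifts in the Pauli/Bell algebra — keeping careful track of which $X^iZ^j$ factors pass through which tensor factor and verifying that all interference phases cancel under the modulus — rather than anything conceptually deep; once the contraction identity $\bra{\Psi_{ij}}(\ket{\Psi_{ab}}\otimes\ket{\Psi_{cd}}) \propto \ket{\Psi_{(a\oplus c\oplus?, \,b\oplus d\oplus?)}}$ is nailed down with correct indices, the rest is immediate.
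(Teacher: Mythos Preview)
Your proposal is correct and follows essentially the same route as the paper: compute the contraction $\bra{\Psi_{ij}}_{A_1A_2}\big(\ket{\Psi_{ab}}_{B_1A_1}\otimes\ket{\Psi_{cd}}_{A_2B_2}\big)$ via the flip-flop/transpose trick to obtain $\pm\tfrac12\ket{\Psi_{a\oplus c,\,b\oplus d}}_{B_1B_2}$ after the Pauli correction, square away the sign, sum over the Bell-diagonal expansion, and read off the XOR convolution and $p'_{ij}=1/4$. One minor remark: your parenthetical about ``cross terms between distinct $\ket{\Psi_{ab}}\otimes\ket{\Psi_{cd}}$ vanishing'' is unnecessary, since the inputs $\mathcal B(\rho_k)$ are already diagonal in the Bell basis, so the tensor product is a positive mixture of rank-one Bell projectors and there are no off-diagonal terms to worry about in the first place.
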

\begin{proof}
    See Appendix \ref{app:unconditional_swapping_N=2}.
\end{proof}
We see from Lemma \ref{lem:swap_bell_diagonal} that, when the initial states are Bell-diagonal, the end-to-end state after a non-postselected swap is given by
\begin{equation}
    \rho' = \sum_{i,j = 0}^1 \frac{1}{4}\rho'_{ij} = \rho'_{\mathcal{B}}.
    \label{eqn:unconditional_swap_BD}
\end{equation}
Therefore, when the initial states of the chain are Bell-diagonal, we see that postselected and non-postselected swapping give the same end-to-end state. 
\begin{corollary}
     Let $\rho'$ be the end-to-end state after performing a non-postselected entanglement swap on a pair of two-qubit initial states $\rho_1\otimes \rho_2$, as defined in (\ref{eqn:unconditional_e2e_state_N=2}). Then, 
     \begin{align}
         \mathcal{B}(\rho') = \rho'_{\mathcal{B}}
     \end{align}
  where $\rho'_{\mathcal{B}}$ is the output state (\ref{eqn:e2e_bell_diag}) given by swapping the Bell-diagonal approximations $\mathcal{B}(\rho_1)$ and $\mathcal{B}(\rho_2)$.
    \label{cor:swap_inv_twirling}
\end{corollary}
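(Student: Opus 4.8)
The plan is to reduce Corollary~\ref{cor:swap_inv_twirling} to the already-established Proposition~\ref{prop:teleportation_inv_twirling} together with Lemma~\ref{lem:swap_bell_diagonal}. First I would recall that, by Definition~\ref{def:entanglement_swapping_protocol} and Eq.~(\ref{eqn:unconditional_e2e_state_N=2}), the non-postselected end-to-end state is $\rho' = (I_2 \otimes \Lambda^{\mathrm{tel}}_{\rho_2})(\rho_1)$, i.e.\ the result of teleporting one half of $\rho_1$ using $\rho_2$ as the resource state. Applying Proposition~\ref{prop:teleportation_inv_twirling} to the resource register (the teleportation channel is insensitive to Bell-diagonal twirling of its resource state), we immediately get $\rho' = (I_2 \otimes \Lambda^{\mathrm{tel}}_{\mathcal{B}(\rho_2)})(\rho_1)$, so we may assume without loss of generality that the second initial state is already Bell-diagonal.

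Next I would handle the first initial state. The key observation is that the maps $\mathcal{B}$ and entanglement swapping commute in the appropriate sense: because the Pauli twirl $\mathcal{B}$ acts by conjugation with $P \otimes P$ for $P \in \{I_2,X,Y,Z\}$, and these Pauli operators commute (up to phases that cancel in the conjugation) with the BSM projectors and with the Pauli corrections, applying $\mathcal{B}$ to $\rho_1$ before the swap is equivalent to applying $\mathcal{B}$ to $\rho'$ after the swap. Concretely, one inserts $\mathcal{B}$ into Eq.~(\ref{eqn:teleportation_weighted_average}) acting on the input slot, pushes the correlated Paulis through the BSM projector $\ketbra{\Psi_{ij}}$ and the correction $Z^j X^i$, and recognises that the surviving action is exactly $\mathcal{B}$ on the output register. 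This gives $\mathcal{B}(\rho') = (I_2 \otimes \Lambda^{\mathrm{tel}}_{\mathcal{B}(\rho_2)})(\mathcal{B}(\rho_1))$, which by definition of the swap on Bell-diagonal inputs equals $\rho'_{\mathcal{B}}$ — the quantity computed in Lemma~\ref{lem:swap_bell_diagonal} — provided one also invokes Eq.~(\ref{eqn:unconditional_swap_BD}) to identify the non-postselected swap of two Bell-diagonal states with $\rho'_{\mathcal{B}}$.

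An alternative, perhaps cleaner, route avoids commuting $\mathcal{B}$ past the swap: instead compute $\mathcal{B}(\rho')$ directly from its definition as the vector of diagonal Bell elements $\bra{\Psi_{kl}}\rho'\ket{\Psi_{kl}}$. Using Eq.~(\ref{eqn:twirl_inside_operators}) one writes each such matrix element of $\rho'$ as a partial inner product involving $\mathcal{B}(\rho_2)$, expands $\rho_1$ in the Bell basis, and observes that only the Bell-diagonal entries $\lambda_{ij} = \bra{\Psi_{ij}}\rho_1\ket{\Psi_{ij}}$ survive the contraction — the off-diagonal coherences of $\rho_1$ integrate against orthogonal Bell projectors and vanish. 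The resulting expression for $(\bra{\Psi_{kl}}\rho'\ket{\Psi_{kl}})_{kl}$ is then manifestly the bilinear map of Eq.~(\ref{eqn:swap_bell_diag_outcome}) applied to the diagonal vectors of $\mathcal{B}(\rho_1)$ and $\mathcal{B}(\rho_2)$, which is precisely $\rho'_{\mathcal{B}}$.

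The main obstacle is the bookkeeping in the second step: verifying carefully that the correlated Pauli operators of the twirl genuinely pass through both the BSM projector and the subsequent Pauli correction with all phases accounted for, so that no cross terms between distinct Bell components of $\rho_1$ leak into the diagonal of $\rho'$. This is a finite check over the $4\times 4$ possibilities, and the identity $\ket{\Psi_{ij}}_{CA} = (X^i Z^j)_A\ket{\Psi_{00}}_{CA}$ used in the proof of Proposition~\ref{prop:teleportation_inv_twirling}, combined with the transpose trick $M_A\ket{\Psi_{00}}_{CA} = M^T_C\ket{\Psi_{00}}_{CA}$, makes it routine; I would simply cite the mechanism already developed there and state that the same manipulation applied to register $A_1$ (rather than the resource register) yields the claim.
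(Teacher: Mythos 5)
Your proposal is correct, and its main route packages the argument a little differently from the paper. Both proofs share the first step: use Proposition \ref{prop:teleportation_inv_twirling} (via Eq.~(\ref{eqn:unconditional_e2e_state_N=2})) to replace the resource state $\rho_2$ by $\mathcal{B}(\rho_2)$. They then diverge in how $\rho_1$ is handled. The paper splits $\rho_1 = \mathcal{B}(\rho_1) + M$, where $M$ carries the off-Bell-diagonal coherences, and shows by an explicit Bell-basis computation that teleporting $M$ through a Bell-diagonal resource yields only off-diagonal operators $\ket{\Psi_{a+i,b+j}}\bra{\Psi_{a+k,b+l}}$, which are annihilated by the final twirl; hence $\mathcal{B}(\rho') = \mathcal{B}(\rho'_{\mathcal{B}}) = \rho'_{\mathcal{B}}$. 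You instead argue that the correlated-Pauli twirl on $\rho_1$ commutes through the swap, i.e.\ $(I_2\otimes\Lambda^{\mathrm{tel}}_{\mathcal{B}(\rho_2)})\circ\mathcal{B} = \mathcal{B}\circ(I_2\otimes\Lambda^{\mathrm{tel}}_{\mathcal{B}(\rho_2)})$, which gives the claim in one line once Eq.~(\ref{eqn:unconditional_swap_BD}) identifies the right-hand side with $\rho'_{\mathcal{B}}$. This covariance step is genuinely valid, and the bookkeeping you defer can be avoided entirely by invoking Corollary \ref{cor:teleportation_pauli_channel}: since the teleportation channel is a Pauli channel, conjugation by a Pauli on the input emerges as conjugation by the same Pauli on the output (phases cancel), and the correlated twirl $(P\otimes P)_{B_1 A_1}$ becomes $(P\otimes P)_{B_1 B_2}$. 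That makes your first route arguably cleaner than the paper's explicit decomposition, at the mild cost of relying on the Pauli-channel corollary rather than only on the raw teleportation formula. Your alternative route in the third paragraph — expanding $\rho_1$ in the Bell basis and checking that the coherences do not contribute to the diagonal Bell elements of $\rho'$ — is essentially the paper's own proof.
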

\begin{proof}
    See Appendix \ref{app:unconditional_swapping_N=2}.
\end{proof}



From Corollary \ref{cor:swap_inv_twirling}, we see that the Bell-diagonal approximation is exact for the computation of the Bell-diagonal elements of the end-to-end state $\rho'$. This greatly simplifies the calculation of many important properties of the end-to-end state. For quantum network protocols and applications that use the end-to-end state as a resource, the Bell-diagonal approximation $\mathcal{B}(\rho')$ contains important information. For example, if one is performing QKD, the secret-key fraction of certain well-known protocols is invariant under Bell-diagonal twirling of the resource state \cite{bennett2014bb84,bruss1998sixstate,pironio2009device}. Furthermore, as we have seen in Proposition \ref{prop:teleportation_inv_twirling}, if standard teleportation is performed over the end-to-end link, then the only contributing components are again the Bell-diagonal elements of $\rho'$. 
Even if the protocol performance is not only dependent on the Bell-diagonal elements, these elements may still contain important information about application feasibility. For example, $\mathcal{B}(\rho')$ contains the information of the fidelity to $\ket{\Psi_{00}}$, which is an important metric to deduce the feasibility and performance of entanglement purification protocols. It is a common assumption in purification protocols that the initial states are Bell-diagonal twirled \cite{deutsch1996dejmps,bennett1996mixed,jansen2022enumerating,krastanov2019optimized}.

From Lemma \ref{lem:swap_bell_diagonal}, we obtain the well-known formula for the end-to-end state when the initial states are Werner \cite{munro2015inside}.
\begin{corollary}
    Let $\rho'_{\mathcal{W}}$ be the end-to-end state after swapping a pair of two-qubit Werner states $\mathcal{W}(\rho_1)\otimes \mathcal{W}(\rho_2)$ with fidelities $F_1$, $F_2$. Then, $\rho'_{\mathcal{W}}$ is a Werner state with fidelity $F'_{\mathcal{W}} =  F_1 F_2 + (1-F_1)(1-F_2)/3$.
    \label{cor:swap_werner}
\end{corollary}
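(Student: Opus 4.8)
The plan is to recognise a Werner state as a special Bell-diagonal state and then apply Lemma~\ref{lem:swap_bell_diagonal} directly. Comparing the Werner parametrisation (\ref{eqn:werner_state}) with the 4-vector notation (\ref{eqn:BD_as_4-vector}), a Werner state with fidelity $F$ has $\lambda_{00}=F$ and $\lambda_{01}=\lambda_{10}=\lambda_{11}=(1-F)/3$, so it corresponds to the vector $(F,\tfrac{1-F}{3},\tfrac{1-F}{3},\tfrac{1-F}{3})^T$. I would therefore set $(\lambda_0,\dots,\lambda_3)^T$ and $(\mu_0,\dots,\mu_3)^T$ in Lemma~\ref{lem:swap_bell_diagonal} to the vectors associated with $\mathcal{W}(\rho_1)$ and $\mathcal{W}(\rho_2)$, and use that $\rho'_{ij}=\rho'_{\mathcal B}$ for every BSM outcome (so the conclusion holds for the non-postselected swap as well, via (\ref{eqn:unconditional_swap_BD})).

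Next I would evaluate the first row of (\ref{eqn:swap_bell_diag_outcome}): $\lambda_0'=\lambda_0\mu_0+\lambda_1\mu_1+\lambda_2\mu_2+\lambda_3\mu_3 = F_1F_2 + 3\cdot\tfrac{1-F_1}{3}\cdot\tfrac{1-F_2}{3} = F_1F_2+\tfrac{(1-F_1)(1-F_2)}{3}$, which is precisely the claimed value of $F'_{\mathcal{W}}$.

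It then remains to check that the output state is Werner, i.e. that $\lambda_1'=\lambda_2'=\lambda_3'$. The cleanest argument: writing $\lambda_1=\lambda_2=\lambda_3=:a$ and $\mu_1=\mu_2=\mu_3=:b$, each of the last three rows of (\ref{eqn:swap_bell_diag_outcome}) reduces to the same symmetric expression $F_1 b + a F_2 + 2ab$, so the three off-diagonal entries are equal; since $\sum_i\lambda_i=\sum_i\mu_i=1$ the map is trace-preserving, forcing $\lambda_1'=\lambda_2'=\lambda_3'=(1-F'_{\mathcal{W}})/3$, which is exactly the Bell-diagonal signature of the Werner state (\ref{eqn:werner_state}). (One could equally just substitute the vectors into all four rows and verify directly.)

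There is no real obstacle here — the result is essentially a one-line specialisation of Lemma~\ref{lem:swap_bell_diagonal}. The only points requiring a little care are matching the normalisation of the Werner parametrisation to the ordering in the 4-vector notation, and observing that the equality of the three off-diagonal components is preserved by the swap (equivalently, that trace preservation pins them down once $\lambda_0'$ is known).
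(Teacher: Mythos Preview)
Your proposal is correct and matches the paper's own proof essentially line for line: recognise a Werner state as the Bell-diagonal state $(F,\tfrac{1-F}{3},\tfrac{1-F}{3},\tfrac{1-F}{3})^T$ and apply Lemma~\ref{lem:swap_bell_diagonal} to read off both the output fidelity and the equality of the remaining three components. Your extra check that $\lambda_1'=\lambda_2'=\lambda_3'$ via the common value $F_1 b + a F_2 + 2ab$ is a nice explicit verification that the paper leaves implicit.
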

\begin{proof}
    A Werner state with fidelity $F$ is Bell-diagonal, with the final three eigenvalues equal to one another:
    \begin{equation}
         \left(F, \frac{1-F}{3},\frac{1-F}{3},\frac{1-F}{3} \right)^T.
        \label{eqn:werner_state_bell_diag_coeffs}
    \end{equation}
    Then, by applying Lemma \ref{lem:swap_bell_diagonal}, swapping two Werner states with fidelities $F_1$ and $F_2$ results in a Werner state with fidelity $F'_{\mathcal{W}} = F_1 F_2 + (1-F_1)(1-F_2)/3$. As for Lemma \ref{lem:swap_bell_diagonal}, this result holds for both postselected and non-postselected swapping.
\end{proof}
Following from Corollary \ref{cor:swap_werner}, defining the \textit{Werner parameter}
\begin{equation}
    w \coloneqq \frac{4F - 1}{3},
    \label{eqn:werner_parameter}
\end{equation}
we see that the Werner parameter of the end-to-end state is $w' = w_1 w_2$. Then, to compute the Werner parameter for the output state, one only needs to multiply the Werner parameters of the initial states. This is a well-used result in the performance analysis of quantum networks \cite{munro2015inside}. If one is studying a large network, which could be a repeater chain or a more complex graph topology, it simplifies the analysis greatly to only consider the quality of each link to be described by one parameter $F$, which evolves under an entanglement swap according to the simple multiplicative relation -- see e.g. \cite{dur1999repeaters,vardoyan2023qnum,inesta2023optimal}. 
We note that similar multiplicative relations have also been found for general Bell-diagonal states \cite{goodenough2024swapasap}.

\subsection{Repeater chains with $N>2$ initial states}
\label{sec:N>2}
Here, we consider non-postselected swapping over repeater chains with $N$ initial states and $N-1$ repeaters. Due to the freedom in the order in which to perform BSMs and Pauli corrections, we present a generalised class of swapping protocols on chains with $N$ initial states that we term \textit{swap-and-correct} protocols. For non-postselected swapping, we then go on to generalise the results of Section \ref{sec:N=2}, presenting an exactness result for the Bell-diagonal approximation (Theorem \ref{thm:swap_and_correct_equivalence}). We also present an accuracy result for the Werner approximation (Theorem \ref{thm:werner_approximation_accuracy}).

Suppose that non-postselected entanglement swapping is applied \textit{sequentially}. By sequentially, we mean that the standard swapping protocol (Definition \ref{def:entanglement_swapping_protocol}) is performed $N-1$ times, moving from one side of the chain to the other (Figure \ref{fig:teleportation_and_swap}c). Then, by Proposition \ref{prop:teleportation_inv_twirling}, the Bell-diagonal approximation is again exact for each of the $N-1$ states that were treated as the resource state. 

In practice, though, entanglement swapping is not likely to be implemented sequentially, because it requires classical communication and Pauli corrections after every BSM before the next BSM can be applied. With this strategy there is excessive classical communication time, and each swapped state has to spend an increasingly large amount of time waiting in memory before a BSM is applied to its qubit(s). This is problematic if qubits are subject to time-dependent noise while stored in memory, since added noise on the initial states can be detrimental to the quality of the final end-to-end state. For example, it may be beneficial to, instead of applying Pauli corrections sequentially after each BSM, apply them at the end nodes after all $N-1$ BSMs have been carried out. In this way, all $N-1$ BSMs at each node and classical communication of the outcomes may be carried out simultaneously, reducing the total amount of time the initial states must spend waiting in memory. We therefore look to generalise Corollary \ref{cor:swap_inv_twirling} to \textit{all} possible strategies of performing BSMs and Pauli corrections for a repeater chain of arbitrary length. 

We firstly present a definition of the class of entanglement swapping protocols under consideration. We term these \textit{swap-and-correct} protocols. The outcomes of the $N-1$ BSMs define the \textit{syndrome} $\vec s$. The syndrome is a length-$(N-1)$ list of Pauli operators such that $s_i \equiv X^m Z^n$ means that outcome $mn$ was measured on node $i$. The Pauli correction at each stage of a swap-and-correct protocol depends on the result of the syndrome up to that point. 
\begin{definition}[Swap-and-correct protocol, informal]
    For a length-$N$ repeater chain, a \textit{swap-and-correct} protocol $\mathcal{P}$ dictates where to apply Pauli corrections, given the $N-1$ BSM outcomes that form the syndrome $\vec s$. More specifically, $\mathcal{P}$ is a map 
    \begin{equation}
        \mathcal{P} : \{I,X,Z,XZ\}^{N-1} \rightarrow \{I,X,Z,XZ\}^{N+1}
    \end{equation}
     such that $\mathcal{P}_k(\vec s)$ is the correction applied to node $k$ for $k=0,\dots, N$. Moreover, for any syndrome $\vec s$, $\mathcal{P}$ transforms $\ketbra{\Psi_{00}}^{\otimes N}$ into $\ketbra{\Psi_{00}}$. We refer to this as the \textit{correctness property}.
     \label{def:swap_and_correct_protocol}
\end{definition}

We note that for clarity, some details have been omitted from the above. For example, there must be some associated ordering of the BSMs, so that corrections always depend on past outcomes. This is an important property that imposes more restrictions on $\mathcal{P}$. We refer to Appendix \ref{app:swap_and_correct} for the full technical definition.
We also note that, for the $N-1$ repeater nodes, the protocol does not specify which of the two qubits in the node the correction is applied to. This is because a BSM projection will be applied following any correction, which means that both choices are equivalent (see Appendix \ref{app:swap_and_correct}).

Some examples of swap-and-correct protocols are:
\begin{itemize}
    \item \textit{Sequential teleportation.} Here, $\mathcal{P}_0(\vec s) = \mathcal{P}_1(\vec s) = I_2$ and $\mathcal{P}_k(\vec s) = s_{k-1}$ for $k = 2,\dots,N$. 
    \item \textit{Correct at end.} Here, $\mathcal{P}_k(\vec s) = I_2$ for all $k = 0,\dots,N-1$, and  $\mathcal{P}_N(\vec s) = \prod_{k=1}^{N-1} s_k$. 
\end{itemize}

Given a swap-and-correct protocol $\mathcal{P}$, we denote the outcome state of a postselected swap with syndrome $\vec{s}$ as $\rho'_{\vec{s}}$, and the probability of measuring $\vec{s}$ as $p'_{\vec{s}}$. Then, the end-to-end state after non-postselected swapping is defined to be
\begin{align}
   \rho' &=  \sum_{\vec{s}} \rho'_{\vec{s}} p'_{\vec{s}} \\ &\eqqcolon \Lambda_{\mathcal{P}}(\rho_{\mathrm{in}}),
\end{align}
where $\rho_{\mathrm{in}} = \otimes_{k=1}^N \rho_k$ is the initial state comprised of $N$ two-qubit entangled states, and $\Lambda_{\mathcal{P}}$ is the channel induced by non-postselected swapping with $\mathcal{P}$. We note that both $\rho'_{\vec{s}}$ and $\rho'$ implicitly contain all syndrome-dependent Pauli corrections that are applied during the swap-and-correct protocol to rotate to the target maximally-entangled state $\ket{\Psi_{00}}$.


We now present a generalisation of Corollary \ref{cor:swap_inv_twirling} for swap-and-correct protocols.
\begin{theorem}[Exactness of Bell-diagonal approximation]
     Let $\mathcal{P}$ be a swap-and-correct protocol for repeater chains with $N$ initial states. Let $\rho_{\mathrm{in}} = \otimes_{k=1}^N \rho_k$ denote the $N$ initial two-qubit states. Let $\Lambda_{\mathcal{P}}$ be the channel induced by non-postselected swapping with $\mathcal{P}$. Let $\mathcal{B}_{[N]}$ denote the Bell-diagonal twirling of states $1,\dots N$, such that
     \begin{equation}
         \mathcal{B}_{[N]} \left(\rho_{\mathrm{in}}\right) = \otimes_{k=1}^N \mathcal{B}(\rho_k)
     \end{equation}
     is the Bell-diagonal approximation of the initial states.
    Then,
     \begin{equation}
         \mathcal{B}\left( \Lambda_{\mathcal{P}}(\rho_{\mathrm{in}}) \right) = \Lambda_{\mathcal{P}}( \mathcal{B}_{[N]} \left(\rho_{\mathrm{in}}\right)).
     \end{equation}
     Moreover, the above quantity is independent of the swap-and-correct protocol $\mathcal{P}$, i.e. 
     \begin{equation}
         \mathcal{B}\left( \Lambda_{\mathcal{P}}(\rho_{\mathrm{in}}) \right) = \Lambda_{\mathrm{seq}}\left( \mathcal{B}_{[N]} \left(\rho_{\mathrm{in}}\right) \right)
     \end{equation}
     where $\mathrm{seq}$ is the protocol where standard teleportation is applied sequentially on each repeater. 
     \label{thm:swap_and_correct_equivalence}
\end{theorem}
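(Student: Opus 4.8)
The plan is to reduce the general statement to the special case of the sequential‑teleportation protocol, prove that case by iterating the $N=2$ result, and then argue that every swap‑and‑correct protocol agrees with the sequential one on the Bell‑diagonal elements of the end‑to‑end state.

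\textbf{Step 1 (sequential teleportation).} I would first prove by induction on $N$ that $\mathcal{B}(\Lambda_{\mathrm{seq}}(\rho_{\mathrm{in}})) = \Lambda_{\mathrm{seq}}(\mathcal{B}_{[N]}(\rho_{\mathrm{in}}))$ and that this state is itself Bell‑diagonal. The base case $N=2$ is Corollary~\ref{cor:swap_inv_twirling}. For the inductive step, write $\Lambda_{\mathrm{seq}}$ on the length‑$N$ chain as a single $N=2$ swap between the two‑qubit state $\tau$ produced by sequential teleportation on $\rho_1,\dots,\rho_{N-1}$ (living on the end nodes of that sub‑chain) and the resource state $\rho_N$. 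Applying Corollary~\ref{cor:swap_inv_twirling} to this last swap expresses $\mathcal{B}(\Lambda_{\mathrm{seq}}(\rho_{\mathrm{in}}))$ through the swap of $\mathcal{B}(\tau)$ with $\mathcal{B}(\rho_N)$; the induction hypothesis identifies $\mathcal{B}(\tau)$ with the fully twirled sequential output on $\rho_1,\dots,\rho_{N-1}$, which is Bell‑diagonal, and Lemma~\ref{lem:swap_bell_diagonal} shows the final swap of two Bell‑diagonal states is again Bell‑diagonal and equals $\Lambda_{\mathrm{seq}}(\mathcal{B}_{[N]}(\rho_{\mathrm{in}}))$. This also produces the explicit iterated formula for the twirled end‑to‑end state.

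\textbf{Step 2 (reduction of a general protocol).} A swap‑and‑correct protocol is a Clifford circuit built from (i) Bell‑state measurements on pairwise‑disjoint qubit pairs and (ii) syndrome‑dependent Pauli corrections. BSMs on disjoint pairs commute as operations, so the BSM ordering of $\mathcal{P}$ is immaterial. Any Pauli correction applied to a qubit that is subsequently consumed by a BSM can be commuted past that BSM using $\bra{\Psi_{mn}}(P\otimes I) \propto \bra{\Psi_{m'n'}}$, which merely relabels the projected outcome; iterating, all corrections on the $N-1$ repeater nodes are pushed onto the never‑measured end nodes $0$ and $N$ at the cost of a bijective relabeling of the syndrome $\vec{s}$. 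Hence $\Lambda_{\mathcal{P}}(\rho_{\mathrm{in}})$ is a sum, over bare BSM outcomes, of BSM‑projected states each followed by a single two‑qubit Pauli on nodes $(0,N)$, and the correctness property pins this residual Pauli down up to an element of the stabiliser $\{I_4,\,X\otimes X,\,Y\otimes Y,\,Z\otimes Z\}$ of $\ket{\Psi_{00}}$. Each such stabiliser element is diagonal in the Bell basis (it acts on $\ket{\Psi_{ij}}$ by a sign), so conjugation by it leaves every Bell‑diagonal matrix element unchanged and acts trivially on Bell‑diagonal states. Therefore $\mathcal{B}(\Lambda_{\mathcal{P}}(\rho_{\mathrm{in}}))$ is independent of $\mathcal{P}$ and, comparing with the sequential protocol, equals $\mathcal{B}(\Lambda_{\mathrm{seq}}(\rho_{\mathrm{in}})) = \Lambda_{\mathrm{seq}}(\mathcal{B}_{[N]}(\rho_{\mathrm{in}}))$ by Step 1. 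Feeding in the already‑Bell‑diagonal state $\mathcal{B}_{[N]}(\rho_{\mathrm{in}})$ instead, the residual‑Pauli ambiguity is again invisible (Lemma~\ref{lem:swap_bell_diagonal}), so $\Lambda_{\mathcal{P}}(\mathcal{B}_{[N]}(\rho_{\mathrm{in}})) = \Lambda_{\mathrm{seq}}(\mathcal{B}_{[N]}(\rho_{\mathrm{in}}))$, which gives all three displayed equalities.

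\textbf{Main obstacle.} The delicate part is the bookkeeping in Step 2: propagating the interleaved intermediate corrections through the BSMs while tracking the induced relabeling of the syndrome (and the ordering constraints on which correction may depend on which outcome), and then invoking the full technical form of the correctness property from Appendix~\ref{app:swap_and_correct} to conclude that the leftover end‑node Pauli is correct up to the $\ket{\Psi_{00}}$‑stabiliser. The remaining ingredients — the induction of Step 1 and the fact that the stabiliser of $\ket{\Psi_{00}}$ is diagonal in the Bell basis — are routine.
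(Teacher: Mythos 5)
Your proposal is correct, but it takes a genuinely different route from the paper's. The paper never reduces to the sequential case for general inputs: it proves $\mathcal{B}\circ\Lambda_{\mathcal{P}}=\Lambda_{\mathcal{P}}\circ\mathcal{B}_{[N]}$ directly as a channel identity by expanding the output twirl $\mathcal{B}_{0,N}$ as a sum over a Pauli $t$, pushing all Paulis (syndrome Paulis, corrections, and $t$) onto node $N$, and performing a double change of variables $(t,\vec{s})\to\vec{r}\to\vec{u}$ --- bijective thanks to the causality property of Definition~\ref{def:swap_and_correct_protocol_technical} and the correctness identity of Lemma~\ref{lem:correctness_property} --- which reassembles the sum as the product of input twirls composed with bare projections; syndrome/protocol independence for twirled inputs is handled by a separate lemma. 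Your architecture is instead (i) the sequential case by induction from Corollary~\ref{cor:swap_inv_twirling} and Lemma~\ref{lem:swap_bell_diagonal}, and (ii) a branch-by-branch comparison of a general protocol with the sequential one: repeater-node corrections are absorbed into a bijective relabelling of the syndrome (they are not literally "pushed onto the end nodes" --- only $\mathcal{P}_0,\mathcal{P}_N$ remain there), and the correctness constraint $\mathcal{P}_0(\vec s)\mathcal{P}_N(\vec s)\prod_j s_j\mathcal{P}_j(\vec s)=\pm I$ fixes the residual end-node Pauli up to conjugation by an element of the $\ket{\Psi_{00}}$ stabiliser group (strictly $-Y\otimes Y$ rather than $Y\otimes Y$, immaterial for conjugation), which the output twirl erases for general inputs and which acts trivially on Bell-diagonal inputs, giving $\Lambda_{\mathcal{P}}(\mathcal{B}_{[N]}(\rho_{\mathrm{in}}))=\Lambda_{\mathrm{seq}}(\mathcal{B}_{[N]}(\rho_{\mathrm{in}}))$. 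The bookkeeping you flag as the obstacle is real but is exactly the content of the technical lemmas in Appendix~\ref{app:swap_and_correct}, and your mechanism for it is sound. What the two routes buy: yours is more modular, recycles the $N=2$ result, and makes transparent that the residual-Pauli gauge freedom allowed by correctness is precisely the Bell-twirl group; the paper's avoids the induction and the decomposition of the sequential channel, obtaining the twirl-commutation identity and protocol independence in a single algebraic sweep.
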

\begin{proof}
    See Appendix \ref{app:swap_and_correct}.
\end{proof}
We see from Theorem \ref{thm:swap_and_correct_equivalence} that for non-postselected swapping with any swap-and-correct protocol, the Bell-diagonal approximation is exact for the computation of the Bell-diagonal components of the end-to-end state. As for sequential swapping, one may then simply recursively apply the map (\ref{eqn:swap_bell_diag_outcome}) $N-1$ times to compute the end-to-end state.

We now turn to study the Werner approximation. With the following results, we quantify the error incurred by using the Werner approximation to compute the end-to-end fidelity in a repeater chain. 



\begin{lemma}
    Consider applying the (non-postselected or postselected) sequential swapping protocol to a repeater chain with $N$ Bell-diagonal states $\otimes_{k=1}^N \mathcal{B}(\rho_k)$, where $F_k = \bra{\Psi_{00}}\rho_k \ket{\Psi_{00}}$ is the fidelity of the $k$-th initial state. Let $F' = \bra{\Psi_{00}}\rho' \ket{\Psi_{00}}$ be the fidelity of the end-to-end state $\rho'$. Then,
    \begin{equation}
        \prod_{k=1}^N F_k \leq F' \leq \frac{1}{2}\prod_{k=1}^N(2F_k-1)+ \frac{1}{2}.
    \label{eqn:bound_e2e_fidelity_BD}
    \end{equation}
    \label{lem:bound_e2e_fidelity_BD}
\end{lemma}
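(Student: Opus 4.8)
The plan is to reduce the whole statement to one elementary estimate for a single swap of two Bell-diagonal states, and then iterate. First I would note that when all initial states are Bell-diagonal, postselected and non-postselected sequential swapping produce the same end-to-end state --- this is precisely Eq.~(\ref{eqn:unconditional_swap_BD}) and the remark following it --- so it suffices to treat one of the two cases. Running the sequential protocol, Lemma~\ref{lem:swap_bell_diagonal} tells us that after swapping states $1,\dots,n$ the intermediate state is again Bell-diagonal; write $F^{(n)}$ for its fidelity to $\ket{\Psi_{00}}$, so that $F^{(1)}=F_1$ and $F^{(N)}=F'$. The lemma then follows once we control how $F^{(n)}$ changes under one further swap with $\mathcal B(\rho_{n+1})$.

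For that one-step estimate, write the state after $n$ swaps as the $4$-vector $\vec\lambda=(\lambda_0,\dots,\lambda_3)$ with $\lambda_0=F^{(n)}$, and $\mathcal B(\rho_{n+1})$ as $\vec\mu=(\mu_0,\dots,\mu_3)$ with $\mu_0=F_{n+1}$. By the first component of~(\ref{eqn:swap_bell_diag_outcome}) we have $F^{(n+1)}=\sum_{i=0}^3\lambda_i\mu_i$. Since all entries are nonnegative and each vector sums to $1$, two bounds drop out at once: keeping only one term gives $F^{(n+1)}\ge\lambda_0\mu_0=F^{(n)}F_{n+1}$, and discarding the nonnegative off-diagonal products $\lambda_i\mu_j$ with $i\neq j$ gives $F^{(n+1)}=F^{(n)}F_{n+1}+\sum_{i\ge1}\lambda_i\mu_i\le F^{(n)}F_{n+1}+(1-F^{(n)})(1-F_{n+1})$. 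The second inequality, rewritten in terms of $2F-1$, reads $2F^{(n+1)}-1\le(2F^{(n)}-1)(2F_{n+1}-1)$. (These are exactly the recursive applications of~(\ref{eqn:swap_bell_diag_outcome}) alluded to after Theorem~\ref{thm:swap_and_correct_equivalence}.)

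Both bounds of the lemma then follow by induction on $n$. For the lower bound $F^{(n)}\ge\prod_{k\le n}F_k$, the base case $n=1$ is an equality and the inductive step multiplies the hypothesis by $F_{n+1}\ge0$ and applies $F^{(n+1)}\ge F^{(n)}F_{n+1}$; this needs no assumption on the fidelities. For the upper bound I would show $2F^{(n)}-1\le\prod_{k\le n}(2F_k-1)$: the case $n=1$ is again an equality, and in the step one multiplies the hypothesis by $2F_{n+1}-1$ and combines with $2F^{(n+1)}-1\le(2F^{(n)}-1)(2F_{n+1}-1)$. Taking $n=N$ and rearranging yields $F'\le\tfrac12\prod_k(2F_k-1)+\tfrac12$.

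The point to be careful about --- and the only real obstacle --- is the sign bookkeeping in the upper-bound induction. The running quantity $2F^{(n)}-1$ need not be nonnegative even when all $F_k\ge\tfrac12$ (for instance after swapping two Werner states of fidelity close to $\tfrac12$), so one must not argue via its sign; what makes the step legitimate is that we only ever multiply the running inequality by the \emph{fixed} nonnegative numbers $2F_k-1$, and multiplication by a nonnegative constant preserves the direction of an inequality regardless of the signs of its two sides. This is also where the statement implicitly uses $F_k\ge\tfrac12$ (the physically relevant high-fidelity regime): some such restriction is unavoidable, since e.g.\ swapping $\ket{\Psi_{10}}\otimes\ket{\Psi_{01}}\otimes\ket{\Psi_{11}}$ gives $F'=1$ while every $F_k=0$.
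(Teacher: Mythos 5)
Your proposal is correct and follows essentially the same route as the paper's proof: the paper writes the two Bell-diagonal states as $(F_1,(1-F_1)\vec u)^T$ and $(F_2,(1-F_2)\vec v)^T$, derives the single-swap bounds $F_1F_2\le F'\le F_1F_2+(1-F_1)(1-F_2)$ from $0\le\vec u\cdot\vec v\le 1$ (your ``keep one term / drop cross terms'' estimate is the same inequality), rewrites the upper bound as $\tfrac12(2F_1-1)(2F_2-1)+\tfrac12$, and applies it inductively — exactly your argument, with Lemma~\ref{lem:swap_bell_diagonal} supplying both the recursion and the equivalence of the postselected and non-postselected cases.

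Your closing remark about sign bookkeeping is not a defect of your proof but a genuine catch: the inductive step for the upper bound multiplies the running inequality by $2F_{k}-1$, which preserves its direction only when $F_k\ge\tfrac12$, and your counterexample $\ket{\Psi_{10}}\otimes\ket{\Psi_{01}}\otimes\ket{\Psi_{11}}$ (all $F_k=0$, yet $F'=1$ while the claimed upper bound evaluates to $0$) shows the lemma as literally stated fails for $N\ge 3$ without some such restriction; for $N=2$ the upper bound holds unconditionally. The paper's own one-line induction tacitly assumes this nonnegativity, so your proof, stated under the hypothesis $F_k\ge\tfrac12$, is if anything more careful than the paper's.
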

\begin{proof}
    See Appendix \ref{app:swap_and_correct}.
\end{proof}
One may combine Lemma \ref{lem:bound_e2e_fidelity_BD} and Theorem \ref{thm:swap_and_correct_equivalence} to obtain the following result for when swapping of $N$ general two-qubit states.
\begin{corollary}
     Let $\mathcal{P}$ be a swap-and-correct protocol for repeater chains with $N$ initial states. Let $\rho_{\mathrm{in}} = \otimes_{k=1}^N \rho_k$ denote the $N$ initial two-qubit states, where $\rho_k$ has fidelity $F_k$. Let $\Lambda_{\mathcal{P}}$ be the channel induced by non-postselected swapping with $\mathcal{P}$. Then, the end-to-end fidelity after non-postselected swapping with $\mathcal{P}$ satisfies
    \begin{equation}
        \prod_{k=1}^N F_k \leq \bra{\Psi_{00}} \Lambda_{\mathcal{P}}(\rho_{\mathrm{in}}) \ket{\Psi_{00}} \leq \frac{1}{2}\prod_{k=1}^N(2F_k-1)+ \frac{1}{2}.
        \label{eqn:bound_swapping_N_general_states}
    \end{equation}
    \label{cor:bounds_swap_and_correct}
\end{corollary}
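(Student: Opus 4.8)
The plan is to deduce the corollary by combining Theorem \ref{thm:swap_and_correct_equivalence} with Lemma \ref{lem:bound_e2e_fidelity_BD}, glued together by the elementary observation that the fidelity $\bra{\Psi_{00}}\rho\ket{\Psi_{00}}$ of a two-qubit state $\rho$ is one of its Bell-diagonal elements and is hence invariant under Bell-diagonal twirling. Concretely, Lemma \ref{lem:pauli_twirling} gives $\bra{\Psi_{00}}\rho\ket{\Psi_{00}} = \bra{\Psi_{00}}\mathcal{B}(\rho)\ket{\Psi_{00}}$ for any $\rho$, so applying this to $\rho = \Lambda_{\mathcal{P}}(\rho_{\mathrm{in}})$ yields
\begin{equation}
    \bra{\Psi_{00}}\Lambda_{\mathcal{P}}(\rho_{\mathrm{in}})\ket{\Psi_{00}} = \bra{\Psi_{00}}\mathcal{B}\big(\Lambda_{\mathcal{P}}(\rho_{\mathrm{in}})\big)\ket{\Psi_{00}}.
\end{equation}

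Next I would apply Theorem \ref{thm:swap_and_correct_equivalence}, which states that $\mathcal{B}\big(\Lambda_{\mathcal{P}}(\rho_{\mathrm{in}})\big) = \Lambda_{\mathrm{seq}}\big(\mathcal{B}_{[N]}(\rho_{\mathrm{in}})\big)$, independently of $\mathcal{P}$. Hence the quantity of interest equals $\bra{\Psi_{00}}\Lambda_{\mathrm{seq}}\big(\otimes_{k=1}^N \mathcal{B}(\rho_k)\big)\ket{\Psi_{00}}$. Each factor $\mathcal{B}(\rho_k)$ is a Bell-diagonal state, and, again by Lemma \ref{lem:pauli_twirling}, its fidelity to $\ket{\Psi_{00}}$ is $\bra{\Psi_{00}}\mathcal{B}(\rho_k)\ket{\Psi_{00}} = F_k$. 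So the end-to-end fidelity produced by the swap-and-correct protocol $\mathcal{P}$ on $\rho_{\mathrm{in}}$ coincides exactly with the end-to-end fidelity produced by sequential swapping of a chain of $N$ Bell-diagonal states with fidelities $F_1,\dots,F_N$.

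Finally, Lemma \ref{lem:bound_e2e_fidelity_BD} bounds precisely this latter quantity between $\prod_{k=1}^N F_k$ and $\tfrac12\prod_{k=1}^N(2F_k-1)+\tfrac12$, and substituting the identification from the previous paragraph gives the claimed inequalities (\ref{eqn:bound_swapping_N_general_states}). I would also remark that for Bell-diagonal inputs the postselected and non-postselected sequential protocols give the same end-to-end state, by (\ref{eqn:unconditional_swap_BD}), so there is no ambiguity about which variant of Lemma \ref{lem:bound_e2e_fidelity_BD} is being invoked. There is no genuine obstacle in this proof: all of the content sits in the two results being combined, and the only subtlety is the (immediate) point that the fidelity is a twirling-invariant Bell-diagonal element, which is what lets Theorem \ref{thm:swap_and_correct_equivalence} be applied even though that theorem only controls the Bell-diagonal part of the end-to-end state.
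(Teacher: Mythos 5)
Your argument is correct and is exactly the route the paper intends: the paper states that the corollary follows by combining Lemma \ref{lem:bound_e2e_fidelity_BD} with Theorem \ref{thm:swap_and_correct_equivalence}, and the chain of identities you write (twirling-invariance of the fidelity, then $\mathcal{B}\circ\Lambda_{\mathcal{P}} = \Lambda_{\mathrm{seq}}\circ\mathcal{B}_{[N]}$, then the Bell-diagonal bounds) is the same one used in the appendix proof of Theorem \ref{thm:werner_approximation_accuracy}. No gaps.
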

We make the following remarks. 
The upper bound from (\ref{eqn:bound_swapping_N_general_states}) is tight: for example, this is saturated when the twirled initial states are of the form $\mathcal{B}(\rho_k) = (F_k,1-F_k,0,0)$. From (\ref{eqn:swap_bell_diag_outcome}), it can be seen that swapping two Bell-diagonal states of rank two (in the same subspace) outputs another Bell-diagonal state of rank two, in the identical subspace. Given that the rank-two ansatz is preserved, one may find a simple rule for the fidelity decay after a swap: the parameter $x \coloneqq 2F-1$ evolves multiplicatively under swapping as $x' = x_1 x_2$, which is analogous to the evolution of the Werner parameter as we saw in (\ref{eqn:werner_parameter}). We therefore see that the upper bound is tight. The lower bound is tight for $N=2$. For example, this is saturated when $\mathcal{B}(\rho_1) \equiv (F_1,1-F_1,0,0)$ and $\mathcal{B}(\rho_2) \equiv (F_2,0,1-F_2,0)$. We do not believe that the lower bound is tight for $N>2$, but we leave further investigation of this to future work.

\begin{theorem}[Accuracy of Werner approximation]
     Let $\mathcal{P}$ be a swap-and-correct protocol for repeater chains with $N$ initial states. Let $\rho_{\mathrm{in}} = \otimes_{k=1}^N \rho_k$ denote the $N$ initial two-qubit states. Let $\Lambda_{\mathcal{P}}$ be the map induced by non-postselected swapping with $\mathcal{P}$. 
    Let 
     \begin{equation}
         F' = \bra{\Psi_{00}} \Lambda_{\mathcal{P}}(\rho_{\mathrm{in}})\ket{\Psi_{00}}
     \end{equation}
     be the true end-to-end fidelity and $F'_{\mathcal{W}}$ be the end-to-end fidelity with the Werner approximation. Let $F_k$ be the fidelity of $\rho_k$. If $\epsilon_k = 1-F_k < \epsilon$ for all $k$, then
     \begin{equation}
         |F' - F'_{\mathcal{W}}| \leq {N \choose 2}\epsilon^2 + \mathcal{O}(N^3 \epsilon^3).
     \end{equation}
     In particular, if $N\epsilon \ll 1$, then
     \begin{equation}
         F' \approx F'_{\mathcal{W}}.
     \end{equation}
     \label{thm:werner_approximation_accuracy}
\end{theorem}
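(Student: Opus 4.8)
The plan is to sandwich both the true end-to-end fidelity $F'$ and the Werner-approximated fidelity $F'_{\mathcal{W}}$ between the \emph{same} pair of quantities, and then show that this pair differs by only $\mathcal{O}((1-F)^2 N^2)$. Concretely: by Theorem~\ref{thm:swap_and_correct_equivalence}, the Bell-diagonal elements of $\Lambda_{\mathcal{P}}(\rho_{\mathrm{in}})$ — in particular $F'$ — are independent of the swap-and-correct protocol $\mathcal{P}$ and equal to those obtained by sequentially swapping the Bell-diagonal approximations $\otimes_{k=1}^N \mathcal{B}(\rho_k)$, where $\mathcal{B}(\rho_k)$ has fidelity $F_k$. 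Hence Lemma~\ref{lem:bound_e2e_fidelity_BD} applies verbatim and gives $L \le F' \le U$ with $L \coloneqq \prod_{k=1}^N F_k$ and $U \coloneqq \frac12\prod_{k=1}^N(2F_k-1) + \frac12$. The same lemma also bounds $F'_{\mathcal{W}}$: the Werner approximations $\mathcal{W}(\rho_k)$ are Bell-diagonal states with fidelities $F_k$, with Bell vector $\big(F_k, \frac{1-F_k}{3}, \frac{1-F_k}{3}, \frac{1-F_k}{3}\big)$ as in (\ref{eqn:werner_state_bell_diag_coeffs}) (and their end-to-end fidelity is likewise protocol-independent, by Corollary~\ref{cor:swap_werner}), so $L \le F'_{\mathcal{W}} \le U$ as well. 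Therefore $|F' - F'_{\mathcal{W}}| \le U - L$, and the whole statement reduces to bounding $U-L$.

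The next step is to expand $U-L$ in the infidelities $\epsilon_k \coloneqq 1 - F_k$. Writing $e_m$ for the $m$-th elementary symmetric polynomial in $\epsilon_1,\dots,\epsilon_N$, we have $L = \prod_k(1-\epsilon_k) = \sum_{m\ge 0}(-1)^m e_m$ and $U = \frac12 + \frac12\prod_k(1-2\epsilon_k) = \frac12 + \frac12\sum_{m\ge 0}(-2)^m e_m = 1 + \sum_{m\ge 1}(-1)^m 2^{m-1} e_m$. Subtracting, the $m=0$ and $m=1$ contributions cancel (for $m=1$ the coefficient is $2^0-1=0$), leaving $U - L = \sum_{m\ge 2}(-1)^m(2^{m-1}-1)\,e_m$. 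The leading term is $e_2 = \sum_{j<k}\epsilon_j\epsilon_k \le {N \choose 2}\epsilon^2$ since $\epsilon_k < \epsilon$ for all $k$, and the remaining terms satisfy $\big|\sum_{m\ge 3}(-1)^m(2^{m-1}-1)e_m\big| \le \sum_{m\ge 3}2^{m-1}{N \choose m}\epsilon^m \le \frac12\sum_{m\ge 3}\frac{(2N\epsilon)^m}{m!} = \mathcal{O}(N^3\epsilon^3)$ whenever $N\epsilon$ is bounded (the tail of the exponential series is dominated by its first term). This gives $|F' - F'_{\mathcal{W}}| \le {N \choose 2}\epsilon^2 + \mathcal{O}(N^3\epsilon^3)$, and when $N\epsilon \ll 1$ both terms are negligible, so $F' \approx F'_{\mathcal{W}}$.

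The step I expect to require the most care — rather than genuine difficulty — is the first one: Lemma~\ref{lem:bound_e2e_fidelity_BD} is stated for sequential swapping only, so one must lean on the protocol-independence half of Theorem~\ref{thm:swap_and_correct_equivalence} to transfer its bounds to an arbitrary swap-and-correct protocol $\mathcal{P}$, and one must separately confirm that the Werner chain is covered (it is, via Corollary~\ref{cor:swap_werner} together with the same protocol-independence, since Werner states are a special case of Bell-diagonal states). The only other thing to handle cleanly is the combinatorial bookkeeping of the higher-order terms; bounding ${N \choose m} \le N^m/m!$ and summing the resulting exponential series makes the $\mathcal{O}(N^3\epsilon^3)$ claim transparent and sidesteps tracking the signs $(-1)^m(2^{m-1}-1)$ individually.
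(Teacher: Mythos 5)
Your proposal is correct and follows essentially the same route as the paper: both $F'$ and $F'_{\mathcal{W}}$ are reduced (via Theorem~\ref{thm:swap_and_correct_equivalence} and the Bell-diagonality of Werner states) to end-to-end fidelities of chains of Bell-diagonal states with fidelities $F_k$, both are then sandwiched by the bounds of Lemma~\ref{lem:bound_e2e_fidelity_BD}, and the gap $U-L$ is expanded in the infidelities, with the zeroth- and first-order terms cancelling and the leading term bounded by $\binom{N}{2}\epsilon^2$. Your bookkeeping of the higher-order tail via elementary symmetric polynomials is merely a more explicit version of the paper's second-order expansion with an $\mathcal{O}(N^3\epsilon^3)$ remainder.
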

\begin{proof}
    See Appendix \ref{app:swap_and_correct}.
\end{proof}
We note that if we do \textit{not} have $N\epsilon \ll 1$, we do not expect the Werner approximation to be accurate: for example, swapping $N$ identical Werner states with fidelity $F$ results in a Werner state with fidelity 
\begin{equation}
    F_{\mathrm{out}} = \frac{3}{4} \left(\frac{4F-1}{3} \right)^N + \frac{1}{4},
    \label{eqn:werner_fidelity_N_swaps}
\end{equation}
which with $F$ fixed goes to $\frac{1}{4}$ as $N\rightarrow \infty$. By contrast, for identical initial states the lower and upper bounds in (\ref{eqn:bound_e2e_fidelity_BD}) go to $0$ and $\frac{1}{2}$ respectively.

\section{postselected swapping}
\label{sec:advantages_full}
\subsection{Parameterisation of initial states}
In the previous section, we studied the accuracy of twirled approximations for non-postselected swapping. 
We now study the accuracy of approximations for postselected swapping, for repeater chains with $N=2$ initial states.
Specifically, we address the following question: \textit{how large (or small) can the fidelity of the outcome state become, postselected on a specific BSM outcome?} We will see that for certain initial states, after a postselected swap, the end-to-end state can exhibit a large variation in fidelity from what is obtained with a twirled approximation.

To illustrate the potential effect of postselecting on the BSM outcome, we consider an example that was introduced in \cite{Bose99}.
Consider swapping the initial states $\ketbra{\Psi_{\theta}}^{\otimes 2}$, where
\begin{equation}
    \ket{\Psi_{\theta}} = \cos(\theta)\ket{00} + \sin(\theta)\ket{11}
    \label{eqn:theta_state}
\end{equation}
with the standard swapping protocol (Definition \ref{def:entanglement_swapping_protocol}).
This state has fidelity to $\ket{\Psi_{00}}$ given by 
\begin{equation}
\big|\bra{\Psi_{00}}\ket{\Psi_{\theta}}\big|^2=\cos^2(\theta-\frac{\pi}{4}),
\end{equation}
which can take any value between $0$ and $1$, depending on the value of $\theta$. The possible outcomes for the end-to-end state after the swap are 
\begin{equation}
  \begin{cases}  
    \ket{\Psi_{00}},\;\;\;\;\;\; \text{with prob. } 2 \sin^2(\theta) \cos^2(\theta), \\
    \!\begin{aligned}[t]
     \frac{1}{C}\big(  \cos^2(\theta)&\ket{00}+   \sin^2(\theta)\ket{11}\big), \\  & \text{with prob. } C^2 = \cos^4 (\theta) + \sin^4(\theta).
    \end{aligned}
  \end{cases}
  \label{eqn:outcomes_example_pure_state}
\end{equation}


In the above, the first outcome occurs when obtaining a measurement outcome corresponding to the odd-parity Bell states $ \ket{\Psi_{1j}}$, and the second outcome occurs when obtaining a measurement outcome corresponding to even-parity Bell states, $ \ket{\Psi_{0j}}$. We see from the above that for any $\theta \notin \left\{0,\pi/2,\pi,3\pi/2 \right\}$, there is a non-zero probability of obtaining an outcome state that is maximally entangled. This is in contrast to calculating the outcome of a non-postselected swap: the Bell-diagonal approximation of each initial state is
\begin{equation}
    \mathcal{B}\left(\ketbra{\Psi_{\theta}}^{\otimes 2}\right) = F \ketbra{\Phi^+} + (1-F) \ketbra{\Phi^-}
    \label{eqn:Psi_theta_twirled}
\end{equation}
where $F=\cos^2(\theta-\frac{\pi}{4})$. By Corollary \ref{cor:swap_inv_twirling} and 
Lemma \ref{lem:swap_bell_diagonal}, the end-to-end fidelity of a non-postselected swap is $F^2 + (1-F)^2$. One may also check that this is the weighted average of the outcomes in (\ref{eqn:outcomes_example_pure_state}). 
We therefore see that for certain states, after a postselected entanglement swap, there is a non-zero probability of obtaining a significantly higher (or lower) fidelity outcome than the non-postselected case. Recalling Lemma \ref{lem:swap_bell_diagonal},
this variation may be attributed to off-Bell-diagonal terms in the initial states (in this example, $\ketbra{\Psi_{\theta}}$).

The variation in end-to-end fidelity is useful to characterise because some applications benefit from further information about the quality of the state. Postselecting on the Bell-state outcome can make certain tasks feasible: typically quantum applications are only feasible if on average, the level of noise in the resource state is below a certain threshold \cite{leichtle2021verifying,lipinska2020verifiable,scarani2009security}. Then, if a protocol is carried out by consuming many copies of the resource state, then by postselecting on certain swap outcomes a protocol can be made feasible, when for the non-postselected case it may not be. Furthermore, if the protocol is already feasible in the non-postselected case, knowledge of the full distribution of end-to-end fidelity can improve performance by making use of postselection. In Section \ref{sec:qkd} we provide an example of this for the case of quantum key distribution.


In the remainder of the section, we will find bounds on the variation in the end-to-end fidelity after a postselected swap. 
Specifically, for this problem we consider states of the form 
\begin{equation}
    \rho  = p \ketbra{\Psi_{00}} + (1-p)\sigma,
    \label{eqn:noisy_form}
\end{equation}
where $F = \bra{\Psi_{00}}\rho \ket{\Psi_{00}} $ is the fidelity,  $\sigma$ is a density matrix, and necessarily $p \leq F$. The last condition follows from the fact that $\rho$ and $\sigma$ are density matrices (positive semi-definite operators with unit trace).

The form (\ref{eqn:noisy_form}) is relevant for two reasons. Firstly, every state may be written in this form, which can be interpreted as an ensemble of the pure Bell state (probability $p$), and the state $\sigma$ (probability $1-p$). The state $\sigma$ can be interpreted as a noise component, which is not necessarily orthogonal to $\ket{\Psi_{00}}$. The parameters $p$ and $F$ may be computed efficiently given the state $\rho$ (see Section \ref{sec:analytical_bounds}), if not just directly deducible by inspection of the form of $\rho$. Then, understanding the limits of the end-to-end fidelity of states of the form (\ref{eqn:noisy_form}) has direct applications in a practical scenario. 
Secondly, fixing the parameter $p$ as well as $F$ is more restrictive than only fixing $F$, which makes it possible to find meaningful bounds. In order to formalise this idea, we firstly define the set of states of interest for fixed $p$ and $F$.
\begin{definition}
    Let $F\in [0,1]$ and $p\leq F$. We denote
    \begin{align}
        S_{p,F} \coloneqq \Big\{ \rho : \exists \; \sigma \text{ s.t. }  &\rho = p \ketbra{\Psi_{00}} + (1-p)  \sigma, \nonumber\\ &\bra{\Psi_{00}} \rho \ket{\Psi_{00}} = F, \nonumber \\  &\sigma \text{ density matrix} \Big\}.
    \end{align}
to be the set of all states of the form (\ref{eqn:noisy_form}).
\end{definition}
\begin{proposition}
    Let $F\in [0,1]$ and $ p_2 < p_1 \leq F $. Then, $S_{p_1,F} \subset S_{p_2, F}$, but $S_{p_2, F}\nsubseteq S_{p_1,F}$.
    \label{prop:p_1>p_2}
\end{proposition}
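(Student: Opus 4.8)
The plan is to prove the two set-inclusion claims separately, both by explicitly tracking the decomposition parameter.

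\medskip

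\textbf{Step 1: Showing $S_{p_1,F}\subseteq S_{p_2,F}$.} Take any $\rho\in S_{p_1,F}$, so that $\rho = p_1\ketbra{\Psi_{00}} + (1-p_1)\sigma_1$ for some density matrix $\sigma_1$, with $\bra{\Psi_{00}}\rho\ket{\Psi_{00}}=F$. The idea is to re-express $\rho$ as a valid decomposition with weight $p_2$ on $\ketbra{\Psi_{00}}$. Since $p_2<p_1$, I would write
\begin{equation}
    \rho = p_2\ketbra{\Psi_{00}} + (p_1-p_2)\ketbra{\Psi_{00}} + (1-p_1)\sigma_1,
\end{equation}
and then define
\begin{equation}
    \sigma_2 \coloneqq \frac{(p_1-p_2)\ketbra{\Psi_{00}} + (1-p_1)\sigma_1}{1-p_2}.
\end{equation}
The remaining work is routine verification: $\sigma_2$ is a convex combination of two density matrices (with nonnegative weights summing to $1$, since $(p_1-p_2)+(1-p_1)=1-p_2$), hence it is itself a density matrix; the fidelity $\bra{\Psi_{00}}\rho\ket{\Psi_{00}}=F$ is unchanged because we have only rewritten $\rho$; and $\rho = p_2\ketbra{\Psi_{00}} + (1-p_2)\sigma_2$ by construction. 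Therefore $\rho\in S_{p_2,F}$.

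\medskip

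\textbf{Step 2: Showing the inclusion is strict, i.e.\ $S_{p_2,F}\nsubseteq S_{p_1,F}$.} Here I need to exhibit a state $\omega\in S_{p_2,F}$ that is \emph{not} in $S_{p_1,F}$. The natural obstruction is the following: if $\rho\in S_{p,F}$ with $\rho = p\ketbra{\Psi_{00}}+(1-p)\sigma$, then $\rho - p\ketbra{\Psi_{00}} = (1-p)\sigma \succeq 0$, so membership in $S_{p,F}$ requires $p$ to be small enough that subtracting $p\ketbra{\Psi_{00}}$ keeps the operator positive semi-definite; the maximal such $p$ is $p_{\max}(\rho)=\lambda_{\min}$ of $\rho$ restricted appropriately — concretely, $p_{\max}(\rho)=\min_{\ket{\phi}\perp\ket{\Psi_{00}}}$... actually more simply, $\rho - p\ketbra{\Psi_{00}}\succeq 0$ iff $p\le \big(\bra{\Psi_{00}}\rho^{-1}\ket{\Psi_{00}}\big)^{-1}$ when $\rho$ is full rank, or by a limiting/direct argument in general. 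The cleanest route is just to pick a concrete example. I would take a Werner-type or rank-deficient state with fidelity $F$ whose largest extractable Bell weight is exactly $p_2$ (or any value in $[p_2,p_1)$): for instance, choose $\omega$ so that $\omega = p_2\ketbra{\Psi_{00}} + (1-p_2)\sigma$ with $\sigma$ chosen to have zero overlap structure forcing $p_{\max}(\omega)=p_2$ — e.g.\ $\sigma$ supported so that $\omega$ has an eigenvector orthogonal to $\ket{\Psi_{00}}$ with eigenvalue $0$, or simply the Werner state of fidelity $F$ itself when $F$ is in the appropriate range. Then $\omega\notin S_{p_1,F}$ because $\omega - p_1\ketbra{\Psi_{00}}$ fails to be positive semi-definite.

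\medskip

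\textbf{Expected main obstacle.} Step 1 is entirely mechanical. The only real content is Step 2, and the subtlety there is producing a clean, self-contained example valid for the full parameter range $p_2<p_1\le F$ (in particular handling edge cases like $F=1$, where $S_{p,1}=\{\ketbra{\Psi_{00}}\}$ for all $p$ and the strictness claim would actually fail — so presumably the intended statement implicitly restricts to $F<1$, or the example must be chosen to respect this). I would make sure to state the positivity characterization $\rho\in S_{p,F} \Rightarrow \rho-p\ketbra{\Psi_{00}}\succeq 0$ as the key lemma, then choose the witness state so that this fails at weight $p_1$. A convenient explicit witness is $\omega = (1-F)\ketbra{e} + F\ketbra{\Psi_{00}}$-type constructions replaced by mixing in a noise term $\sigma$ with $\bra{\Psi_{00}}\sigma\ket{\Psi_{00}}$ tuned so the fidelity lands at $F$ while the minimal eigenvalue in the $\ket{\Psi_{00}}$ direction pins $p_{\max}=p_2$; I would present this computation but not belabor it.
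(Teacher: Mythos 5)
Your Step 1 is correct and is exactly the paper's argument (the same regrouping of the excess weight $p_1-p_2$ into the noise term). The genuine gap is in Step 2, which you never actually complete, and the two concrete witnesses you float do not work. First, a zero eigenvalue with eigenvector \emph{orthogonal} to $\ket{\Psi_{00}}$ is not an obstruction to membership: the rank-two Bell-diagonal state $F\ketbra{\Psi_{00}}+(1-F)\ketbra{\Psi_{11}}$ has two such kernel vectors and yet satisfies $\rho-p\ketbra{\Psi_{00}}\geq 0$ for every $p\leq F$, so it lies in all of the sets in question. Second, the Werner state of fidelity $F$ has Bell eigenvalues $\left(F,\tfrac{1-F}{3},\tfrac{1-F}{3},\tfrac{1-F}{3}\right)$, so its maximal extractable Bell weight is exactly $F$; hence it belongs to $S_{p_1,F}$ whenever $p_1\leq F$, and there is no ``appropriate range'' of $F$ in which it can serve as a witness. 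What actually blocks membership at weight $p_1$ is a kernel (or small-eigenvalue) direction with \emph{nonzero} overlap with $\ket{\Psi_{00}}$ — your criterion $p\leq\big(\bra{\Psi_{00}}\rho^{-1}\ket{\Psi_{00}}\big)^{-1}$ points in the right direction, but you never instantiate it, so as written the strictness claim is unproved.

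To close the gap you need coherent noise rather than orthogonal or Bell-diagonal noise: for $F<1$ take $\omega=p_2\ketbra{\Psi_{00}}+(1-p_2)\ketbra{\psi}$ with $\ket{\psi}$ pure, $|\bra{\Psi_{00}}\ket{\psi}|^2=(F-p_2)/(1-p_2)$ and $\ket{\psi}\not\propto\ket{\Psi_{00}}$ (this is the state $\rho_{\mathrm{opt}}$ of (\ref{eqn:optimal_state_main_text}) with $p=p_2$). Then $\omega\in S_{p_2,F}$, while for any $p_1>p_2$ the unit vector $\ket{\chi}\in\mathrm{span}\{\ket{\psi},\ket{\Psi_{00}}\}$ orthogonal to $\ket{\psi}$ gives $\bra{\chi}\big(\omega-p_1\ketbra{\Psi_{00}}\big)\ket{\chi}=(p_2-p_1)\,|\bra{\Psi_{00}}\ket{\chi}|^2<0$, so $\omega\notin S_{p_1,F}$. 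For comparison, the paper's own proof uses the witness $p_2\ketbra{\Psi_{00}}+(1-p_2)\ketbra{\Psi_{11}}$ and derives a contradiction by computing its overlap with $\ket{\Psi_{00}}$ under an assumed weight-$p_1$ decomposition; note that state has fidelity $p_2$ rather than $F$, so your insistence on keeping the fidelity fixed at $F$ — and your correct observation that strictness fails at $F=1$, where $S_{p,1}=\{\ketbra{\Psi_{00}}\}$ — is actually more faithful to the statement. But as submitted, your Step 2 is a sketch whose named examples are wrong, so the proof is incomplete.
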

\begin{proof}
    See Appendix \ref{app:conditional_swapping}.
\end{proof}
From Proposition \ref{prop:p_1>p_2}, we see that increasing $p$ (keeping $F$ fixed) provides a more restrictive form for the state (\ref{eqn:noisy_form}). For example, the set $S_{0,F}$ contains all valid two-qubit states with fidelity $F$, and the set $S_{F,F}$ contains only the states that have $\sigma$ orthogonal to $\ketbra{\Psi_{00}}$. The set $S_{p,1}$ has only one element, which is $\ketbra{\Psi_{00}}$.

In the following, we will study the limits of the end-to-end fidelity for states $\rho\in S_{p,F}$. 
We firstly introduce simplified notation for the end-to-end fidelity, which will be helpful in the following sections.
\begin{definition}
\label{def:post_swap_fid_and_prob}
    Consider performing a postselected swap on a pair of two-qubit states $\rho_{1}\otimes \rho_2$. If the BSM outcome $ij$ is obtained, we denote the end-to-end fidelity by $F_{ij}'(\rho_1 \otimes \rho_2)$, and the probability of this BSM outcome as $p_{ij}'(\rho_1 \otimes \rho_2)$. These quantities are written explicitly as
\begin{align}
   F'_{ij} &=  \frac{1}{p'_{ij}}\Tr \left[ \ketbra{\Psi_{ij}}_{B_1 B_2} \ketbra{\Psi_{ij}}_{A_1 A_2} \rho_1 \otimes \rho_2 \right]\label{eqn:post_swap_fid_def} \\ p'_{ij} &=\Tr \left[ \ketbra{\Psi_{ij}}_{A_1 A_2} \rho_1 \otimes \rho_2 \right],\label{eqn:swap_prob_def}
\end{align}
where we refer to Figure \ref{fig:teleportation_and_swap}b for a depiction of the qubit registers $B_1$, $A_1$, $A_2$ and $B_2$.
\end{definition}
We now wish to find bounds on the end-to-end fidelity $F'_{ij}$, for initial states $\rho_k \in S_{p,F}$. For clarity, we take the initial states to have the same parameters $p$ and $F$. We note that the results from this section also hold for the more general case ($\rho_k \in S_{p_k,F_k}$), for which the proofs are carried out in the Appendix.

We firstly show that it is enough to consider just a single BSM outcome.
\begin{proposition}
    Let $F\in [0,1]$ and $p\leq F$, and 
    \begin{align*}
        F'_{ij,\max} &\coloneqq \max_{\rho_k} \left\{ F'_{ij}(\rho_1 \otimes \rho_2) \;\;\mathrm{ s.t. }\; \rho_k \in S_{p,F}\right\} \\ F'_{ij,\min} &\coloneqq \min_{\rho_k} \left\{ F'_{ij}(\rho_1\otimes \rho_2) \;\; \mathrm{ s.t. } \; \rho_k \in S_{p,F}\right\}
    \end{align*}
    where $F'_{ij}$ is the postselected end-to-end fidelity (Definition \ref{def:post_swap_fid_and_prob}).
    Then, the above quantities are independent of $i$ and $j$,
    or alternatively
    \begin{align}
        F'_{ij,\max} &=  F'_{00,\max} \equiv  F'_{\max}(p,F) \label{eqn:def_F'max} \\
        F'_{ij,\min} &=  F'_{00,\min} \equiv F'_{\min}(p,F) \label{eqn:def_F'min}
    \end{align}
    \label{prop:mmt_outcome_independent}
    for all $i$, $j$.
\end{proposition}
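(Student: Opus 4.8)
The plan is to exploit a local-unitary symmetry that identifies the optimisation of $F'_{ij}$ over $S_{p,F}\times S_{p,F}$ with the same optimisation for the outcome $00$. Set $U\coloneqq X^iZ^j$. By the definition (\ref{eqn:bell_basis}) of the Bell basis, $\ket{\Psi_{ij}}_{A_1A_2}=(I_{A_1}\otimes U_{A_2})\ket{\Psi_{00}}_{A_1A_2}$ and $\ket{\Psi_{ij}}_{B_1B_2}=(I_{B_1}\otimes U_{B_2})\ket{\Psi_{00}}_{B_1B_2}$; crucially, in the expressions (\ref{eqn:post_swap_fid_def})--(\ref{eqn:swap_prob_def}) the nontrivial Pauli always lands on a register of $\rho_2$ (namely $A_2$ or $B_2$) and never on $\rho_1$. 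So the first step would be to absorb all of these Paulis into $\rho_2$.

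Concretely, I would write $\ketbra{\Psi_{ij}}_{A_1A_2}=U_{A_2}\ketbra{\Psi_{00}}_{A_1A_2}U_{A_2}^{\dagger}$ (and similarly on $B_1B_2$), substitute into (\ref{eqn:post_swap_fid_def})--(\ref{eqn:swap_prob_def}), and then use cyclicity of the trace together with the fact that $U_{A_2}$ commutes with $\ketbra{\Psi_{00}}_{B_1B_2}$ and $U_{B_2}$ commutes with $\ketbra{\Psi_{00}}_{A_1A_2}$ (disjoint registers) to collect all the Pauli factors onto $\rho_2$. Setting $W\coloneqq(X^iZ^j)_{A_2}\otimes(X^iZ^j)_{B_2}$ and $\tilde\rho_2\coloneqq W\rho_2 W^{\dagger}$, this should give
\[
p'_{ij}(\rho_1\otimes\rho_2)=p'_{00}(\rho_1\otimes\tilde\rho_2),\qquad F'_{ij}(\rho_1\otimes\rho_2)=F'_{00}(\rho_1\otimes\tilde\rho_2),
\]
where the probability identity comes from the same collection of Paulis, the $B_2$-factors cancelling under the partial trace over $B_2$ (on which $\ketbra{\Psi_{00}}_{A_1A_2}$ does not act).

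The second step is to check that this symmetry stays inside $S_{p,F}$. The key algebraic fact is $(X^iZ^j\otimes X^iZ^j)\ket{\Psi_{00}}=\ket{\Psi_{00}}$ for all $i,j\in\{0,1\}$ --- equivalently, $UU^{T}=X^iZ^jZ^jX^i=I$ together with the transpose identity $(M\otimes I)\ket{\Psi_{00}}=(I\otimes M^{T})\ket{\Psi_{00}}$; concretely, $X\otimes X$, $Z\otimes Z$ and $XZ\otimes XZ=-\,Y\otimes Y$ all stabilise $\ket{\Psi_{00}}$. Since $W$ is unitary, $W^{\dagger}$ also fixes $\ket{\Psi_{00}}$. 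Hence, writing $\rho_2=p\ketbra{\Psi_{00}}+(1-p)\sigma_2$ as in (\ref{eqn:noisy_form}), we get $\tilde\rho_2=p\ketbra{\Psi_{00}}+(1-p)\,W\sigma_2 W^{\dagger}$ with $W\sigma_2 W^{\dagger}$ a density matrix and $\bra{\Psi_{00}}\tilde\rho_2\ket{\Psi_{00}}=\bra{\Psi_{00}}\rho_2\ket{\Psi_{00}}=F$, so $\tilde\rho_2\in S_{p,F}$; the same holds for $W^{\dagger}\rho_2 W$. As $\rho_2\mapsto W\rho_2 W^{\dagger}$ and $\rho_2\mapsto W^{\dagger}\rho_2 W$ are mutually inverse, the former is a bijection of $S_{p,F}$.

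Putting the two steps together finishes the argument: fixing any $\rho_1\in S_{p,F}$, as $\rho_2$ ranges over $S_{p,F}$ so does $\tilde\rho_2$, and therefore $\max_{\rho_1,\rho_2}F'_{ij}(\rho_1\otimes\rho_2)=\max_{\rho_1,\rho_2}F'_{00}(\rho_1\otimes\tilde\rho_2)=F'_{00,\max}$, giving (\ref{eqn:def_F'max}); identically for the minimum, giving (\ref{eqn:def_F'min}). The same reasoning works verbatim in the general case $\rho_k\in S_{p_k,F_k}$, since $W$ touches only $\rho_2$'s registers and the bijection preserves $p_2$ and $F_2$ while leaving $\rho_1$ untouched. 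I expect the only fiddly part to be the register bookkeeping of the first step --- confirming that all four Pauli factors genuinely act on $A_2$ or $B_2$, so they can be pushed onto $\rho_2$ without affecting $\rho_1$ --- together with the elementary sign check that $(X^iZ^j)^{\otimes 2}\ket{\Psi_{00}}=\ket{\Psi_{00}}$ in the $(i,j)=(1,1)$ case.
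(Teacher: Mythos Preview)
Your proposal is correct and follows essentially the same route as the paper: conjugate $\rho_2$ by the correlated Pauli $(X^iZ^j)\otimes(X^iZ^j)$ on registers $A_2B_2$, observe that this fixes $\ket{\Psi_{00}}$ and hence is a bijection of $S_{p,F}$, and use cyclicity to convert $F'_{ij}(\rho_1\otimes\rho_2)$ into $F'_{00}$ on the rotated pair. The paper's proof (of the general version, Proposition~\ref{prop:mmt_outcome_independent_non_id}) does exactly this, with $\omega_2=(Z^jX^i\otimes Z^jX^i)\rho_2(X^iZ^j\otimes X^iZ^j)$; your only deviation is the direction of conjugation ($W\rho_2W^\dagger$ versus $W^\dagger\rho_2W$), which, as you already note, is immaterial since both are bijections of $S_{p,F}$.
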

\begin{proof} See Appendix \ref{app:conditional_swapping}. We use the idea that one may always rotate $\rho_2$ by a suitable Pauli to find a $\omega_2 \in S_{p,F}$ such that $F'_{ij}(\rho_1\otimes \rho_2) = F'_{00}(\rho_1\otimes \omega_2)$. 
\end{proof}
\subsection{Analytical bounds}
\label{sec:analytical_bounds}
Here, we use analytical methods to find an exact expression for $F'_{\max}(p,F)$ and a lower bound for $F'_{\min}(p,F)$. In Section \ref{sec:sdp}, we find tighter lower bounds on $F'_{\min}(p,F)$ using semidefinite programming (SDP).

To study $F'_{\max}(p,F)$ and $F'_{\min}(p,F)$ we firstly establish a simplified formula for the end-to-end fidelity $F'_{00}$. 
\begin{lemma}
Consider performing a postselected swap on a pair of two-qubit states $\rho_1\otimes \rho_2$ such that $\rho_k \in S_{p,F}$ and 
\begin{equation}
    \rho_k = p \ketbra{\Psi_{00}} + (1-p) \sigma_k,
\end{equation}
where $\sigma_k$ is a density matrix, for $k=1,2$. Let $F'_{ij}$ and $p'_{ij}$ denote the postselected end-to-end fidelity and probability (Definition \ref{def:post_swap_fid_and_prob}). Then, 
\begin{equation}
    F'_{ij}(\rho_1\! \otimes \! \rho_2) = \frac{ 2pF - p^2 + 4(1-p)^2 \tilde{p}_{ij}'\tilde{F}_{ij} '}{2p - p^2 + 4(1-p)^2 \tilde{p}'_{ij}},
    \label{eqn:formula_post_swap_fidelity_noisy_states}
\end{equation}
and
\begin{equation}
    p'_{ij}(\rho_1\! \otimes \! \rho_2) = \frac{p}{2} - \frac{p^2}{4} +  (1-p)^2  \tilde{p}'_{ij}
\label{eqn:formula_swap_prob_noisy_states}
\end{equation}
where $\tilde{F}'_{ij} \coloneqq F_{ij}'(\sigma_1\! \otimes \! \sigma_2)$ and $ \tilde{p}'_{ij} \coloneqq p_{ij}'(\sigma_1\! \otimes \! \sigma_2)$ are the swap statistics of the noisy components (Definition \ref{def:post_swap_fid_and_prob}).
\label{lem:formulae_swapping_noisy_states_id}
\end{lemma}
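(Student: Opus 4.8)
The plan is to expand each initial state $\rho_k = p\ketbra{\Psi_{00}} + (1-p)\sigma_k$ inside the expressions \eqref{eqn:post_swap_fid_def} and \eqref{eqn:swap_prob_def}, so that $\rho_1\otimes\rho_2$ becomes a sum of four tensor-product terms: $p^2\,\ketbra{\Psi_{00}}^{\otimes 2}$, two cross terms of the form $p(1-p)\,\ketbra{\Psi_{00}}\otimes\sigma$ (and its mirror), and $(1-p)^2\,\sigma_1\otimes\sigma_2$. I would substitute this into the trace formula for $p'_{ij}$ first, since it is the simpler of the two and its value reappears in the denominator of $F'_{ij}$.

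First I would handle the $p^2$ term. Here the relevant trace is $\Tr[\ketbra{\Psi_{ij}}_{A_1A_2}\,\ketbra{\Psi_{00}}_{A_1B_1}\otimes\ketbra{\Psi_{00}}_{A_2B_2}]$, which is a fixed numerical constant; a short computation (or invoking Lemma~\ref{lem:swap_bell_diagonal} with both states equal to $\ketbra{\Psi_{00}}$, for which the swap probability is $1/4$) gives $1/4$. So the $p^2$ contribution to $p'_{ij}$ is $p^2/4$. Next, the two cross terms: for $p(1-p)\,\ketbra{\Psi_{00}}_{A_1B_1}\otimes\sigma_{A_2B_2}$, the projector $\ketbra{\Psi_{ij}}_{A_1A_2}$ acting together with $\ketbra{\Psi_{00}}$ on $A_1$ effectively teleports, and since $\ketbra{\Psi_{00}}$ is maximally entangled the reduced state on $A_2$ is maximally mixed, so the trace reduces to $\Tr[\tfrac14 I_{A_2}\,\sigma_{A_2\cdot}] = 1/4$ regardless of $\sigma$; this is exactly the statement that swapping against a perfect Bell pair is outcome-uniform. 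Hence each cross term contributes $p(1-p)/4$, and the two together give $p(1-p)/2$. Finally the $(1-p)^2$ term is by definition $(1-p)^2\,\tilde p'_{ij}$ with $\tilde p'_{ij} = p'_{ij}(\sigma_1\otimes\sigma_2)$. Summing, $p'_{ij} = p^2/4 + p(1-p)/2 + (1-p)^2\tilde p'_{ij} = p/2 - p^2/4 + (1-p)^2\tilde p'_{ij}$, which is \eqref{eqn:formula_swap_prob_noisy_states}.

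For $F'_{ij}$, I would apply the identical decomposition to the numerator $\Tr[\ketbra{\Psi_{ij}}_{B_1B_2}\ketbra{\Psi_{ij}}_{A_1A_2}\,\rho_1\otimes\rho_2]$ (this is $p'_{ij}F'_{ij}$). The $p^2$ term now involves both the $A$-projector and the $B$-projector sandwiching $\ketbra{\Psi_{00}}^{\otimes2}$; since swapping two copies of $\ket{\Psi_{00}}$ yields $\ket{\Psi_{00}}$ with certainty and the outcome-conditional probability is $1/4$, this term contributes $p^2/4 \cdot 1 = p^2/4$. For each cross term $p(1-p)\,\ketbra{\Psi_{00}}\otimes\sigma$: the $A_1$-leg of $\ketbra{\Psi_{ij}}_{A_1A_2}$ contracted with $\ketbra{\Psi_{00}}_{A_1B_1}$ teleports $B_1$ into $A_2$ (up to the Pauli $X^iZ^j$ already accounted for in the definition), after which the overlap with $\ketbra{\Psi_{00}}_{B_1B_2}$ picks out the $\ket{\Psi_{00}}$-fidelity of $\sigma$ on the remaining registers; working this through gives a contribution proportional to $pF$ — specifically each cross term contributes $p(1-p)\cdot\frac{F-p}{1-p}\cdot\frac14$... and here I would be careful, because the cleanest route is to note $\bra{\Psi_{00}}\rho_k\ket{\Psi_{00}} = p + (1-p)\bra{\Psi_{00}}\sigma_k\ket{\Psi_{00}} = F$, so $\bra{\Psi_{00}}\sigma_k\ket{\Psi_{00}} = (F-p)/(1-p)$, and the two cross terms together contribute $2\cdot p(1-p)\cdot\frac14\cdot\frac{F-p}{1-p} = \frac{p(F-p)}{2} = \frac{2pF-p^2}{4} - \frac{pF}{2} + \dots$; I would simply collect terms so that the numerator becomes $\frac14(2pF - p^2) + (1-p)^2\tilde p'_{ij}\tilde F'_{ij}$, matching $p'_{ij}F'_{ij}$ against the claimed \eqref{eqn:formula_post_swap_fidelity_noisy_states} after multiplying through by $4$. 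Dividing the numerator by $p'_{ij}$ then gives exactly \eqref{eqn:formula_post_swap_fidelity_noisy_states}.

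The main obstacle is bookkeeping in the cross terms: correctly tracking how the Bell-state projectors on the $A$-registers and $B$-registers act when one of the two resource factors is the ideal state $\ketbra{\Psi_{00}}$, and in particular being precise about the Pauli corrections $X^iZ^j$ folded into the definitions \eqref{eqn:post_swap_fid_def}–\eqref{eqn:swap_prob_def}. The clean way to dispatch this — and what I would actually write — is to prove a small helper identity first: for any two-qubit state $\tau$, $\Tr[\ketbra{\Psi_{ij}}_{B_1B_2}\ketbra{\Psi_{ij}}_{A_1A_2}\,\ketbra{\Psi_{00}}_{A_1B_1}\otimes\tau_{A_2B_2}] = \frac14\bra{\Psi_{00}}\tau\ket{\Psi_{00}}$ and the analogous statement with the two factors swapped, plus $\Tr[\ketbra{\Psi_{ij}}_{A_1A_2}\,\ketbra{\Psi_{00}}_{A_1B_1}\otimes\tau_{A_2B_2}] = \frac14$. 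These follow from the fact that contracting a maximally entangled projector on one leg implements teleportation; once they are in hand, the decomposition of $\rho_1\otimes\rho_2$ into its four pieces gives \eqref{eqn:formula_post_swap_fidelity_noisy_states} and \eqref{eqn:formula_swap_prob_noisy_states} immediately, and the $(i,j)$-independence of the $p^2$ and cross terms (already visible here) re-derives Proposition~\ref{prop:mmt_outcome_independent} as a sanity check.
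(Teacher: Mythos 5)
Your proposal is correct and follows essentially the same route as the paper: expand $\rho_1\otimes\rho_2$ into the four product terms, observe that the ideal--ideal and ideal--noise cross terms project with outcome-independent weight $1/4$ (and contribute $\tfrac14\bra{\Psi_{00}}\sigma_k\ket{\Psi_{00}}$ to the fidelity numerator via the teleportation identity), and then eliminate $\bra{\Psi_{00}}\sigma_k\ket{\Psi_{00}}=(F-p)/(1-p)$ using the fidelity constraint; the paper does exactly this, just for the slightly more general $\rho_k\in S_{p_k,F_k}$ case before specialising. The only blemish is the garbled intermediate line ``$\tfrac{p(F-p)}{2}=\tfrac{2pF-p^2}{4}-\tfrac{pF}{2}+\dots$'', but your final collected numerator $\tfrac14(2pF-p^2)+(1-p)^2\tilde p'_{ij}\tilde F'_{ij}$ is right, so the argument stands.
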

\begin{proof}
    See Appendix \ref{app:conditional_swapping}.
\end{proof}
From Lemma \ref{lem:formulae_swapping_noisy_states_id}, we see that the swap statistics of states in $S_{p,F}$ may be understood only in terms of the corresponding swap statistics of the noisy components $\sigma_k$. Lemma \ref{lem:formulae_swapping_noisy_states_id} is used in the proof of Theorem \ref{thm:tight_upper_bound}, where we find an exact expression for $F'_{\max}(p,F)$.
\begin{theorem}
    Consider performing a postselected swap on a pair of two-qubit states $\rho_1 \otimes \rho_2$. 
    Let $F_{\max}'(p,F)$ be the maximum achievable end-to-end fidelity for $\rho_k \in S_{p,F}$ with $k=1,2$, as defined in (\ref{eqn:def_F'max}). Then, 
    \begin{equation}
        F'_{\max}(p,F) =1 - 2 p (1-F).
        \label{eqn:tight_upper_bound_expression_p_F}
    \end{equation}
    In particular, the initial states $\rho_{\mathrm{opt}}^{\otimes 2}$ satisfy $F'_{00}(\rho_{\mathrm{opt}}^{\otimes 2}) = F'_{\max}(p,F)$, where
    \begin{align}
    \rho_{\mathrm{opt}} = p\ketbra{\Psi_{00}} + (1-p)\ketbra{\psi},
    \label{eqn:optimal_state_main_text}
\end{align}
and
\begin{equation}
    \ket{\psi} = \sqrt{\tilde{F}} \ket{\Psi_{00}} +  \sqrt{1-\tilde{F}} \ket{\Psi_{11}}
    \label{eqn:psi_best_state_main_text}
\end{equation}
with $\tilde{F} = (F-p)/(1-p)$
\label{thm:tight_upper_bound}
\end{theorem}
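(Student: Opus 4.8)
The plan is to reduce the optimization over $\rho_k\in S_{p,F}$ to an optimization over the noise components $\sigma_k$ using Lemma~\ref{lem:formulae_swapping_noisy_states_id}, and then to bound the swap statistics of the $\sigma_k$. By Proposition~\ref{prop:mmt_outcome_independent} it suffices to work with the $00$ outcome, so set $G := F'_{00}(\rho_1\otimes\rho_2)$. From Eq.~(\ref{eqn:formula_post_swap_fidelity_noisy_states}), with $a := 2p-p^2$, $b := 4(1-p)^2$, $\tilde p := \tilde p'_{00}$, $\tilde F := \tilde F'_{00}$, we have $G = \frac{2pF - p^2 + b\,\tilde p\tilde F}{a + b\,\tilde p}$. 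Notice that $2pF - p^2 = (2p-p^2) \cdot \frac{2F-p}{2-p}\cdot\frac{2-p}{\text{?}}$ is not quite clean, so instead I would write $G$ directly as a weighted average: $G = (1-t)\,\frac{2pF-p^2}{a} + t\,\tilde F$ where $t := \frac{b\tilde p}{a+b\tilde p}\in[0,1)$. Here $\frac{2pF-p^2}{2p-p^2} = \frac{2F-p}{2-p}$ is the fidelity obtained in the "two Bell-state" branch. Since $G$ is a convex combination of this fixed number and $\tilde F\le 1$, and since $t$ is increasing in $\tilde p$, maximizing $G$ amounts to pushing both $\tilde F$ and $\tilde p$ as large as possible — ideally $\tilde F = 1$ and $\tilde p$ maximal — subject to $\sigma_k$ being a valid density matrix with the constraint linking $p$ and $F$ (namely $\bra{\Psi_{00}}\sigma_k\ket{\Psi_{00}} = \tilde F_{\text{fid}} := (F-p)/(1-p)$).

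Next I would determine the largest possible value of $\tilde p'_{00}(\sigma_1\otimes\sigma_2) = \Tr[\ketbra{\Psi_{00}}_{A_1A_2}\,\sigma_1\otimes\sigma_2]$ over density matrices $\sigma_k$ with fixed fidelity $\bra{\Psi_{00}}\sigma_k\ket{\Psi_{00}} = (F-p)/(1-p) =: \tilde F_{\mathrm{fid}}$, and simultaneously check whether $\tilde F'_{00} = 1$ is attainable at that maximizer. The swap probability $\tilde p'_{00}$ is a Bell-measurement overlap; by Cauchy–Schwarz / direct computation it is maximized, for states with a prescribed $\ket{\Psi_{00}}$-fidelity, by rank-one states of the form $\ket{\psi} = \sqrt{\tilde F_{\mathrm{fid}}}\ket{\Psi_{00}} + \sqrt{1-\tilde F_{\mathrm{fid}}}\ket{\Psi_{ab}}$ for an appropriate second Bell vector — and the choice $\ket{\Psi_{11}}$ is exactly the one that makes the post-swap state of $\sigma_1\otimes\sigma_2$ land on $\ket{\Psi_{00}}$, i.e. $\tilde F'_{00} = 1$. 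This is the content of Eqs.~(\ref{eqn:optimal_state_main_text})–(\ref{eqn:psi_best_state_main_text}). So I would (a) plug $\ket{\psi}$ into the swap formulae to compute $\tilde p'_{00}$ and confirm $\tilde F'_{00}=1$, then (b) substitute back into $G$ and simplify: with $\tilde F'_{00}=1$ the numerator becomes $2pF-p^2 + b\tilde p'_{00}$ and the denominator $a+b\tilde p'_{00}$, and the algebra should collapse to $G = 1 - 2p(1-F)$.

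For the matching upper bound — that no $\rho_k\in S_{p,F}$ can do better — I would argue directly from the convex-combination expression $G = (1-t)\frac{2F-p}{2-p} + t\,\tilde F'_{00}$. Since $t<1$ and $\tilde F'_{00}\le 1$, we get $G < \frac{2F-p}{2-p}\cdot(1-t) + t$; this is increasing in $t$, so $G$ is bounded by its value at the largest feasible $t$, i.e. the largest feasible $\tilde p'_{00}$, with $\tilde F'_{00}=1$. Hence it remains to prove $\tilde p'_{00}(\sigma_1\otimes\sigma_2)$ is maximized (over density matrices with fixed $\ket{\Psi_{00}}$-fidelity) by the pure product state above. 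I would do this by noting $\tilde p'_{00} = \langle\Psi_{00}|_{A_1A_2}\sigma_1\otimes\sigma_2|\Psi_{00}\rangle_{A_1A_2}$ and, expanding $\ket{\Psi_{00}}_{A_1A_2}$ in a Schmidt-like basis across the $A_1|A_2$ cut, bounding this bilinear form by the product of the largest achievable overlaps of each $\sigma_k$ with a maximally entangled state consistent with the fidelity constraint; one then verifies the bound is saturated precisely by $\ket{\psi}^{\otimes 2}$. Alternatively, and more cleanly, I would invoke Lemma~\ref{lem:swap_bell_diagonal}-style identities together with the observation that $F'_{\max}$ is a quasi-convex function of $(\sigma_1,\sigma_2)$, so the maximum over the convex set of density matrices is attained at extreme points, i.e. pure states $\ket{\phi_k}$; then a two-real-parameter optimization over $\ket{\phi_k} = \sqrt{\tilde F_{\mathrm{fid}}}\ket{\Psi_{00}} + e^{i\chi}\sqrt{1-\tilde F_{\mathrm{fid}}}\ket{\perp}$ finishes it.

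The main obstacle I anticipate is the upper-bound direction: showing rigorously that pushing $\tilde F'_{00}$ to $1$ is compatible with simultaneously maximizing $\tilde p'_{00}$, rather than there being a trade-off between the two that makes some other $\sigma_k$ better. Concretely, one must rule out that a state with $\tilde F'_{00} < 1$ but much larger $\tilde p'_{00}$ beats $\ket{\psi}^{\otimes2}$. I expect this is handled by showing the relevant constrained maximum of the bilinear form $\langle\Psi_{00}|\sigma_1\otimes\sigma_2|\Psi_{00}\rangle$ is jointly achieved with $\tilde F'_{00}=1$ — essentially because the optimal $\sigma_k$ are rank-one and supported on the two-dimensional span $\{\ket{\Psi_{00}},\ket{\Psi_{11}}\}$, on which the Bell measurement and the target projector are simultaneously "aligned." Verifying this alignment, and that no off-diagonal phase or third Bell component helps, is the delicate step; everything else is substitution into Eqs.~(\ref{eqn:formula_post_swap_fidelity_noisy_states})–(\ref{eqn:formula_swap_prob_noisy_states}) and routine simplification.
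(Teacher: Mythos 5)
Your achievability part is fine and matches the paper: plugging $\sigma_k=\ketbra{\psi}$ with $\ket{\psi}=\sqrt{\tilde F}\ket{\Psi_{00}}+\sqrt{1-\tilde F}\ket{\Psi_{11}}$ into Lemma \ref{lem:formulae_swapping_noisy_states_id} gives $\tilde p'_{00}=1/4$, $\tilde F'_{00}=1$, and the expression collapses to $1-2p(1-F)$. The problem is the converse direction, which you yourself flag as "the delicate step": your plan is to bound $\tilde F'_{00}\leq 1$ and $\tilde p'_{00}$ \emph{separately}, but this cannot give the tight bound. The absolute maximum of $\tilde p'_{00}$ over product inputs is $1/2$ (fidelity of a separable two-qubit state to $\ket{\Psi_{00}}$), and substituting $\tilde F'_{00}=1$, $\tilde p'_{00}=1/2$ into (\ref{eqn:formula_post_swap_fidelity_noisy_states}) yields a strictly weaker bound than $1-2p(1-F)$ (e.g.\ at $p=F=1/2$ it gives $3/5$ instead of $1/2$). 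Your attempted fix --- that the constraint $\bra{\Psi_{00}}\sigma_k\ket{\Psi_{00}}=\tilde F_{\mathrm{fid}}$ forces the maximizer of $\tilde p'_{00}$ to be $\ket{\psi}$ with $\tilde p'_{00}=1/4$ --- is false: for $\tilde F_{\mathrm{fid}}=1/2$ the choice $\sigma_1=\sigma_2=\ketbra{00}$ satisfies the fidelity constraint and has $\tilde p'_{00}=1/2>1/4$ (its $\tilde F'_{00}$ is only $1/2$, but your separate bounding scheme has no way to exploit that). The alternative "quasi-convexity plus two-parameter pure-state optimization" sketch is also incomplete, since the orthogonal component $\ket{\perp}$ lives in a three-dimensional subspace per factor and you have not shown the optimum restricts to $\mathrm{span}\{\ket{\Psi_{00}},\ket{\Psi_{11}}\}$.

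The paper closes this gap with a different and essential observation: it bounds the \emph{product}
\begin{equation*}
\tilde p'_{00}\,\tilde F'_{00}=\bra{\phi}\,\sigma_1\otimes\sigma_2\,\ket{\phi},\qquad
\ket{\phi}=\ket{\Psi_{00}}_{A_1A_2}\otimes\ket{\Psi_{00}}_{B_1B_2},
\end{equation*}
noting that $\ket{\phi}$ is a maximally entangled state of local dimension $d=4$ across the cut $B_1A_1\,|\,A_2B_2$, while $\sigma_1\otimes\sigma_2$ is separable across that cut, so the overlap is at most $1/d=1/4$ (the Horodecki bound used in \cite{horodecki1999general}). Since $F'_{00}\leq \frac{a+x}{b+x}$ with $x=\tilde p'_{00}\tilde F'_{00}$ and $b-a=2p(1-F)\geq 0$, monotonicity in $x$ plus $x\leq 1/4$ immediately gives $F'_{00}\leq 1-2p(1-F)$, independently of any fidelity constraint on the $\sigma_k$. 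This joint bound on $\tilde p'_{00}\tilde F'_{00}$ is exactly what resolves the trade-off you could not rule out; without it (or an equivalent argument), your proof of the maximality claim does not go through.
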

\begin{proof}
    See Appendix \ref{app:conditional_swapping}.
\end{proof}
The saturating state (\ref{eqn:optimal_state_main_text}) may be interpreted as $\ket{\Psi_{00}}$ having undergone a $Y$-rotation of a specified angle with probability $1-p$. 
We note that in the more general case $\rho_k \in S_{p_k,F_k}$, the equality (\ref{eqn:tight_upper_bound_expression_p_F}) instead becomes an upper bound (see Appendix \ref{app:conditional_swapping}).

We notice that Theorem \ref{thm:tight_upper_bound} implies a similar result for the swapping of identical states. More specifically, for any $\rho \in S_{p,F}$ it follows that
\begin{equation}
    F'_{ij}(\rho^{\otimes 2}) \leq 1 - 2p(1-F)
    \label{eqn:upper_bound_identical_states}
\end{equation}
for any BSM outcome $ij$. 

We see from Theorem \ref{thm:tight_upper_bound} that $F'_{\max}(p,F)$ is decreasing in $p$. The decreasing behaviour is expected, because from Proposition \ref{prop:p_1>p_2}, the set $S_{p,F}$ shrinks as $p$ increases and $F$ is fixed. 
One may tighten the bound (\ref{eqn:upper_bound_identical_states}) by finding the largest possible $q$ such that $\rho \in S_{q,F}$, which can be achieved by solving the optimisation problem
\begin{maxi}|s|
{q}{q\;\;\;\;\;\;\;\;\;\;\;\;\;\;\;\;\;\;\;}{}{p^* = }
\addConstraint{\rho - q \ketbra{\Psi_{00}} \geq 0}.
\label{opti:find_best_p}
\end{maxi}
The problem (\ref{opti:find_best_p}) may be solved efficiently using a simple SDP solver, which we provide in our repository \cite{github_repo}. The tightened bound is then given by
\begin{equation}
    F'_{ij}(\rho^{\otimes 2}) \leq 1 - 2p^*(1-F).
\end{equation}
A simple demonstration of this procedure is with that of the Werner state (\ref{eqn:werner_state}). We rewrite this as
\begin{equation}
        \mathcal{W}(\rho) = p \ket{\Psi_{00}}\bra{\Psi_{00}} + (1-p) \frac{I_4}{4},
        \label{eqn:werner_state_rewrite}
\end{equation}
where $p= (4F-1)/3$ and $F = \bra{\Psi_{00}}\rho \ket{\Psi_{00}}$. Since $I_4/4$ is a density matrix, it follows that $ \mathcal{W}(\rho) \in S_{p,F}$. However, we notice that $\mathcal{W}(\rho)$ may also be rewritten in Bell-diagonal form with the coefficients as given in (\ref{eqn:werner_state_bell_diag_coeffs}), and therefore $\rho_W\in S_{F,F}$. The second case gives the tighter upper bound for the end-to-end fidelity,
\begin{equation}
    F'_{ij}\big(\mathcal{W}(\rho)^{\otimes 2}\big) \leq F'_{\max}(F,F) =  1 - 2F(1-F),
\end{equation}
which can be easily validated with Corollary \ref{cor:swap_werner}.

We make the following further observations about Theorem \ref{thm:tight_upper_bound}. At $p = 0$, the expression simplifies to $F'_{\max}(0,F) = 1$. This is expected, because recalling the state $\ket{\Psi_{\theta}}$ from (\ref{eqn:theta_state}), we have $\ketbra{\Psi_{\theta}} \in S_{0,F}$ such that $F = \cos^2(\theta - \frac{\pi}{4})$. Then,
\begin{equation}
    1 = F'_{ij}\left(\ketbra{\Psi_{\theta}}^{\otimes 2}\right) \leq F'_{\max}(0,F),
\end{equation}
which implies the same.
If $\rho\in S_{F,F}$, the noisy component $\sigma$ is orthogonal to $\ketbra{\Psi_{00}}$. In such a case, we see that (\ref{eqn:tight_upper_bound_expression_p_F}) simplifies to 
\begin{align}
    F'_{ij}(\rho^{\otimes 2}) \leq F'_{\max}(F,F) &= 1- 2F(1-F)  \\ &= F^2 + (1-F)^2 \\ &= F'_{ij}(\rho_{\mathrm{R}2}^{\otimes 2}).
    \label{eqn:rank_two_saturates_bound_p=F}
\end{align}
where $\rho_{\mathrm{R}2}$ is any Bell-diagonal state of rank two with fidelity $F$. In the final step, we have recalled the formula for the postselected swapping of Bell-diagonal states from Lemma \ref{lem:swap_bell_diagonal}. 
In particular, the state $\rho_{\mathrm{R}2} \in S_{F,F}$ provides optimal end-to-end fidelity for initial states in $S_{F,F}$.

Now that we have characterised $F'_{\max}(p,F)$, we turn to studying $F'_{\min}(p,F)$.
In the following Proposition, we derive an analytical lower bound for this quantity.
\begin{proposition}
 Let $F_{\min}'(p,F)$ be the minimum end-to-end fidelity for $\rho_k \in S_{p,F}$, as defined in (\ref{eqn:def_F'min}). Then, 
    \begin{equation}
        F'_{\min}(p,F) \geq \frac{p(2F - p)}{1+(1-p)^2}.
        \label{eqn:lower_bound_fidelity}
    \end{equation}
\label{prop:analytical_lower_bound}
\end{proposition}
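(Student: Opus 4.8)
The plan is to use the simplified formula from Lemma~\ref{lem:formulae_swapping_noisy_states_id} and minimise over the swap statistics of the noise components. Writing $\rho_k = p\ketbra{\Psi_{00}} + (1-p)\sigma_k$, equation~(\ref{eqn:formula_post_swap_fidelity_noisy_states}) expresses $F'_{ij}(\rho_1\otimes\rho_2)$ as a ratio whose numerator and denominator are affine in $\tilde p'_{ij} := p'_{ij}(\sigma_1\otimes\sigma_2)$ and in the product $\tilde p'_{ij}\tilde F'_{ij}$. The key observation is that $\tilde F'_{ij} \in [0,1]$ always (it is a fidelity), and that for any admissible $\sigma_1,\sigma_2$ we have $\tilde p'_{ij} = \Tr[\ketbra{\Psi_{ij}}_{A_1A_2}\,\sigma_1\otimes\sigma_2] \in [0,1]$ as well (it is a probability). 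So I would first argue that the minimum of the right-hand side of~(\ref{eqn:formula_post_swap_fidelity_noisy_states}), taken over all valid noise components, is bounded below by the minimum of the elementary function
\begin{equation}
    g(a,b) = \frac{2pF - p^2 + 4(1-p)^2\, b}{2p - p^2 + 4(1-p)^2\, a}
\end{equation}
over the region $0 \le b \le a \le 1$ (using $\tilde p'_{ij}\tilde F'_{ij} \le \tilde p'_{ij}$ and $\tilde p'_{ij}\tilde F'_{ij} \ge 0$), possibly with the further constraint $a \le 1/4$ coming from $\|\sigma_1\otimes\sigma_2\|$ against a rank-one projector, though I expect the weaker bound $a\le 1$ already suffices.

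Next I would carry out this elementary optimisation. Since $g$ is decreasing in $b$ (the numerator coefficient $4(1-p)^2$ is positive and the denominator does not depend on $b$), the minimum over $b$ is at $b = 0$, giving $g(a,0) = (2pF - p^2)/(2p - p^2 + 4(1-p)^2 a)$. The numerator $2pF - p^2 = p(2F-p) \ge 0$ since $p \le F$, so $g(a,0)$ is decreasing in $a$, and the minimum is at the largest admissible $a$. With $a = 1$ this yields
\begin{equation}
    g(1,0) = \frac{p(2F-p)}{2p - p^2 + 4(1-p)^2} = \frac{p(2F-p)}{1 + (1-p)^2},
\end{equation}
where the denominator simplifies since $2p - p^2 + 4(1-p)^2 = 2p - p^2 + 4 - 8p + 4p^2 = 3p^2 - 6p + 4 = 1 + (3p^2 - 6p + 3) = 1 + 3(1-p)^2$; I will need to double-check whether the intended denominator is $1+(1-p)^2$ or $1+3(1-p)^2$ and reconcile it with the range of $a$ (if $a$ is in fact constrained to $a \le 1/3$ by positivity of $\sigma_1\otimes\sigma_2$ as a state minus... — more likely the correct constraint is $4(1-p)^2 a \le$ something that produces the clean denominator). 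Either way, the argument is: plug the extremal $(a,b)$ into $g$, simplify the denominator, and read off the stated bound.

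The main obstacle I anticipate is pinning down the exact feasible range of $\tilde p'_{ij}$ (equivalently $a$). Simply using $a \le 1$ may give a bound that is off from the stated $1+(1-p)^2$; the correct constraint must come from the fact that $\tilde p'_{ij} = \Tr[\ketbra{\Psi_{ij}}_{A_1A_2}\sigma_1\otimes\sigma_2]$ is the overlap of a product state $\sigma_1\otimes\sigma_2$ with a \emph{maximally entangled} projector on $A_1A_2$, which for product states is bounded by $\max_{\phi}\bra{\phi}\sigma_1\otimes\sigma_2\ket{\phi}$-type reasoning or by Schmidt-coefficient arguments — in fact $\bra{\Psi_{ij}}_{A_1A_2}(\sigma_1\otimes\sigma_2)\ket{\Psi_{ij}}_{A_1A_2} \le 1/2$ is the relevant tight bound (the overlap of a separable state with a maximally entangled pure state on two qubits is at most $1/2$). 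Substituting $4(1-p)^2 \cdot (1/2) = 2(1-p)^2$ into the denominator gives $2p - p^2 + 2(1-p)^2 = 2p - p^2 + 2 - 4p + 2p^2 = p^2 - 2p + 2 = 1 + (1-p)^2$, which is exactly the stated denominator. So the real content of the proof is this separable-state overlap bound; once it is in hand, the rest is the monotonicity argument above. I would therefore structure the proof as: (1) recall~(\ref{eqn:formula_post_swap_fidelity_noisy_states}); (2) prove $\tilde p'_{ij} \le 1/2$ and $\tilde F'_{ij}\tilde p'_{ij}\ge 0$; (3) substitute and use monotonicity in the numerator/denominator to conclude.
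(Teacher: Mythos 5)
Your proposal follows essentially the same route as the paper's proof: drop the nonnegative term $\tilde F'_{ij}\tilde p'_{ij}$ from the numerator of the formula in Lemma~\ref{lem:formulae_swapping_noisy_states_id}, bound $\tilde p'_{ij}\le 1/2$ via the overlap of the separable reduced state on $A_1A_2$ with a maximally entangled two-qubit state, and substitute to get the denominator $1+(1-p)^2$. The only blemish is the wording ``$g$ is decreasing in $b$'' (it is increasing in $b$, which is what your parenthetical actually shows and what justifies taking $b=0$), a slip that does not affect the argument.
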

\begin{proof}
    See Appendix \ref{app:conditional_swapping}.
\end{proof}

\subsection{Lower bound with SDP}
\label{sec:sdp}
Unlike the upper bound in Theorem \ref{thm:tight_upper_bound}, the lower bound (\ref{eqn:lower_bound_fidelity}) is not tight. In this section, we find a tighter lower bound for $F'_{\min}(p,F)$ using semi-definite programming (SDP).
Recalling its definition in (\ref{eqn:def_F'min}), $F'_{\min}(p,F)$ is the solution to the optimisation problem
\begin{mini}|s|
{\rho_1 \! \otimes \! \rho_2}{F'_{ij}(\rho_1 \otimes \rho_2)\;\;\;\;\;\;\;\;\;\;\;\;\;\;\;\;\;\;\;\;\;\;\;\;\;\;\;\;\;}
{}{}
\addConstraint{\Tr \big[ \ket{\Psi_{00}}\bra{\Psi_{00}} \rho_k \big] = F}
\addConstraint{\Tr[\rho_k]= 1}
\addConstraint{\rho_k - p\ketbra{\Psi_{00}} \geq 0,\text{ for }k=1,2} 
.\label{opt:original0}
\end{mini}
Here, the constraints ensure that we are optimising over $\rho_k\in S_{p,F}$. The first constraint enforces $\rho_k$ has fixed fidelity $F$, and the final two constraints ensure that the noisy component $\sigma_k$ is a valid density matrix.
Recalling from Lemma \ref{lem:formulae_swapping_noisy_states_id} the formula for swapping two noisy states, (\ref{opt:original0}) may be written as 
\begin{mini}|s|
{\sigma_1 \! \otimes \! \sigma_2}{\frac{ 2pF - p^2 + 4(1-p)^2 \cdot \tilde{p}'_{00}\cdot \tilde{F}'_{00} }{2p - p^2 + 4(1-p^2)\cdot \tilde{p}'_{00}}}
{}{}
\addConstraint{p + (1-p)\mathrm{Tr}\big[\ket{\Psi_{00}}\bra{\Psi_{00}} \sigma_k \big] = F}
\addConstraint{\mathrm{Tr}[\sigma_k]= 1}
\addConstraint{\sigma_k \geq 0, \;\;\;\;\;\;\;\;\;\;\text{ for }k=1,2} 
.\label{opt:original}
\end{mini}
To obtain (\ref{opt:original}), we have reparameterised the problem to optimise over the noisy components $\sigma_k$. The quantities \linebreak $\tilde{F}'_{ij} \coloneqq F_{ij}'(\sigma_1 \otimes  \sigma_2)$ and $ \tilde{p}'_{ij} \coloneqq p_{ij}'(\sigma_1 \otimes  \sigma_2)$ are the corresponding swap statistics when only swapping the noisy components $\sigma_k$. 

The domain in (\ref{opt:original}) is the set of product states $\sigma_1 \otimes \sigma_2$, where $\sigma_k$ is a two-qubit density matrix. The domain is non-convex. Moreover, the objective function is rational, and not manifestly convex. These two details make (\ref{opt:original}) difficult to approach using numerical methods. We will therefore perform a relaxation of the domain, which transforms this the problem into one that is solvable with SDP. SDP is a commonly-used technique in quantum information \cite{rains2002semidefinite,tavakoli2024semidefinite}. The SDP formulation opens up the possibility of using several well-studied and efficient solvers, and moreover has an important feature that, under certain conditions, the solver converges to a global optimum. 

In order to study (\ref{opt:original}) with SDP, we perform two steps. Firstly, we linearise the objective function. Since the objective function of (\ref{opt:original}) is rational, we fix its denominator and introduce the new constraint 
\begin{align}
    \delta &= \frac{p}{2} - \frac{p^2}{4} + (1-p)^2 \tilde{p}_{00}' \nonumber \\
    &= \frac{p}{2} - \frac{p^2}{4} + (1-p)^2 \Tr\big[\ket{\Psi_{00}}\bra{\Psi_{00}}_{A_1 A_2} \sigma_1 \otimes \sigma_2 \big]. 
    \label{eqn:introduce_constraint_delta}
\end{align}
For conciseness, we rewrite the above as 
\begin{equation} \Tr\big[\ket{\Psi_{00}}\bra{\Psi_{00}}_{A_1 A_2} \sigma_1 \otimes \sigma_2 \big] = \tilde{\delta}(p,\delta)
    \label{eqn:deltatilde_constraint}
\end{equation}
where 
\begin{equation}
    \tilde{\delta}(p,\delta) \coloneqq \frac{4\delta - 2p + p^2}{4(1-p)^2}.
    \label{eqn:tilde_delta}
\end{equation}
Recalling Lemma \ref{lem:formulae_swapping_noisy_states_id}, this is fixing the total probability to be $\delta$. Since $p$ is fixed, $\delta$ and $\tilde{\delta}$ are interchangeable via the linear relation (\ref{eqn:tilde_delta}).

Similarly, we rewrite both fidelity constraints as 
\begin{equation}
\mathrm{Tr}\big[\ket{\Psi_{00}}\bra{\Psi_{00}}_{A_1 A_2} \sigma_k \big] = \tilde{F},
\label{eqn:Ftilde_constraint}
\end{equation}
where
\begin{equation}
    \tilde{F}(p,F) \coloneqq \frac{F-p}{1-p}
\end{equation}
is the fidelity of the noisy component. Moreover, given that $\delta$ is fixed, we notice that the objective function is given by
\begin{align*}
    \frac{1}{4\delta}\left(  2pF - p^2 + 4(1-p)^2 \cdot \tilde{p}'_{00} \tilde{F}'_{00}\right),
\end{align*}
and so it suffices to optimise over
\begin{equation}
    \tilde{p}'_{00} \tilde{F}'_{00}\! = \!\mathrm{Tr}\big[ \ket{\Psi_{00}}\bra{\Psi_{00}}_{B_1 B_2}\! \ket{\Psi_{00}}\bra{\Psi_{00}}_{A_1 A_2} \sigma_1\!\otimes\! \sigma_2\big]
    \label{eqn:objective_fn_linearised}
\end{equation}
which is a linear function of $\sigma_1 \otimes \sigma_2$.
With our constraints and objective function reformulated as (\ref{eqn:deltatilde_constraint}), (\ref{eqn:Ftilde_constraint}) and (\ref{eqn:objective_fn_linearised}), we are now interested in the solution to
\begin{mini}|s|
{ \sigma_1\otimes \sigma_2}{ \mathrm{Tr}\big[ \ket{\Psi_{00}}\bra{\Psi_{00}}_{B_1 B_2} \ket{\Psi_{00}}\bra{\Psi_{00}}_{A_1 A_2} \sigma_1\otimes \sigma_2\big]}
{}{}
\addConstraint{ \mathrm{Tr}\big[\ket{\Psi_{00}}\bra{\Psi_{00}}_{A_1 A_2} \sigma_1\otimes \sigma_2 \big] = \tilde{\delta}(p,\delta)}
\addConstraint{\mathrm{Tr}\big[\ket{\Psi_{00}}\bra{\Psi_{00}}_{A_k B_k} \sigma_k \big] = \tilde{F}(p,F)}
\addConstraint{\mathrm{Tr}[\sigma_k]= 1}
\addConstraint{\sigma_k\geq 0, \;\;\text{ for }k=1,2}
.\label{opt:simplified}
\end{mini}
Letting $H^*(p,F,\delta)$ be the solution to (\ref{opt:simplified}), we have
\begin{equation}
    F'_{\min}(F,p) = \min_{\delta} \frac{1}{\delta}\left( \frac{Fp}{2} - \frac{p^2}{4} +(1-p)^2 H^*(p,F,\delta)\right).
    \label{eqn:minimum_over_delta}
\end{equation}
As well as linearising the objective function, fixing $\delta$ allows one to study the rate-fidelity trade-off in the entanglement swapping protocol. This is useful because, in the performance analysis of quantum networks, it is important to understand both fidelity metrics and rate metrics in entanglement distribution protocols. For example, if a state provides a high fidelity with an excessively low probability of success, then this may no longer be very useful or relevant. Notice that the definitions of $F'_{\max}(p,F)$ and $F'_{\min}(p,F)$ in (\ref{eqn:def_F'max}) and (\ref{eqn:def_F'min}) are currently agnostic to the probability of obtaining the BSM outcome with minimum and maximum fidelity. Fixing the swap probability is a mechanism to study this: with such a constraint, for a given probability $\delta$ of a given swap outcome, one may study the limits of the fidelity. The same study was carried out in \cite{rozpkedek2018optimizing}, where the authors use SDP to study the maximum fidelity that can be achieved with practical purification protocols, given a fixed success probability of purification. In Appendix \ref{app:more_SDP_lower_bound}, we provide further discussion and analysis of the rate-fidelity trade-off. 
Recalling that the domain over which we optimise in (\ref{opt:simplified}) is not convex (product states), we perform a relaxation of the domain. In particular, we use
\begin{equation}
    \sigma_1 \otimes \sigma_2 \in \mathrm{SEP} \subset \mathrm{PPT},
    \label{eqn:product_states_in_PPT}
\end{equation}
where $\mathrm{SEP}$ is the set of separable states, and $\mathrm{PPT}$ is the set of four-qubit states that are still positive after taking the partial transpose with respect to the registers $A_2$ and $B_2$ \cite{Peres1996}.
Relaxing the domain of (\ref{opt:simplified}) results in the following:
\begin{mini}|s|
{\sigma }{ \mathrm{Tr}\big[ \ketbra{\Psi_{00}}_{B_1 B_2} \ketbra{\Psi_{00}}_{A_1 A_2} \sigma \big]}
{}{}
\addConstraint{ \mathrm{Tr}\big[\ketbra{\Psi_{00}}_{A_1 A_2} \sigma \big] = \tilde{\delta}(p,\delta)}
\addConstraint{\mathrm{Tr}\big[\ketbra{\Psi_{00}}_{B_1 A_1} \sigma \big] = \tilde{F}(p,F)}
\addConstraint{\mathrm{Tr}\big[\ketbra{\Psi_{00}}_{B_2 A_2} \sigma \big] = \tilde{F}(p,F)}
\addConstraint{\mathrm{Tr}[\sigma]= 1}
\addConstraint{\sigma\geq 0, \;\;\; \sigma^{\Gamma}\geq 0.}
\label{opt:PPT1}
\end{mini}
where $M^{\Gamma}$ denotes taking the partial transpose of $M$ on the registers $A_2$ and $B_2$. The optimisation problem (\ref{opt:PPT1}) may now be solved with SDP. One may greatly reduce the number of parameters in the optimisation by using the fact that the objective function and all constraints are invariant under the application of correlated unitaries. See Appendix \ref{app:sdp} for the full details of the symmetrisation procedure. After symmetrisation, the number of free parameters in the optimisation is reduced from 256 to fewer than 48. 

\begin{figure*}[t!]
    \centering
    \begin{subfigure}[b]{0.5\textwidth}
        \centering
        \includegraphics[width=87mm]{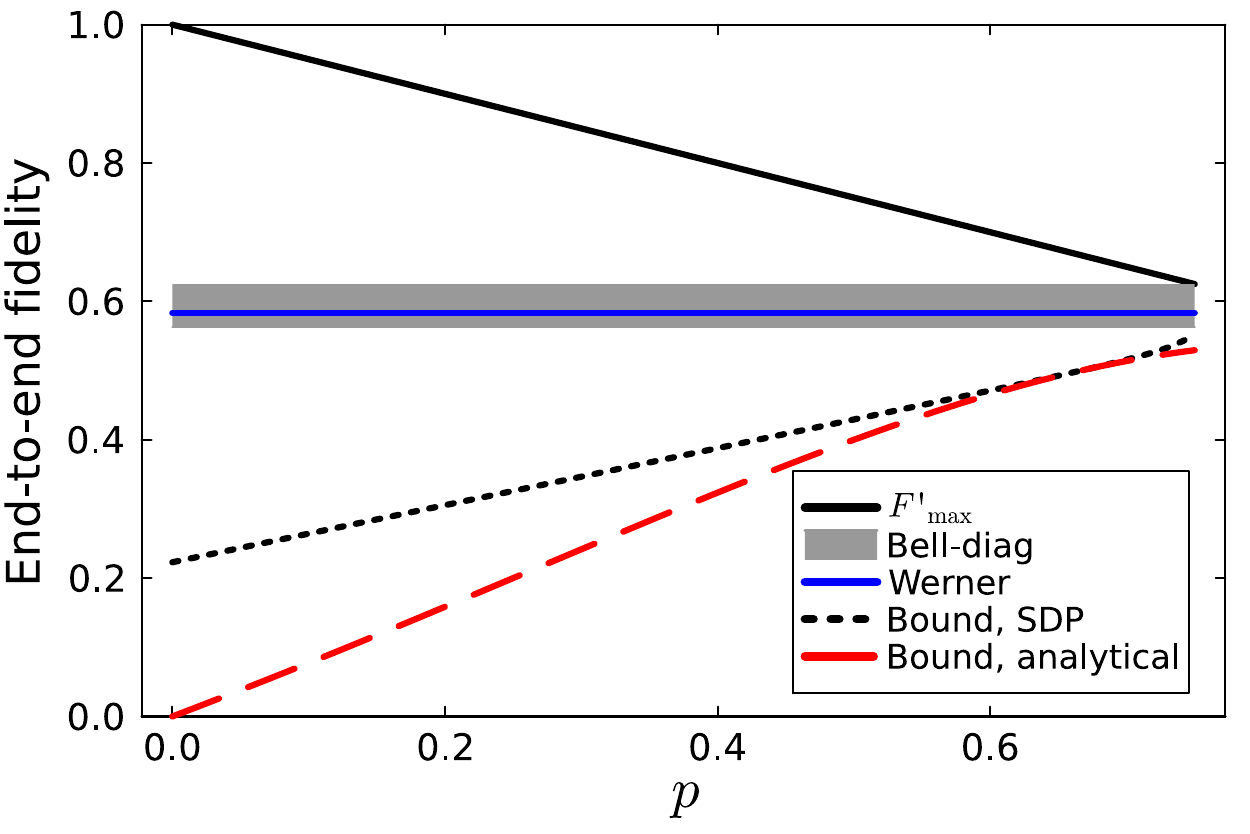}
        \caption{\label{fig:p_vs_F'_1}}
    \end{subfigure}%
    ~ 
    \begin{subfigure}[b]{0.5\textwidth}
        \centering
        \includegraphics[width=87mm]{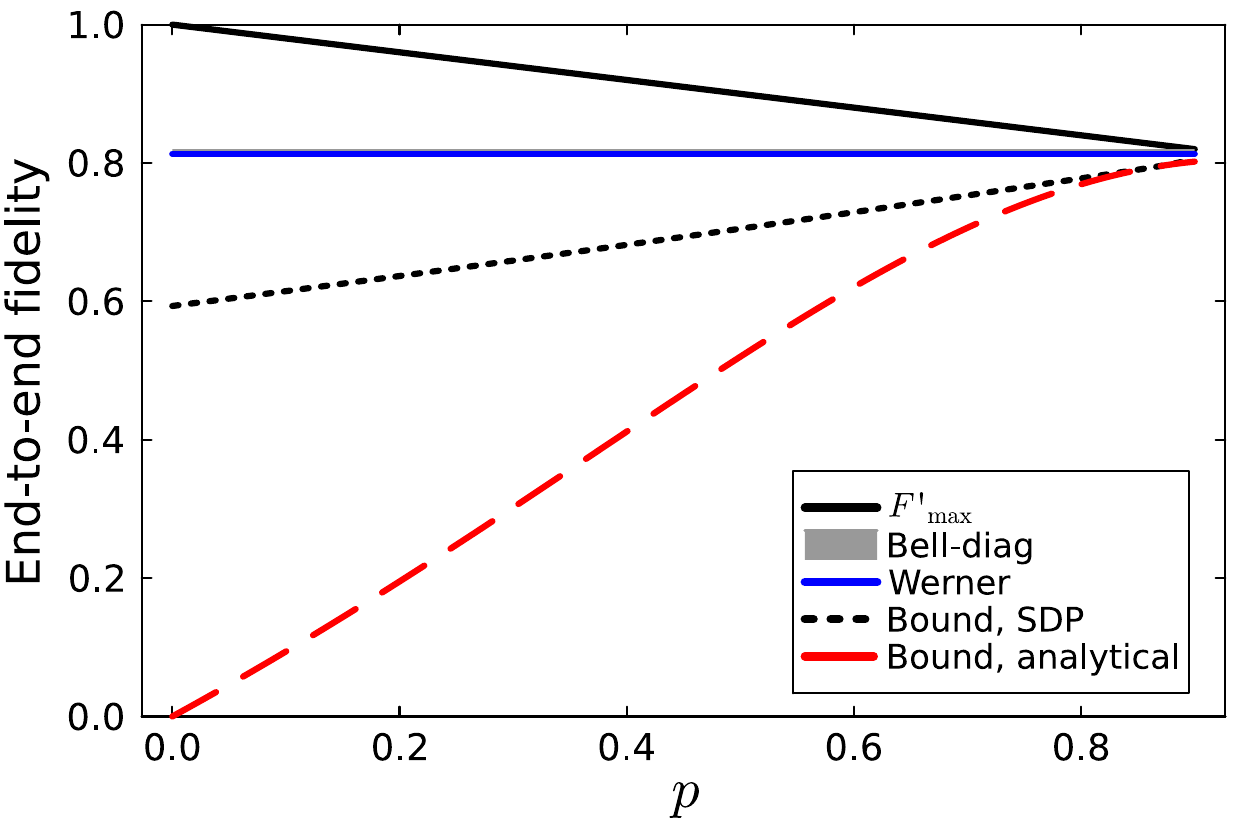}
        \caption{\label{fig:p_vs_F'_2}}
    \end{subfigure}
    \caption{\label{fig:p_vs_F'} \raggedright \textbf{Bounds for the end-to-end fidelity when swapping states $\rho_1\otimes \rho_2$ with $\rho_k = p\ketbra{\Psi_{00}}+ (1-p)\sigma_k$, for $p\in[0,F]$ ($\rho_k \in S_{p,F}$ for $k=1,2$), with (a) $F=0.75$ and (b) $F=0.9$.}  The black solid line is the tight upper bound on the postselected end-to-end fidelity, $F'_{\max}(p,F)$. The red dashed line is the analytical lower bound on the postselected -end-to-end fidelity, $F'_{\min}(p,F)$. The black dotted line is the SDP lower bound for $F'_{\min}(p,F)$, given in (\ref{eqn:lower_bound_with_SDP}). With the Bell-diagonal approximation, the end-to-end fidelity $F'_{\mathcal{B}}$ will lie in the grey region (for $F=0.9$ this is not visible). With the Werner approximation, the end-to-end fidelity $F'_{\mathcal{W}}$ will lie on the blue line. The plot is made for 100 values $p$ uniformly spaced within this interval.}
\end{figure*}

Letting $H_{\mathrm{rel}}^*(p,F,\delta)$ be the solution to (\ref{opt:PPT1}), by the relaxation (\ref{eqn:product_states_in_PPT}) it follows that
\begin{equation}
    H^* (p,F,\delta) \geq H_{\mathrm{rel}}^*(p,F,\delta).
\end{equation}
Recalling (\ref{eqn:minimum_over_delta}), $F'_{\min}$ is then bounded below by 
\begin{equation}
    F'_{\min}(p,F) \geq \min_{\delta} \frac{1}{\delta}\left( \frac{Fp}{2} - \frac{p^2}{4} +(1-p)^2 H_{\mathrm{rel}}^*(p,F,\delta)\right),
    \label{eqn:lower_bound_with_SDP}
\end{equation}
where the optimisation is performed in the feasible region of $\delta$ (see Appendix \ref{app:sdp} for the calculation of the feasible region).
After symmetrisation of (\ref{opt:PPT1}), since the numerical optimisation over $\delta$ is over a single parameter in a bounded domain, (\ref{eqn:lower_bound_with_SDP}) is efficient to compute (on the order of a few seconds).

\section{Discussion}
\subsection{Bounds comparison}
\label{sec:bounds_discussion}
Here, we illustrate the results from Sections \ref{sec:advantages_twirled} and \ref{sec:advantages_full} with examples. In particular, we will see how the parameters $p$ and $F$ of the initial states affect the accuracy of twirled approximations.

For fixed fidelity $F$, plotted in Figures \ref{fig:p_vs_F'_1} and \ref{fig:p_vs_F'_2} is $F'_{\max}(p,F)$ as found in Theorem \ref{thm:tight_upper_bound}, and the lower bounds for $F_{\min}'(p,F)$. In Figure \ref{fig:vary_F_p=0}, $p=0$ is fixed, and the same quantities are plotted. 
In all cases we have tested, the SDP lower bound for $F_{\min}'(p,F)$ is tighter than the analytical lower bound.
The grey region is where the end-to-end fidelity will lie if the Bell-diagonal approximation is used for the initial states. 
In particular, the grey region $[F^2, F^2 + (1-F)^2]$ is the region between the best and worst end-to-end fidelity for Bell-diagonal states of fidelity $F$, from Lemma \ref{lem:bound_e2e_fidelity_BD}. 
The grey region depends only on $F$, and hence is constant in Figures \ref{fig:p_vs_F'_1} and \ref{fig:p_vs_F'_2}. By Theorem \ref{thm:swap_and_correct_equivalence}, the grey region is also where the end-to-end fidelity will lie after a non-postselected swap.

In Theorem \ref{thm:werner_approximation_accuracy}, we saw that for $1-F\ll 1/N$, the Werner approximation is accurate for non-postselected swapping. 
Then, given that $N=2$ is fixed in Figure \ref{fig:vary_F_p=0}, for large $F$ the Bell-diagonal region is concentrated tightly around the Werner line.

In Figure \ref{fig:vary_F_p=0}, because $p=0$ is fixed, the maximum end-to-end fidelity is constant at $F'_{\max}(0,F)=1$. 
This is expected from the discussion at the beginning of Section \ref{sec:advantages_full} where we saw that, when only fixing the fidelity of the input states, one may always find states that swap to unit fidelity. 
As well as the lower bounds for $F'_{\min}(0,F)$, we have plotted the lowest-fidelity outcome of the state $\ket{\psi}$ that was given in (\ref{eqn:psi_best_state_main_text}) as an example of a state giving output fidelity $F'_{\max}(p,F)$. The state $\ket{\psi}$ provides very good postselected swap statistics for the output fidelity of certain BSM outcomes. Since the end-to-end fidelity for a non-postselected swap must lie within the grey region and this is the weighted average of the postselected outcomes (Corllary \ref{cor:swap_inv_twirling}), the low-fidelity outcomes lie significantly below the grey region. In particular, the state $\ket{\psi}$ can also give an exceptionally low end-to-end fidelity. We plot this line in order to give an upper bound for the tightness of the SDP lower bound for $F'_{\min}(p,F)$. 

In Figures \ref{fig:p_vs_F'_1} and \ref{fig:p_vs_F'_2}, we see that for $p=F$, $F'_{\max}$ meets the upper limit of the grey region. The reason is what was seen in (\ref{eqn:rank_two_saturates_bound_p=F}): when the noisy component $\sigma_k$ is orthogonal to $\ket{\Psi_{00}}$, any rank-two Bell-diagonal state $\rho_{\mathrm{R}2}$ provides an optimal end-to-end fidelity, but also lies in the grey region due to being Bell-diagonal.
We outline the practical relevance in the following way. Let consider swapping the initial states $\rho_1 \otimes \rho_2$ with $\rho_k \in S_{F,F}$. Let $F'_{\mathcal{B}}$ ($F'_{\mathcal{W}}$) denote the end-to-end fidelity with the Bell-diagonal (Werner) approximation, such that
\begin{align}
    F'_{\mathcal{B}} &= F'_{ij}(\mathcal{B}(\rho_1) \otimes \mathcal{B}(\rho_2)), \\ 
    F'_{\mathcal{W}} &= F'_{ij}(\mathcal{W}(\rho_1) \otimes \mathcal{W}(\rho_2)).
\end{align}
Let $(ij)^*$ denote the highest-fidelity BSM outcome after swapping $\rho_1 \otimes \rho_2$, such that 
\begin{equation}
    F'_{(ij)^*}(\rho_1 \otimes \rho_2) = \max_{i,j} F'_{ij}(\rho_1 \otimes \rho_2).
\end{equation}
Then, the corresponding output fidelity necessarily satisfies $F'_{(ij)^*}(\rho_1 \otimes \rho_2) \geq F'_{\mathcal{B}}$, since by Corollary \ref{cor:swap_inv_twirling},
\begin{equation}
    F'_{\mathcal{B}} = \sum_{ij} p'_{ij} F'_{ij}.
\end{equation}
Recalling from Lemma \ref{lem:bound_e2e_fidelity_BD} that $F'_{\mathcal{B}} \geq F^2$, the maximum deviation above the Bell-diagonal approximation is bounded as
\begin{align}
    F'_{(ij)^*}(\rho_1 \otimes \rho_2) - F'_{\mathcal{B}} &\leq F'_{\max}(F,F) - F^2\\ &= (1-F)^2. \label{eqn:BD_approx_deviation}
\end{align}
Recalling from Corollary \ref{cor:swap_werner} that $F'_{\mathcal{W}} = F^2 + (1-F)^2/3$, the maximum deviation above the Werner approximation is therefore
\begin{align}
    F'_{(ij)^*}(\rho_1 \otimes \rho_2) - F'_{\mathcal{W}} &\leq F'_{\max}(F,F) - F'_{\mathcal{W}}\\ &= \frac{2}{3}(1-F)^2. \label{eqn:W_approx_deviation}
\end{align}
Then, by (\ref{eqn:BD_approx_deviation}) and (\ref{eqn:W_approx_deviation}) we see that for large $F$, a large deviation above the twirled approximation is not possible when the input states $\rho_k \in S_{F,F}$ have an orthogonal component, i.e.
\begin{align}
    F'_{(ij)^*}(\rho_1 \otimes \rho_2) &\approx F'_{\mathcal{B}} \label{eqn:post_swap_fidelity_orth_states_1}\\ 
    F'_{(ij)^*}(\rho_1 \otimes \rho_2) &\approx F'_{\mathcal{W}}.\label{eqn:post_swap_fidelity_orth_states_2}
\end{align}
For example, consider swapping the initial states $\rho_R^{\otimes 2}$ with
\begin{equation}
    \rho_R = p\ketbra{\Psi_{00}}+(1-p)\ketbra{01},
    \label{eqn:R_state}
\end{equation}
which in some contexts is referred to as the R state. Up to a local unitary rotation, such a state closely approximates states generated in certain physical entanglement generation schemes \cite{campbell2008singleclick,cabrillo1999singleclick}. It has an orthogonal, non-Bell-diagonal noisy component $\ketbra{01}$. 
We see from our analysis that, for large $F$ ($p$), twirled approximations will not cause a large decrease in end-to-end fidelity because of the orthogonal noisy component.

As another example of a direct application of our bounds, let us consider the S state \cite{rozpkedek2018optimizing},
\begin{equation}
    \rho_S = p\ketbra{\Psi_{00}}+(1-p)\ketbra{11}.
    \label{eqn:S_state}
\end{equation}
The state $\rho_S$ has a non-orthogonal noisy component $\ketbra{11}$, with fidelity $|\bra{\Psi_{00}}\ket{11}|^2 = 1/2$.
By direct inspection of $\rho_S$, we see that $\rho_S \in S_{p,F}$, where $F = (1+p)/2$. By Theorem \ref{thm:tight_upper_bound},
\begin{align}
    F'_{ij}(\rho_S^{\otimes 2}) &\leq F'_{\max}\!\left(p,\frac{1+p}{2}\right) \\ &= 1-2p(1- (1+p)/2) \\ &= (1-p)^2 + p.
    \label{eqn:upper_bound_rho_S}
\end{align}
Moreover, we have
\begin{align}
    F'_{\mathcal{B}} \geq F^2 &= \left(\frac{1+p}{2}\right)^2 = p + \frac{1}{4}(1-p)^2, \label{eqn:BD_approx_rho_S}
\end{align}
and 
\begin{align}
    F'_{\mathcal{W}} &= F^2 + \frac{(1-F)^2}{3} \\  &= \left(\frac{1+p}{2}\right)^2 + \left(\frac{1-p}{2}\right)^2 \\  &= p + \frac{1}{2}(1-p)^2. \label{eqn:W_approx_rho_S}
\end{align}
Therefore, combining (\ref{eqn:upper_bound_rho_S}), (\ref{eqn:BD_approx_rho_S}) and (\ref{eqn:W_approx_rho_S}), we see that the maximum deviation \textit{above} the Bell-diagonal (Werner) approximations when swapping the initial states $\rho_S^{\otimes 2}$ is bounded above by
\begin{align}
    F'_{\max}\!\left(p,\frac{1+p}{2}\right) - F'_{\mathcal{B}} &= \frac{3(1-p)^2}{4}. \\ F'_{\max}\!\left(p,\frac{1+p}{2}\right) - F'_{\mathcal{W}} &= \frac{(1-p)^2}{2}.
\end{align}
Consequently, for large $p$ (equivalently, large $F$), we conclude that twirled approximations do not cause large inaccuracies in estimating the output fidelity when the initial states are $\rho_S^{\otimes 2}$.

We note that one may also perform a similar study for the deviation \textit{below} the twirled approximations by computing the difference with the analytical or SDP lower bounds for $F'_{\min}(p,F)$.
\begin{figure}[t!]
\includegraphics[width=90mm]{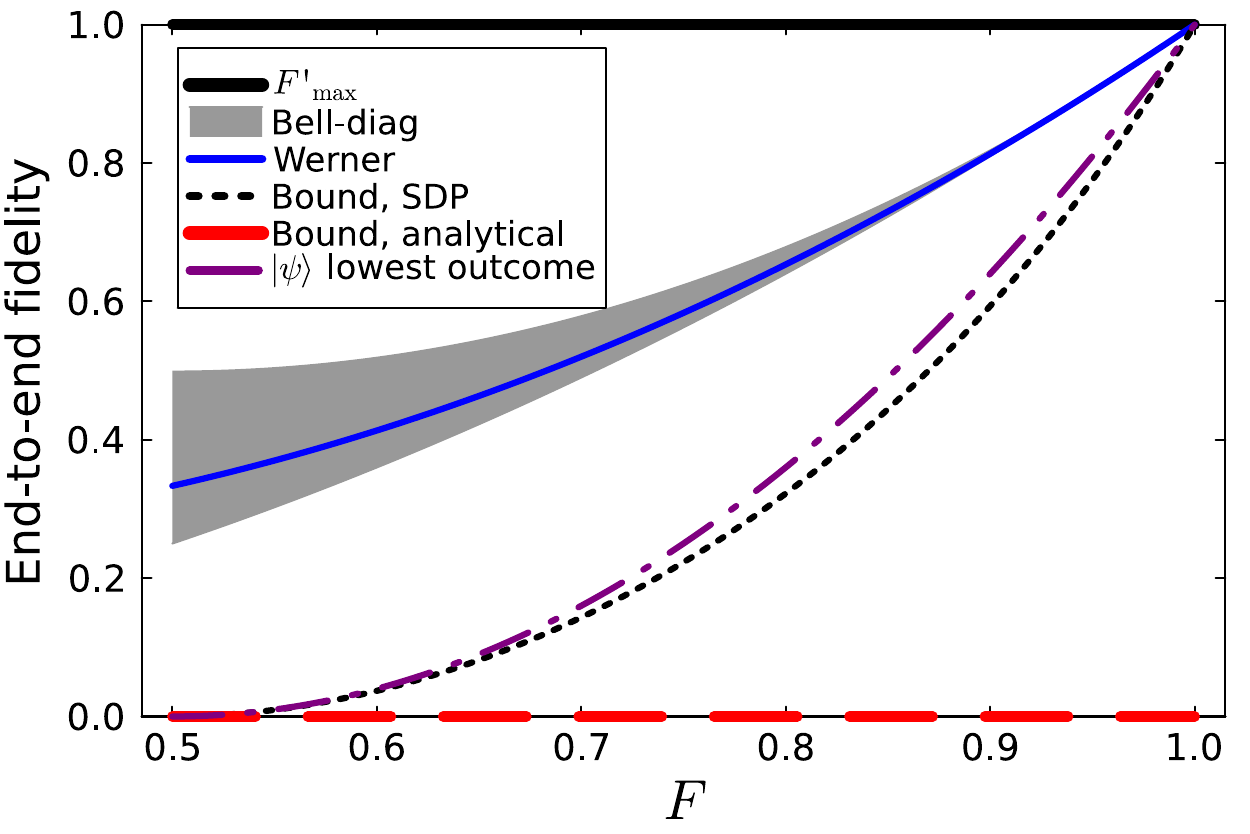}
\caption{\label{fig:vary_F_p=0} \raggedright \textbf{Bounds for the end-to-end fidelity when swapping states $\rho_1\otimes \rho_2$ with initial fidelity $F\in [0.5,1]$ ($\rho_k \in S_{0,F}$ for $k=1,2$).} The black solid line is the tight upper bound on the postselected end-to-end fidelity, $F'_{\max}(0,F)$. The red dashed line is the analytical lower bound for the postselected end-to-end fidelity, $F'_{\min}(0,F)$. The black dotted line is the SDP lower bound for $F'_{\min}(0,F)$. The purple dot-dash line is the lowest-fidelity outcome of $\ket{\psi}$, as defined in (\ref{eqn:psi_best_state_main_text}) with $p=0$. With the Bell-diagonal approximation, the end-to-end fidelity $F'_{\mathcal{B}}$ will lie in the grey region. With the Werner approximation, the end-to-end fidelity $F'_{\mathcal{W}}$ will lie on the blue line. The plot is made for 100 values of $F$ uniformly spaced within the interval.}
\end{figure}

\begin{figure*}[t!]
    \centering
    \begin{subfigure}[b]{0.5\textwidth}
        \centering
        \includegraphics[width=87mm]{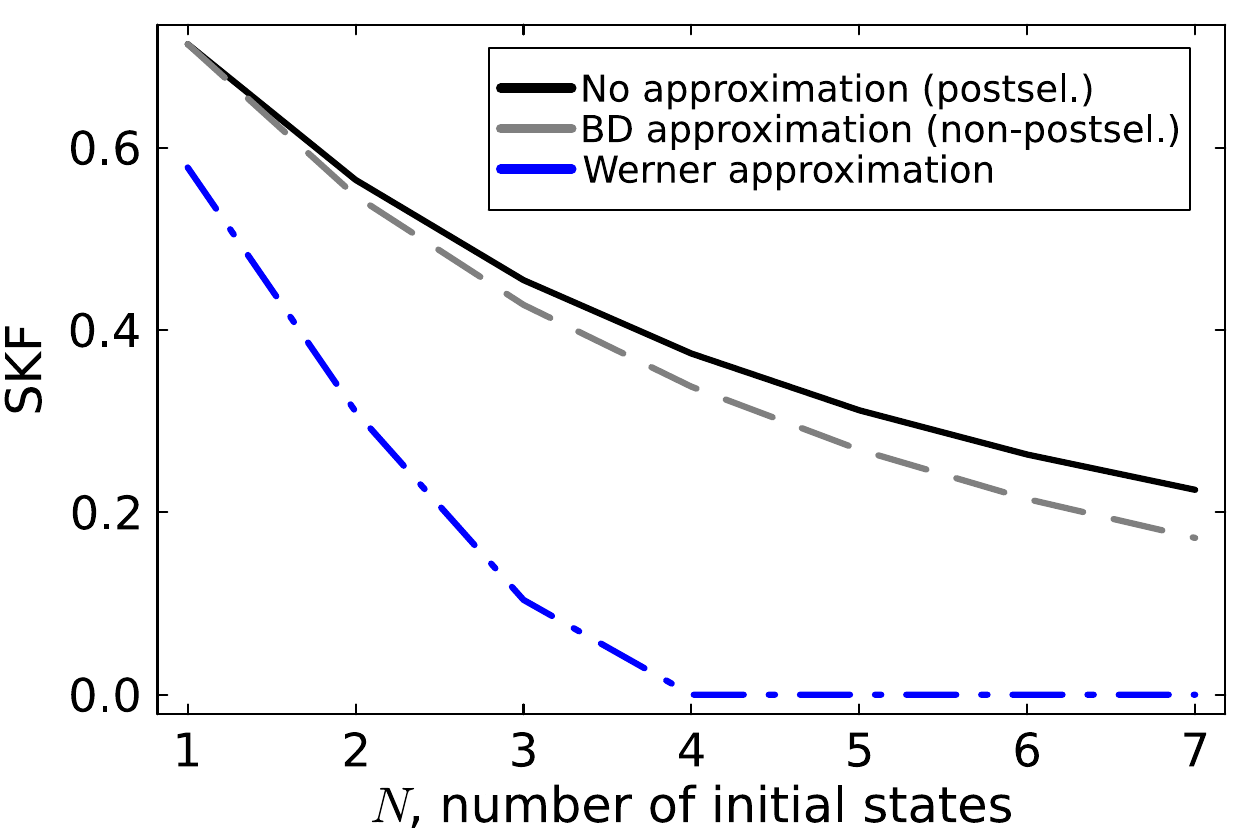}
        \caption{\label{fig:skf_opt_state}}
    \end{subfigure}%
    ~ 
    \begin{subfigure}[b]{0.5\textwidth}
        \centering
        \includegraphics[width=87mm]{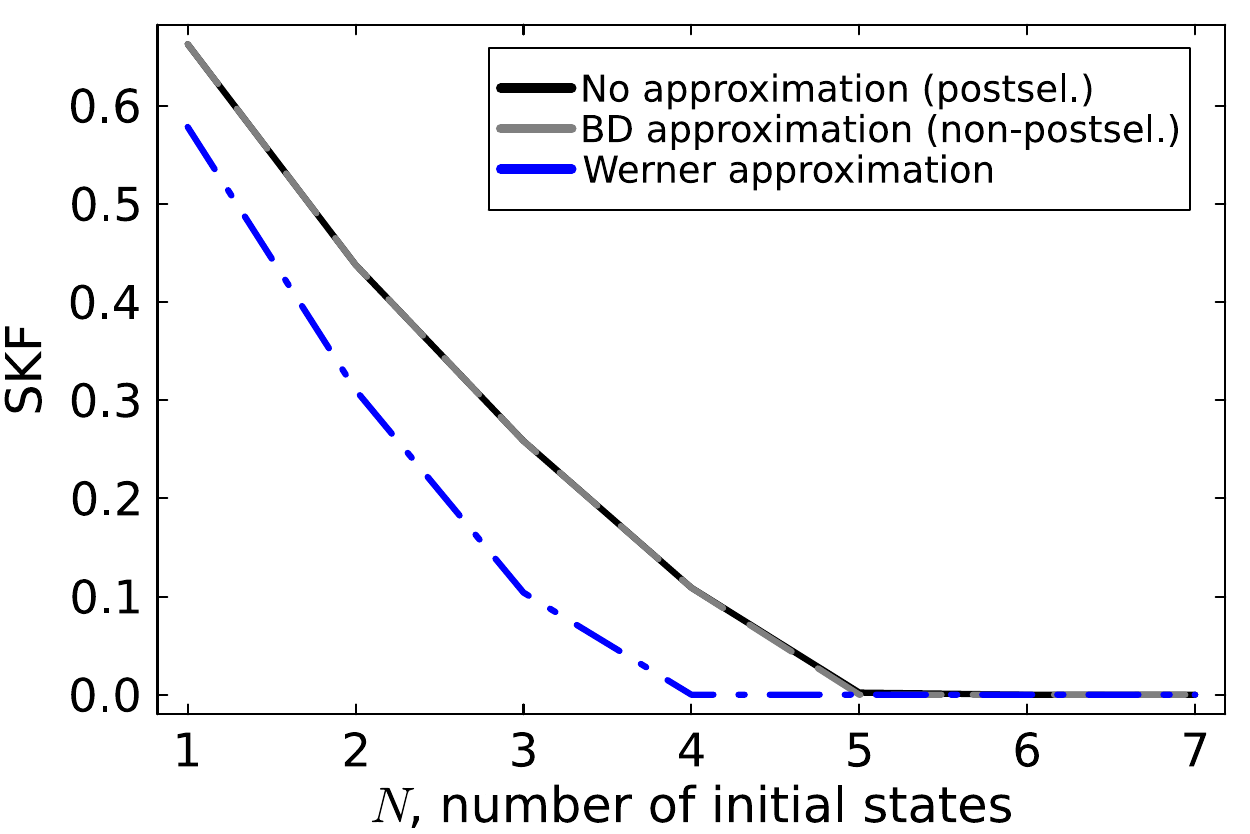}
        \caption{\label{fig:skf_r_state}}
    \end{subfigure}
    \caption{ \raggedright \textbf{Secret-key fraction when performing quantum key distribution over a repeater chain with $N$ initial states $\rho^{\otimes N}$} when (a) $\rho=\rho_{\mathrm{opt}}$ with $F=0.95$ and $p=0.5$ from (\ref{eqn:optimal_state_main_text}), and (b) $\rho=\rho_{R}$ from (\ref{eqn:R_state}) with fidelity $F=0.95$. The black line is the secret-key fraction of the postselected protocol, the grey dashed line is the secret-key fraction when the Bell-diagonal approximation is used for the initial states $\mathcal{B}(\rho)^{\otimes N}$ (or equivalently, the secret-key fraction that is obtained with the non-postselected protocol), and the blue dot-dashed line is the secret-key fraction when the Werner approximation is used for the initial states $\mathcal{W}(\rho)^{\otimes N}$.}
    \label{fig:skf}
\end{figure*}

\subsection{Example: quantum key distribution}
\label{sec:qkd}
We now carry out a numerical study of the accuracy of twirled state approximations when quantum key distribution (QKD) is performed.
It has been shown previously that postselecting on the syndrome when using error correction in a repeater chain can give an advantage~\cite{Namiki2016, jing2020errordetection, wo2023}.
Here, we extend these results by pointing out that an advantage can also be obtained in the absence of error correction by postselecting on the swap outcomes, and moreover that this advantage is the exact loss in performance when using the Bell-diagonal approximation.

Let us consider performing QKD over the end-to-end state of a repeater chain that initially has $N$ identical two-qubit states, $\rho^{\otimes N}$. 
We assume that entanglement swapping is performed with the \textit{correct-at-end} protocol, where all BSMs may be performed simultaneously and a single Pauli correction is performed at one of the end nodes. 
Recalling Definition \ref{def:swap_and_correct_protocol}, this is a swap-and-correct protocol, and we denote it as $\mathcal{P}$. Let $\vec{s}$ be the swap syndrome, which holds the information of the $N-1$ swap outcomes.
We let $\rho'_{\vec{s}}$ be the final two-qubit state held by the end nodes, postselected on the syndrome being $\vec s$, and $p'_{\vec{s}}$ the probability of measuring $\vec s$.
The end nodes use the resulting end-to-end states to perform the BBM92 protocol for QKD~\cite{bennett1992a}, which is also known as entanglement-based BB84~\cite{bennett2014bb84}. 
The quantum bit error rate (QBER) of this protocol in the X (Z) basis is the probability that when both end nodes measure their state in the X (Z) basis, they obtain different outcomes.
In the protocol, the end nodes randomly perform such measurements and then use their outcomes to distil a secret key between them.
The number of secret bits that can be obtained per measurement of a state $\sigma$ in the asymptotic limit 
is the secret-key fraction, which is given by~\cite{shor2000}
\begin{equation}
\text{SKF}(\sigma) = \max(0, 1 - h(Q_X(\sigma)) - h(Q_Z(\sigma))),
\label{eqn:skf}
\end{equation}
where $h(x) = -x\log_2(x) - (1-x) \log_2(1-x)$ is the binary entropy function and $Q_X(\sigma)$ and $Q_Z(\sigma)$ are the QBER of the state $\sigma$ in the X and Z basis respectively.
While the secret-key fraction is one in the perfect case when both QBERs are zero, it will become zero when the error rates are too large. \bd{I'm not sure I agree with this statement: if $Q_X = 1$ and $Q_Z = 0$, then the SKF is also one.} We note that, the two Pauli bases used throughout the protocol (in this case $X$ and $Z$) may be chosen from the Pauli bases. For example, if the $X$ and $Y$ bases are chosen instead, then the secret-key fraction (\ref{eqn:skf}) will instead depend on the QBER in the $Y$-basis, $Q_Y(\sigma)$. The secret-key fraction is invariant under the Bell-diagonal approximation, 
\begin{equation}
    \text{SKF}(\sigma) = \text{SKF}(\mathcal{B}(\sigma)).
    \label{eqn:SKF_inv_twirling}
\end{equation}
For a proof of this, we refer to Appendix \ref{app:SKF}.
Let $\lambda_{ij}$ be the Bell-diagonal elements of $\sigma$, such that 
\begin{equation}
    \mathcal{B}(\sigma) = \sum_{i,j=0}^1\lambda_{ij} \ketbra{\Psi_{ij}}.
\end{equation}
Then, the QBER in each measurement basis is given by 
\begin{align}
    Q_X(\sigma) &= \lambda_{01} + \lambda_{11} \label{eqn:qberX}\\ 
    Q_Y(\sigma) &= \lambda_{10} + \lambda_{01} \label{eqn:qberY} \\ 
    Q_Z(\sigma) &= \lambda_{10} + \lambda_{11}. \label{eqn:qberZ}
\end{align}
\\
Given our setup, we compare two different ways in which the end nodes can distil a secret key.
The first option is to process measurement outcomes without keeping track of the syndrome.
We call this the \textit{non-postselected protocol}, and has secret-key fraction $\text{SKF}(\rho')$, where
\begin{equation}
   \rho' = \sum_{\vec s} p'_{\vec s} \rho'_{\vec{s}} =  \Lambda_{\mathcal{P}}(\rho^{\otimes N}),
\end{equation}
and $\Lambda_{\mathcal{P}}(\rho^{\otimes N})$ is the channel induced by non-postselected swapping with $\mathcal{P}$. By (\ref{eqn:SKF_inv_twirling}) and Theorem \ref{thm:swap_and_correct_equivalence}, we have 
\begin{equation}
    \text{SKF}(\rho') = \text{SKF}(\mathcal{B}(\rho')) = \text{SKF}(\rho'_{\mathcal{B}}),
\end{equation}
where $\rho'_{\mathcal{B}}$ is the output state with the Bell-diagonal approximation.
In particular, the secret-key fraction with the non-postselected protocol is exactly what is obtained with the Bell-diagonal approximation.

The second option for the distillation of secret key is to divide all measurement outcomes into different blocks based on their corresponding syndromes and process each block separately.
We call this the \textit{postselected protocol} and its secret-key fraction can be calculated as
\begin{equation}
\sum_{\vec s} p'_{\vec s} \: \text{SKF}(\rho'_{\vec{s}}).
\end{equation}
Because the SKF function is convex within the domain where it is nonzero, we have
\begin{equation}
    \sum_{\vec s} p'_{\vec s} \: \text{SKF}(\rho'_{\vec{s}}) \geq \text{SKF}(\rho'). 
\end{equation}


We consider two types of initial states: firstly, we consider the state $\rho_{\mathrm{opt}}$ from (\ref{eqn:optimal_state_main_text}) that achieves the highest end-to-end fidelity. Secondly, we consider the R state $\rho_R$ from (\ref{eqn:R_state}). The QKD measurement basis for each state was the one found to provide maximum secret-key fraction.
The results can be seen in Figure~\ref{fig:skf}.
We see from Figure \ref{fig:skf_opt_state} that, when the initial states are $\rho_{\mathrm{opt}}^{\otimes N}$, the Bell-diagonal approximation (equivalently, the non-postselected protocol) causes a significant reduction in the secret-key fraction, especially for repeater chains with a larger number of initial states $N$. By contrast, from Figure \ref{fig:skf_r_state} we see that when the initial states are $\rho_{R}^{\otimes N}$, the Bell-diagonal approximation (non-postselected protocol) causes a negligible reduction in the resulting secret-key fraction. This behaviour reflects the discussion in Section \ref{sec:bounds_discussion}, where we saw that there is not a significant difference in end-to-end fidelity from the Bell-diagonal approximation when swapping $\rho_R^{\otimes 2}$ (see (\ref{eqn:post_swap_fidelity_orth_states_1})). This is in contrast to swapping $\rho_{\mathrm{opt}}^{\otimes 2}$, which admits the greatest possible variation in end-to-end fidelity above the Bell-diagonal approximation.

We see in Figure \ref{fig:skf_opt_state} that the secret-key fraction is reduced drastically when the Werner approximation is used for the initial states $\mathcal{W}(\rho_{\mathrm{opt}})^{\otimes N}$, and we see that the length of the chain over which it is possible to distil key is limited to $N=3$ initial states. By contrast, with the Bell-diagonal approximation, one can distil key for any length of chain. The reason for this is as follows: we note that $\mathcal{B}(\rho_{\mathrm{opt}}) = F\ketbra{\Psi_{00}} + (1-F)\ketbra{\Psi_{11}}$ is a rank-two Bell-diagonal state. Then, recalling Lemma \ref{lem:bound_e2e_fidelity_BD} and surrounding discussion, the resulting state $\rho'_{\mathrm{opt}}$ after the non-postselected swapping of $\rho_{\mathrm{opt}}^{\otimes N}$ has Bell-diagonal components
\begin{align}
   \mathcal{B}\left(\rho'_{\mathrm{opt}}\right) = &\left(\frac{1}{2} + \frac{1}{2}(2F-1)^N \right)\ketbra{\Psi_{00}} \\ &+ \left(\frac{1}{2} - \frac{1}{2}(2F-1)^N \right)\ketbra{\Psi_{11}}.
\end{align}
By (\ref{eqn:qberX})-(\ref{eqn:qberZ}), the resulting QBER in each basis is then
\begin{align}
    Q_X(\rho'_{\mathrm{opt}}) &= \frac{1}{2} - \frac{1}{2}(2F-1)^N \\ 
    Q_Y(\rho'_{\mathrm{opt}}) &= 0                                  \\ 
    Q_Z(\rho'_{\mathrm{opt}}) &= \frac{1}{2} - \frac{1}{2}(2F-1)^N.
\end{align}
Choosing to measure in the $X$ and $Y$ bases then provides the highest secret-key fraction, given by 
\begin{align}
    \text{SKF}\left(\rho'_{\mathrm{opt}}\right) &= 1 - h\left(\frac{1}{2} - \frac{1}{2}(2F-1)^N\right) \\ &> 1 - h\left(\frac{1}{2}\right) = 0. \label{eqn:skf_rho_opt>0}
\end{align}
In particular, the fact that $\mathcal{B}(\rho_{\mathrm{opt}})$ is of rank two means that the secret-key fraction is greater than zero for any number of initial states $N$ in the chain. By contrast, with the Werner approximation for the initial states $\mathcal{W}(\rho_{\mathrm{opt}})^{\otimes N}$, by Corollary \ref{cor:swap_werner} the end-to-end state $\rho'_{\mathcal{W}}$ is also Werner with fidelity given in (\ref{eqn:werner_fidelity_N_swaps}). The corresponding QBER for each basis is then 
\begin{align}
    Q_X(\rho'_{\mathcal{W}}) = Q_Y(\rho'_{\mathcal{W}})  &= Q_Z(\rho'_{\mathcal{W}}) \nonumber \\  &= \frac{1}{2} - \frac{1}{2} \left(\frac{4F-1}{3} \right)^N.
\end{align}
We therefore see that 
\begin{equation}
    \text{SKF}(\rho_{\mathcal{W}}') = \max \! \left(0,  1- 2h\!\left(\frac{1}{2} - \frac{1}{2} \left(\frac{4F-1}{3} \right)^N\right) \right) ,
\end{equation}
which will eventually decrease to zero as $N$ increases. Since in Figure \ref{fig:skf_r_state}, the initial states $\rho_R$ are each set to have the same fidelity $\bra{\Psi_{00}}\rho_R \ket{\Psi_{00}} = \bra{\Psi_{00}}\rho_{\mathrm{opt}} \ket{\Psi_{00}} = 0.95$, we have $ \mathcal{W}(\rho_R)=\mathcal{W}(\rho_{\mathrm{opt}})$, and the Werner approximation gives the same result in both cases.

When the initial states are instead R states $\rho_R^{\otimes N}$, in contrast to the case of the optimal states, the secret-key fraction with the Bell-diagonal approximation will eventually reach zero.  This is because the Bell-diagonal approximation of an R state is given by 
\begin{align}
    \mathcal{B}(\rho_R) = F\ketbra{\Psi_{00}}  + &\frac{1}{2}(1-F) \ketbra{\Psi_{10}} \nonumber \\ &+ \frac{1}{2}(1-F) \ketbra{\Psi_{11}},
\end{align}
and this has rank three. From the map (\ref{eqn:swap_bell_diag_outcome}), it can be seen that swapping identical rank-three Bell-diagonal states results in a rank-four state. Therefore, the non-postselected outcome of swapping the states $\rho_R^{\otimes N}$ will result in an end-to-end state $\rho'_R$ such that $\mathcal{B}(\rho'_R)$ has rank four. In particular, the secret-key fraction will eventually decrease to zero as the number of swaps $N$ increases, unlike the behaviour we saw for $\rho_{\mathrm{opt}}$ in (\ref{eqn:skf_rho_opt>0}), where the secret-key fraction was always positive since the end-to-end state was always within the rank-two subspace.



\section{Conclusion}
\label{sec:conclusion}
We have seen that, for non-postselected swapping, using twirled approximations in a repeater chain can be exact or highly accurate in certain important scenarios. In particular, the Bell-diagonal approximation is exact for evaluating the Bell-diagonal components of the end-to-end state, and in many scenarios, non-postselected swapping and Bell-diagonal twirling are equivalent. Moreover, for non-postselected swapping the Werner approximation is accurate in a high-fidelity regime compared to the number of initial states in the chain. The disadvantages of twirled approximations mostly arise when postselecting on the BSM measurement outcome. For postselected swapping, we have presented bounds on the end-to-end fidelity, given a general noisy form for the initial states when there are $N=2$ initial states in the chain. 
With an example of evaluating the secret-key fraction when performing QKD, we demonstrated how the insights from our work may be used to determine whether the twirled approximation is accurate in a given scenario. 

\begin{acknowledgments}
We thank S{\'e}bastian de Bone, Filip Rozp\k{e}dek, Kenneth Goodenough and Hemant Sharma for helpful discussions. We also thank Álvaro G. Iñesta and Luca Marchese for proofreading the manuscript and providing feedback. BD acknowledges support from the KNAW Ammodo award (SW). GA and SW acknowledge support from the Quantum Internet Alliance (EU Horizon Europe grant agreement No. 101102140). GA acknowledges support from NSF CQN grant number 1941583, NSF grant number 2346089, and NSF grant number 2402861.
\end{acknowledgments}

\bibliography{refs}

\onecolumngrid
\appendix
\newpage
\counterwithin{definition}{section}
\counterwithin{proposition}{section}
\counterwithin{lemma}{section}
\counterwithin{theorem}{section}
\counterwithin{corollary}{section}

\section{Non-postselected swapping}
\label{app:unconditional_swapping}
\subsection{Repeater chains with $N=2$}
\label{app:unconditional_swapping_N=2}
\begin{proof}[Proof of Corollary \ref{cor:teleportation_pauli_channel}]
    Let the Bell-diagonal elements of $\rho$ be given by $\lambda_{ij}$, as in (\ref{eqn:twirled_approx_BD}). Then, by (\ref{eqn:twirl_inside_operators}), we have 
\begin{align}
     \Lambda_{\rho}^{\mathrm{tel}}(\sigma)\! &= 4 \bra{\Psi_{00}} \sigma_C \otimes \mathcal{B}(\rho_{AB}) \ket{\Psi_{00}}_{CA} \\ &= 4\sum_{i,j=0}^1 \lambda_{ij} \bra{\Psi_{00}} \sigma_C \otimes \ketbra{\Psi_{ij}}_{AB} \ket{\Psi_{00}} \\ &= \sum_{i,j=0}^1 \lambda_{ij} X^i Z^j \sigma  (X^i Z^j)^{\dag},
\end{align}
where in the final step we have used
\begin{align}
\bra{\Psi_{00}} \sigma_C \otimes \ketbra{\Psi_{ij}}_{AB} \ket{\Psi_{00}}_{CA}  &=  (X^i Z^j)_B \bra{\Psi_{00}} \sigma_C \otimes \ketbra{\Psi_{00}}_{AB}\! \ket{\Psi_{00}}_{CA}  (X^i Z^j)^{\dag }_B,
\end{align}
and the noticed that $$\bra{\Psi_{00}} \sigma_C \otimes \ketbra{\Psi_{00}}_{AB}\! \ket{\Psi_{00}}_{CA} = \frac{1}{4} X^i Z^j \sigma (X^i Z^j)^{\dag }$$
is the (non-normalised) result after the perfect teleportation of $\sigma_C$.
\end{proof}
\begin{proof}[Proof of Lemma \ref{lem:swap_bell_diagonal}]
In this proof, we make use of the flip-flop trick, which is that for any linear operator $M$, we have 
\begin{equation}
    M \otimes I \ket{\Psi_{00}} = I \otimes M^T \ket{\Psi_{00}}.
    \label{eqn:flip-flop}
\end{equation}
 This is also known as the flip-flop trick.

Suppose that in the entanglement swap, the BSM outcome is $mn$. The output state is then given by 
\begin{equation}
   \rho'_{mn}  = \frac{L_{mn}}{\Tr \left[ L_{mn} \right]},
   \label{eqn:sigma_L_mn}
\end{equation}
where 
\begin{equation}
    L_{mn} =  (Z^n X^m)_{B_2} \bra{\Psi_{mn}} \mathcal{B}(\rho_1) \otimes  \mathcal{B}(\rho_2) \ket{\Psi_{mn}}_{A_1 A_2} (X^m Z^n)_{B_2}.
    \label{eqn:L_mn}
\end{equation}
We now compute $L_{mn}$. We firstly consider the impact of each diagonal element:
\begin{align}
(Z^n X^m)_{B_2}\bra{\Psi_{mn}}_{A_1 A_2} & \left[ \ket{\Psi_{i_1 j_1}}_{B_1 A_1}\otimes \ket{\Psi_{i_2 j_2}}_{A_2 B_2} \right] \nonumber \\  &= \bra{\Psi_{00}}_{A_1 A_2} (Z^n X^m)_{A_2} \left[(X^{i_1} Z^{j_1})_{A_1} (X^{i_2} Z^{j_2})_{A_2} (Z^m X^m)_{B_2}\ket{\Psi_{00}}_{B_1 A_1}\otimes \ket{\Psi_{00}}_{A_2 B_2} \right] \nonumber \\ 
&\stackrel{a}{=} (Z^n X^m Z^{j_2}X^{i_2} X^m Z^n X^{i_1} Z^{j_1})_{B_2}  \bra{\Psi_{00}}_{A_1 A_2} \Big[ \ket{\Psi_{00}}_{B_1 A_1}\otimes \ket{\Psi_{00}}_{A_2 B_2} \Big] \nonumber \\
&\stackrel{b}{=} \pm 1 \cdot \left(X^{2m + i_1 + i_2} Z^{2n + j_1 + j_2}\right)_{B_2} \cdot \frac{1}{2} \ket{\Psi_{00}}_{B_1 B_2} \nonumber\\ &\stackrel{c}{=} (\pm 1)\cdot \frac{1}{2} \ket{\Psi_{i_1+i_2,j_1+j_2}}_{B_1 B_2},
\label{eqn:swap_pure_bell_states_indices}
\end{align}
where the addition in the subscript is modulo 2. In step ($a$), we have made use of the flip-flop trick multiple times to move all Pauli operators onto register $B_2$. In step ($b$), we have used the fact that
\begin{equation}
    \bra{\Psi_{00}}_{A_1 A_2} \Big[ \ket{\Psi_{00}}_{B_1 A_1}\otimes \ket{\Psi_{00}}_{A_2 B_2} \Big]  = \frac{1}{2}\ket{\Psi_{00}}_{B_1 B_2} \label{eqn:project_Psi_00_on_Psi_00x2}
\end{equation}
and that reordering Pauli operators may sometimes incur a factor of $-1$. In step $(c)$, we have used the definition (\ref{eqn:bell_basis}) of the Bell basis. Relabelling the eigenvalues as $\mathcal{B}(\rho_1)  = \sum_{i,j}\lambda_{ij} \ketbra{\Psi_{ij}}$ and  $\mathcal{B}(\rho_1)  = \sum_{i,j}\mu_{ij} \ketbra{\Psi_{ij}}$,  from (\ref{eqn:L_mn}) we see that 
\begin{align*}
    L_{mn} &= \sum_{i_1,j_1,i_2,j_2} \lambda_{i_1 j_1} \mu_{i_2 j_2} Z^m X^n \bra{\Psi_{mn}}_{A_1 A_2} \left[ \ketbra{\Psi_{i_1 j_1}}_{B_1 A_1} \otimes  \ketbra{\Psi_{i_2 j_2}}_{A_2 B_2} \right] \ket{\Psi_{mn}}_{A_2 B_2}X^n Z^m \\ &=  \sum_{i_1,j_1,i_2,j_2}  \lambda_{i_1 j_1} \mu_{i_2 j_2} \cdot (\pm 1)^2 \cdot \frac{1}{4} \ketbra{\Psi_{i_1+i_2,j_1+j_2}}_{B_1 A_1},
\end{align*}
and so 
\begin{align*}
    \Tr \left[ L_{mn} \right] = \sum_{i_1,j_1,i_2,j_2} \lambda_{i_1 j_1} \mu_{i_2 j_2} \cdot \frac{1}{4} = \frac{1}{4} = p'_{mn},
\end{align*}
due to normalisation of the initial states. In the above, $p'_{mn}$ is the probability of obtaining outcome $mn$ in the BSM. From (\ref{eqn:sigma_L_mn}), we therefore see that the full swap outcome after measuring $mn$ is 
\begin{equation*}
    \rho'_{B_1 B_1}  =  \sum_{i_1,j_1,i_2,j_2}  \lambda_{i_1 j_1} \mu_{i_2 j_2} \ketbra{\Psi_{i_1+i_2,j_1+j_2}}_{B_1 B_2}.
\end{equation*}
In particular, this is Bell-diagonal and independent of the measurement outcome. In the four-vector notation from (\ref{eqn:BD_as_4-vector}), for  $\mathcal{B}(\rho_1) \equiv (\lambda_0,\dots,\lambda_3)^T$ and $\mathcal{B}(\rho_2) \equiv (\mu_0,\dots,\mu_3)^T$, the end-to-end state is $\rho'_{mn} \equiv (\lambda_0',\dots,\lambda_3')^T$, where
    \begin{equation}
\left( \begin{array}{c}
    \lambda_0'   \\
    \lambda_1'   \\
    \lambda_2'   \\
    \lambda_3'   \\
\end{array} \right)  = 
\left( \begin{array}{c}
    \lambda_0 \mu_0 + \lambda_1 \mu_1 + \lambda_2 \mu_2 + \lambda_3 \mu_3   \\
    \lambda_0 \mu_1 + \lambda_1 \mu_0 + \lambda_2 \mu_3 + \lambda_3 \mu_2   \\
    \lambda_0 \mu_2 + \lambda_2 \mu_0 + \lambda_3 \mu_1 + \lambda_1 \mu_3   \\
    \lambda_0 \mu_3 + \lambda_3 \mu_0 + \lambda_1 \mu_2 + \lambda_2 \mu_1   
\end{array} \right) \equiv \rho'_{\mathcal{B}}.
\end{equation}
\end{proof}

\begin{proof}[Proof of Corollary \ref{cor:swap_inv_twirling}]
 Recalling Definition \ref{def:entanglement_swapping_protocol}, the average outcome state of a standard entanglement swap on $\rho_1\otimes \rho_2$ is given by
    \begin{equation}
         \rho' = I_2 \otimes \Lambda_{\rho_2}^{\mathrm{tel}}(\rho_1) = I_2 \otimes \Lambda_{\mathcal{B}(\rho_2)}^{\mathrm{tel}}(\rho_1).
    \end{equation}
    Now, letting $M \coloneqq  \rho_1 - \mathcal{B}(\rho_1)$ be the operator containing the off-diagonal components of $\rho_1$, by linearity it follows that
    \begin{equation}
        \rho' = I_2 \otimes \Lambda_{\mathcal{B}(\rho_2)}^{\mathrm{tel}}\left(\mathcal{B}(\rho_1) +M\right) = I_2 \otimes \Lambda_{\mathcal{B}(\rho_2)}^{\mathrm{tel}}\left(\mathcal{B}(\rho_1)\right) + I_2 \otimes \Lambda_{\mathcal{B}(\rho_2)}^{\mathrm{tel}}(M).
        \label{eqn:split_rho_1_orthog}
    \end{equation}
    We now claim that $\mathcal{B}\left( I_2 \otimes \Lambda_{\mathcal{B}(\rho_2)}^{\mathrm{tel}}(M) \right) = 0$. Denoting $\Psi_{ab} \equiv \ketbra{\Psi_{ab}}$, by linearity in $\rho_2$ and $M$ it suffices to show that the terms corresponding to basis elements vanish, 
    \begin{equation}
        \mathcal{B}\left( I_2 \otimes \Lambda_{\Psi_{ab}}^{\mathrm{tel}}(\ket{\Psi_{ij}}\bra{\Psi_{kl}}) \right) = 0
    \end{equation}
    for $\ket{\Psi_{ij}} \neq \ket{\Psi_{kl}}$. To show this, we recall from the expression (\ref{eqn:teleportation_channel}) for the standard teleportation channel that 
    \begin{equation}
        I_2 \otimes \Lambda_{\Psi_{ab}}^{\mathrm{tel}} (\ket{\Psi_{ij}}\bra{\Psi_{kl}}) = \sum_{m,n} K_{mn}
    \end{equation}
    where 
    \begin{equation}
        K_{mn} \coloneqq Z^m X^n \bra{\Psi_{mn}}_{A_1 A_2} \left[ \ketbra{\Psi_{ab}}_{B_1 A_1} \otimes  \ket{\Psi_{ij}}\bra{\Psi_{kl}}_{A_2 B_2} \right] \ket{\Psi_{mn}}_{A_1 A_2} X^n Z^m.
    \end{equation}
    Using (\ref{eqn:swap_pure_bell_states_indices}), we see that
    \begin{align}
       K_{mn} =  \pm  \frac{1}{4} \ket{\Psi_{a+i,b+j}}\bra{\Psi_{a+k,b+l}},
    \end{align}
    which is off-diagonal (because we have assumed $\ket{\Psi_{ij}} \neq \ket{\Psi_{kl}}$). Therefore,  $\mathcal{B}(K_{mn}) = 0$. It then follows that 
    \begin{equation}
        \mathcal{B}\left(I_2 \otimes \Lambda_{\Psi_{ab}}^{\mathrm{tel}}(\ket{\Psi_{ij}}\bra{\Psi_{kl}}) \right)= \mathcal{B}\left( \sum_{m,n} K_{mn}\right) =\sum_{m,n} \mathcal{B}\left(  K_{mn}\right) =  0,
    \end{equation}
    and therefore
    \begin{equation}
        \mathcal{B}\left( I_2 \otimes \Lambda_{\mathcal{B}(\rho_2)}^{\mathrm{tel}}(M) \right) = 0.
    \end{equation}
    From (\ref{eqn:split_rho_1_orthog}), applying $\mathcal{B}$ therefore results in
    \begin{align}
        \mathcal{B}(\rho') &= \mathcal{B}\left( I_2 \otimes \Lambda_{\mathcal{B}(\rho_2)}^{\mathrm{tel}}(\mathcal{B}(\rho_1)) \right) \\ &= \mathcal{B}\left( \rho'_{\mathcal{B}} \right) = \rho'_{\mathcal{B}},
    \end{align}
    where we have made use of (\ref{eqn:unconditional_swap_BD}), where we saw that the result of a non-postselected swap on Bell-diagonal states $\mathcal{B}(\rho_1)\otimes \mathcal{B}(\rho_2)$ is the same as a postselected swap. The result of the postselected swap is the state $\rho'_{\mathcal{B}}$ obtained in Lemma \ref{lem:swap_bell_diagonal}.
\end{proof}

\subsection{Non-postselected swapping on repeater chains with $N>2$}
\label{app:swap_and_correct}
\begin{definition}[Swap-and-correct protocol, technical]
    For a length-$N$ repeater chain, a \textit{swap-and-correct} protocol $\mathcal{P}$ dictates where to apply Pauli corrections. Given the $N-1$ BSM outcomes that form the syndrome $\vec{s}$, $\mathcal{P}$ is a map 
    \begin{equation}
        \mathcal{P} : \{I,X,Z,XZ\}^{N-1} \rightarrow \{I,X,Z,XZ\}^{N+1}
    \end{equation}
     such that $\mathcal{P}_k(\vec{s})$ is the correction applied to node $k$, given the syndrome $\vec{s}$. The syndrome is denoted such that $s_i \equiv X^m Z^n$ means that outcome $\ket{\Psi_{mn}}$ was measured on node $i$. Moreover, $\mathcal{P}$ satisfies the following two properties:
     \begin{enumerate}[(A)]
         \item \textbf{$\mathcal{P}$ is physically implementable.} For any swap-and-correct protocol $\mathcal{P}$, there exists an associated permutation $\alpha \in \mathrm{Sym}(N-1)$ in which the $N-1$ BSMs are carried out, where $\mathrm{Sym}$ denotes the symmetric group. For $\mathcal{P}$ to be physical, then before the $k$th BSM, the correction $\mathcal{P}_{\alpha(k)}$ must only depend on outcomes of BSMs that have already been carried out, which are given by $(\alpha(1),\dots,\alpha(k-1))$. 
         \item \textbf{$\mathcal{P}$ is correct.} For any syndrome $\vec{s}$, $\mathcal{P}$ transforms $\ketbra{\Psi_{00}}^{\otimes N}$ into $\ketbra{\Psi_{00}}$.
     \end{enumerate}
     \label{def:swap_and_correct_protocol_technical}
\end{definition}
By the assumption (A), we slightly abuse notation to write $ \mathcal{P}_{\alpha(k)}(s_{\alpha(1)},\dots,s_{\alpha(k-1)}) \equiv \mathcal{P}_{\alpha(k)}(\vec s)$. Note that the first correction, $\mathcal{P}_{\alpha(1)}$, is independent of $\vec s$.

In particular, given a swap-and-correct protocol $\mathcal{P}$, it may be executed as follows. Given syndrome $\vec s$,
\begin{enumerate}[(1)]
    \item Apply correction $\mathcal{P}_{\alpha(1)} \equiv \mathcal{P}_{\alpha(1)}(\vec s)$ to node $\alpha(1)$. Apply BSM at node $\alpha(1)$ to get outcome $s_{\alpha(1)}$.
    \item Apply correction $\mathcal{P}_{\alpha(2)}(s_{\alpha(1)})\equiv \mathcal{P}_{\alpha(2)}(\vec s)$ to node $\alpha(2)$. Apply BSM at node $\alpha(2)$ to get outcome $s_{\alpha(2)}$.
    \item[$\vdots$] \phantom{text} 
    \item[$(N\!-\!1)$] Apply correction $\mathcal{P}_{\alpha(N-1)}(s_{\alpha(1)},\dots,s_{\alpha(N)}) \equiv \mathcal{P}_{\alpha(N-1)}(\vec s)$ to node $\alpha(N-1)$. Apply BSM at node $\alpha(N-1)$ to get outcome $s_{\alpha(N-1)}$.
    \item[$(N)$] Apply corrections $\mathcal{P}_{0}(\vec{s})$ and $\mathcal{P}_{N}(\vec{s})$ to nodes $0$ and $N$.
\end{enumerate}
We note that in principle, corrections may be applied at any point in the protocol up to the BSM on that node. Similarly, corrections may be applied at the end nodes at any point in the protocol. However, both strategies are captured by the above formalism by simply combining all corrections and applying them just before the BSM (for the repeater nodes), and after all BSMs (for the end nodes).


In the following, we consider swap-and-correct protocols $\mathcal{P}$. We also use $A= \{I,X,Z,XZ\}$ as shorthand for the set of Pauli matrices (up to a phase). 
As discussed above, such a protocol consists of Bell-state measurements and Pauli corrections, potentially to the repeater nodes as well as the end nodes
For the end nodes, which have indices $0$ and $N$, there is only one qubit that this can be applied to. However, the repeater nodes $1,\dots, N-1$ each hold two qubits, and so there is a choice of which qubit to apply the correction. We now show that applying the correction to either qubit will give the same result. This allows us to simplify notation later on.

Let the qubit registers in the $k$th node be denoted as $k_1$ and $k_2$. Given a swap-and-correct protocol with associated permutation $\alpha \in \mathrm{Sym}(N-1)$, the correction at the $k$th node is $\mathcal{P}_{\alpha^{-1}(k)}(\vec{s})$. Directly after the correction, the BSM on node $k$ will be applied. Suppose that the correction is applied to register $k_1$, and the BSM outcome is $s\in A$.
Letting $c = \mathcal{P}_{\alpha^{-1}(k)}(\vec{s})\in A$ denote the correction, the resulting projection on the total state $\rho$ of the chain is then 
\begin{align}
    \Tr_{k_1 k_2} \left[ s_{k_2}\ketbra{\Psi_{00}}_{k_1 k_2} s_{k_2}^{\dag} c_{k_1} \rho c_{k_1}^{\dag} \right] &\stackrel{\mathrm{i}}{=} \Tr_{k_1 k_2} \left[ s_{k_2} c_{k_2}\ketbra{\Psi_{00}}_{k_1 k_2} c_{k_2}^{\dag} s_{k_2}^{\dag}  \rho  \right] \\ 
    &\stackrel{\mathrm{ii}}{=} \Tr_{k_1 k_2} \left[  c_{k_2} s_{k_2} \ketbra{\Psi_{00}}_{k_1 k_2} s_{k_2}^{\dag} c_{k_2}^{\dag}  \rho  \right] \\ 
    &\stackrel{\mathrm{iii}}{=} \Tr_{k_1 k_2} \left[   s_{k_2} \ketbra{\Psi_{00}}_{k_1 k_2} s_{k_2}^{\dag} c_{k_2}  \rho c_{k_2}^{\dag} \right] \label{eqn:Pauli_correction_k_2}
\end{align}
where we have (i) used cyclicity of the trace and the flip-flop trick (\ref{eqn:flip-flop}), (ii) used the fact that $sc = \pm cs$ for $c,s \in A$,  and (iii) used cyclicity of the trace and $s^{\dag} = \pm s$, for $s\in A$.
In particular, we notice that (\ref{eqn:Pauli_correction_k_2}) corresponds to applying the Pauli correction to register $k_2$. We therefore see that, as long as a BSM is applied after the correction, it does not matter to which qubit the correction is applied. The same holds for Pauli operators arising from Bell-diagonal twirling (see Definition \ref{def:bell_diag_twirl_s&c}).
\begin{definition}
    For a unitary $U$ we denote the corresponding channel where the unitary is applied as
\begin{equation}
    U(\rho) = U\rho U^{\dagger}.
    \label{eqn:single_pauli_channel}
\end{equation}
\end{definition}
We will use the above notation principally for $U \in A$. 
\begin{definition} 
For any map $\Lambda$ acting on two-qubit states, we denote $(\Lambda)_k$ to be this channel applied to the two qubits in node $k$. 
\end{definition}
In particular, $(s)_k$ is the Pauli correction $s$ applied to node $k$. 
\begin{definition}[Twirling of the $k$th state]
    For $k=1,\dots, N-1$ we denote the twirling map of the $k$th state in the chain (shared between nodes $k-1$ and $k$) as 
\begin{equation}
    \mathcal{B}_{k-1,k} = \frac{1}{4} \sum_{s\in A} (s)_{k-1} \circ (s)_k.
    \label{eqn:bell_diag_twirl_s&c}
\end{equation}
\label{def:bell_diag_twirl_s&c}
\end{definition}
In (\ref{eqn:bell_diag_twirl_s&c}), we have recalled the Bell-diagonal twirling map from Lemma \ref{lem:pauli_twirling}.
To simplify notation, in (\ref{eqn:bell_diag_twirl_s&c}) we have not specified the specific qubit of each repeater nodes to which twirling is applied, because we will always be interested in the case where the map $\mathcal{B}_{k-1,k}$ is applied \textit{before} the BSMs. By the same argument used to obtain (\ref{eqn:Pauli_correction_k_2}), applying the correction to either qubit of the repeater node is equivalent. 

\begin{lemma}[Properties of Pauli operators and Bell-diagonal twirling]
    The following properties hold:
    \begin{enumerate}[(i)]
        \item For $s_1, s_2 \in A$, 
        \begin{equation}
        (s_1)_k \circ (s_2)_k = (s_1 s_2)_k = (s_2 s_1)_k = (s_2)_k \circ (s_1)_k.
        \label{eqn:single_pauli_channels_commute}
    \end{equation}
        \item For $s\in A$, $(s^{\dagger})_k = (s)_k$.
        \item For $s\in A$, we have $(s)_{k-1} \circ \mathcal{B}_{k-1,k} = \mathcal{B}_{k-1,k} \circ (s)_k$.
    \end{enumerate}
    \label{lem:properties_paulis_and_twirl}
\end{lemma}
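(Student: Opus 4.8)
The plan is to reduce all three statements to two elementary facts: the channel $(\cdot)$ of (\ref{eqn:single_pauli_channel}) is invariant under multiplying its unitary by a global phase, and $A=\{I,X,Z,XZ\}$ is a complete set of representatives for the Pauli group modulo phases, which is an abelian group in which every element squares to the identity.

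For part (i) I would first note that $(s_1)_k\circ(s_2)_k=(s_1s_2)_k$ holds for arbitrary unitaries directly from (\ref{eqn:single_pauli_channel}), since $s_1s_2\rho\,s_2^{\dagger}s_1^{\dagger}=(s_1s_2)\rho(s_1s_2)^{\dagger}$. The middle equality $(s_1s_2)_k=(s_2s_1)_k$ then follows because any two elements of $A$ commute up to a sign, $s_1s_2=\pm s_2s_1$, and the channel does not see that sign; the last equality is the first one with the roles of $s_1$ and $s_2$ interchanged.

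For part (ii) I would apply part (i) twice: $(s)_k\circ(s^{\dagger})_k=(ss^{\dagger})_k=(I)_k=\mathrm{id}$, and also $(s)_k\circ(s)_k=(s^2)_k$. Since $s^2=\pm I$ for every $s\in A$ (one checks $I^2=X^2=Z^2=I$ and $(XZ)^2=-I$) and the channel ignores the phase, $(s^2)_k=\mathrm{id}$ too. Hence $(s)_k$ is a two-sided inverse of both $(s)_k$ and $(s^{\dagger})_k$, so composing the identity $(s)_k\circ(s)_k=(s)_k\circ(s^{\dagger})_k$ with $(s)_k$ on the left gives $(s)_k=(s^{\dagger})_k$.

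For part (iii) I would expand $\mathcal{B}_{k-1,k}=\tfrac14\sum_{t\in A}(t)_{k-1}\circ(t)_k$, use that channels acting on node $k-1$ and on node $k$ act on disjoint registers and therefore commute, and collapse the node-$(k-1)$ factors on the left and the node-$k$ factors on the right with part (i), obtaining $(s)_{k-1}\circ\mathcal{B}_{k-1,k}=\tfrac14\sum_{t\in A}(st)_{k-1}\circ(t)_k$ and $\mathcal{B}_{k-1,k}\circ(s)_k=\tfrac14\sum_{t\in A}(t)_{k-1}\circ(ts)_k$. It then remains to see that these two sums coincide, which is the only step needing an idea: since $A$ represents the abelian exponent-two group $\mathcal{P}/\text{phases}$ and the channels $(\cdot)$ descend to that quotient, the map $t\mapsto ts$ is a bijection of $A$ modulo phases whose inverse is itself; reindexing the second sum by $r\equiv ts$ (so $t\equiv rs$, hence $(t)_{k-1}=(rs)_{k-1}=(sr)_{k-1}$ by part (i)) turns it into $\tfrac14\sum_{r\in A}(sr)_{k-1}\circ(r)_k$, which is precisely the first sum. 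I expect this reindexing --- tracking the ``up to phase'' ambiguity and using that every element of $A$ is its own inverse in the quotient --- to be the only place where any care is required; the rest is bookkeeping with composition of channels.
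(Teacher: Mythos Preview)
Your proof is correct and follows essentially the same approach as the paper: part (i) via $s_1s_2=\pm s_2s_1$, part (iii) via expanding the twirl and reindexing the sum using that $A$ is closed under multiplication up to phase. The only cosmetic difference is in part (ii), where the paper simply observes $s^{\dagger}=\pm s$ for every $s\in A$ and hence $(s^{\dagger})_k=(s)_k$ directly, whereas you reach the same conclusion by the slightly longer route of noting that $(s)_k$ and $(s^{\dagger})_k$ are both inverses of $(s)_k$.
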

\begin{proof}
    \begin{enumerate}[($i$)]
        \item We use the fact that, although interchanging the order of the Pauli operators may incur a sign difference $ s_1 s_2 = \pm s_2 s_1$, this will not affect the channel (\ref{eqn:single_pauli_channel}) because the sign is global.
        \item We use that for any $s\in A$ we have $s^{\dagger } = \pm s$, and the incurred sign does not affect the channel.
        \item Recall the channel $\mathcal{B}_{k-1,k}$ from (\ref{eqn:bell_diag_twirl_s&c}). For $s\in A$, we have 
            \begin{align}
                (s)_{k-1} \circ \mathcal{B}_{k-1,k} &=  \frac{1}{4} \sum_{s'\in A} (ss')_{k-1} \circ (s')_k \\ &= \frac{1}{4} \sum_{r\in A} (r)_{k-1} \circ (s^{\dagger} r)_k \\ &= \mathcal{B}_{k-1,k} \circ (s)_k, 
            \end{align}
where in the second line we have made the change of variable $r = s s'$, and used properties $(i)$ and $(ii)$.
    \end{enumerate}
\end{proof}

\begin{definition}[Bell-state projection, single repeater node]
    For $k = 1,\dots , N-1$, suppose that BSM outcome $mn$ is obtained at node $k$. Let $s = X^m Z^n \in A$. We denote the unnormalised map corresponding to projection onto this Bell state as 
\begin{equation}
    \left(M_{s}\right)_k (\rho) = \Tr_{k} \left[ s\ketbra{\Psi_{00}}_k s^{\dag} \cdot \rho \right].
    \label{eqn:BSM_projection_map}
\end{equation}
\end{definition}
From (\ref{eqn:single_pauli_channel}), we note that (\ref{eqn:BSM_projection_map}) may be written as  
\begin{equation}
    \left(M_{s}\right)_k = (M_{I} \circ s)_k.
    \label{eqn:BSM_breakdown}
\end{equation}
\begin{definition}[Conditional projection, swap-and-correct protocol]
Let $\mathcal{P}$ be a swap-and-correct protocol for repeater chains with $N$ initial states and $N-1$ repeaters. 
    Let $\vec{s}\in A^{N-1}$ be the swap syndrome. We define
    \begin{equation}
    \Lambda_{\mathcal{P},\vec{s}} = \bigcirc_{k=1}^{N-1} (M_{s_k})_k   \circ \bigcirc_{j=0}^{N} (\mathcal{P}_{j}(\vec{s}))_j.
    \label{eqn:Lambda_P,S}
    \end{equation}
    to be the non-normalised map where syndrome $\vec s$ is measured and $\mathcal{P}$ is applied.
    \label{def:Lambda_P,s}
\end{definition}

\begin{lemma}[Correctness condition]
For any swap-and-correct protocol $\mathcal{P}$ and syndrome $\vec s$, the correctness condition (property B in Definition \ref{def:swap_and_correct_protocol_technical}) implies that
\begin{equation}
    \mathcal{P}_{0}(\vec{s}) \cdot \mathcal{P}_{N}(\vec{s}) \cdot \prod_{j = 1}^{N-1} s_j \mathcal{P}_{j}(\vec{s}) = \pm I.
    \label{eqn:condition_from_correctness}
\end{equation}
\label{lem:correctness_property}
\end{lemma}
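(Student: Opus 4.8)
The plan is to feed the perfect input $\ketbra{\Psi_{00}}^{\otimes N}$ through the non-normalised conditional map $\Lambda_{\mathcal{P},\vec s}$ of Definition \ref{def:Lambda_P,s} and to track the resulting (pure, unnormalised) state across the chain, in direct analogy with the computation (\ref{eqn:swap_pure_bell_states_indices})--(\ref{eqn:project_Psi_00_on_Psi_00x2}) in the proof of Lemma \ref{lem:swap_bell_diagonal}. Label the $k$-th copy of $\ket{\Psi_{00}}$ by registers $(L_k,R_k)$, with $L_k$ in node $k-1$ and $R_k$ in node $k$, so that node $0$ holds $L_1$, node $N$ holds $R_N$, and each repeater node $k$ holds $(R_k,L_{k+1})$. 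Using the decomposition (\ref{eqn:BSM_breakdown}), the BSM at node $k$ equals $(M_I\circ s_k)_k$, so $\Lambda_{\mathcal{P},\vec s}$ acts on $\ketbra{\Psi_{00}}^{\otimes N}$ by conjugating by $\mathcal{P}_0(\vec s)$ on $L_1$, by $\mathcal{P}_N(\vec s)$ on $R_N$, and by $s_k\mathcal{P}_k(\vec s)$ at each repeater node $k$ --- which, by the argument leading to (\ref{eqn:Pauli_correction_k_2}), may be placed on $R_k$ without loss of generality --- followed by the projection of each repeater node onto $\ketbra{\Psi_{00}}$.

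First I would collapse the chain. Applying the flip-flop trick (\ref{eqn:flip-flop}) to move every Pauli onto the $R$-registers, fusing consecutive Bell pairs via $\bra{\Psi_{00}}_{R_k L_{k+1}}\!\left(\ket{\Psi_{00}}_{L_k R_k}\otimes\ket{\Psi_{00}}_{L_{k+1}R_{k+1}}\right)=\tfrac12\ket{\Psi_{00}}_{L_k R_{k+1}}$ as in (\ref{eqn:project_Psi_00_on_Psi_00x2}), and using that every element of $A$ is self-transpose up to a sign and that elements of $A$ pairwise commute up to a sign (Lemma \ref{lem:properties_paulis_and_twirl}(i)), one obtains, with $P \coloneqq \pm\,\mathcal{P}_0(\vec s)\cdot\mathcal{P}_N(\vec s)\cdot\prod_{j=1}^{N-1}s_j\mathcal{P}_j(\vec s)$ (the product taken in any order, as its factors commute up to sign),
\begin{equation}
\Lambda_{\mathcal{P},\vec s}\!\left(\ketbra{\Psi_{00}}^{\otimes N}\right) = \frac{1}{4^{N-1}}\,(P)_{R_N}\,\ketbra{\Psi_{00}}_{L_1 R_N}\,(P)_{R_N}^{\dagger}.
\end{equation}
Then I would invoke the correctness property (B) of Definition \ref{def:swap_and_correct_protocol_technical}: the left-hand side must be proportional to $\ketbra{\Psi_{00}}$, hence $(P)_{R_N}\ket{\Psi_{00}}_{L_1 R_N}\propto\ket{\Psi_{00}}_{L_1 R_N}$. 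Writing $P=X^iZ^j$ up to a phase, the flip-flop trick gives $(X^iZ^j)_{R_N}\ket{\Psi_{00}}_{L_1 R_N}=\pm\ket{\Psi_{ij}}_{L_1 R_N}$, and since the four Bell states are orthonormal this forces $i=j=0$, i.e.\ $P=\pm I$, which is precisely (\ref{eqn:condition_from_correctness}).

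The only real work is the telescoping in the middle step: keeping track of which register each correction Pauli and each measurement Pauli occupies, pushing them all onto a common register with (\ref{eqn:flip-flop}) without worrying about signs, and checking that the $N$ Bell pairs collapse to a single one carrying the accumulated Pauli $P$ on $R_N$. This is a routine generalisation of (\ref{eqn:swap_pure_bell_states_indices}); alternatively it can be organised as an induction on $N$, peeling off node $N-1$ at each step, which reduces a length-$N$ chain to a length-$(N-1)$ chain whose right end-node correction becomes $s_{N-1}\mathcal{P}_{N-1}(\vec s)\,\mathcal{P}_N(\vec s)$, with immediate base case $N=1$ (no BSMs: correctness reads $(\mathcal{P}_1)_{R_1}(\mathcal{P}_0)_{L_1}\ket{\Psi_{00}}\propto\ket{\Psi_{00}}$, whence $\mathcal{P}_0\mathcal{P}_1=\pm I$).
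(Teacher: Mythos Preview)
Your proposal is correct and follows essentially the same route as the paper: apply $\Lambda_{\mathcal{P},\vec s}$ to the ideal state $\ketbra{\Psi_{00}}^{\otimes N}$, use the flip-flop trick to slide all Paulis onto a single end register, collapse the chain of Bell pairs to $\tfrac{1}{4^{N-1}}\ketbra{\Psi_{00}}$ on the end nodes, and read off from correctness that the accumulated Pauli must be $\pm I$. Your explicit register labelling and the optional inductive organisation are nice touches, but the core argument matches the paper's proof exactly.
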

\begin{proof}
Recalling (\ref{eqn:Lambda_P,S}), (\ref{eqn:BSM_breakdown}) and (\ref{eqn:single_pauli_channels_commute}), we may rewrite 
\begin{equation}
    \Lambda_{\mathcal{P},\vec{s}} = \bigcirc_{k=1}^{N-1} (M_{I})_k  \circ \bigcirc_{j=1}^{N-1} (s_j \mathcal{P}_{j}(\vec{s}))_j \circ (\mathcal{P}_{0}(\vec{s}))_0 \circ (\mathcal{P}_{N}(\vec{s}))_N.
    \label{eqn:Lambda_P,S_BSM_breakdown}
\end{equation}
We firstly note that, given $s\in A$ and the pure state $\ketbra{\Psi_{00}}_{i,j}$ shared between registers $i$ and $j$, we have
\begin{align}
     (s)_{i} \left( \ketbra{\Psi_{00}}_{i,j} \right) = (s)_{j} \left( \ketbra{\Psi_{00}}_{i,j}\right),
     \label{eqn:move_paulis_rho_ideal}
\end{align}
where we have used the flip-flop trick (\ref{eqn:flip-flop}).

Letting $\rho_{\mathrm{ideal}} = \ketbra{\Psi_{00}}^{\otimes N}$, it follows from (\ref{eqn:Lambda_P,S_BSM_breakdown}) and (\ref{eqn:move_paulis_rho_ideal}) that one may move all Pauli operators to act on node $N$ when the channel acts on this state,
    \begin{align}
        \Lambda_{\mathcal{P},\vec{s}}  \left( \rho_{\mathrm{ideal}}\right) &= \bigcirc_{k=1}^{N-1} (M_I)_k \circ \bigcirc_{j = 1}^{N-1} (s_j \mathcal{P}_{j}(\vec{s}))_N \circ (\mathcal{P}_{0}(\vec{s}))_N \circ (\mathcal{P}_{N}(\vec{s}))_N (\rho_{\mathrm{ideal}}) \nonumber \\ &= \bigcirc_{k=1}^{N-1} (M_I)_k \circ \Big( \mathcal{P}_{0}(\vec{s}) \cdot \mathcal{P}_{N}(\vec{s}) \cdot \prod_{j = 1}^{N-1} s_j \mathcal{P}_{j}(\vec{s}) \Big)_N (\rho_{\mathrm{ideal}}). \label{eqn:rho_ideal_paulis_at_end}
    \end{align}
Now, by recursively applying (\ref{eqn:project_Psi_00_on_Psi_00x2}), we see that  
\begin{equation}
    \bigcirc_{k=1}^{N-1} (M_I)_k (\rho_{\mathrm{ideal}}) = \frac{1}{4^{N-1}} \ketbra{\Psi_{00}}_{0,N}.
    \label{eqn:project_Psi_00_on_Psi_00xN}
\end{equation}
Then, (\ref{eqn:rho_ideal_paulis_at_end}) simplifies to 
\begin{equation}
    \Lambda_{\mathcal{P},\vec{s}}  \left( \rho_{\mathrm{ideal}}\right) = \frac{1}{4^{N-1}}\Big( \mathcal{P}_{0}(\vec{s}) \cdot \mathcal{P}_{N}(\vec{s}) \cdot \prod_{j = 1}^{N-1} s_j \mathcal{P}_{j}(\vec{s}) \Big)_N \left( \ketbra{\Psi_{00}}_{0,N}\right),
\end{equation}
from which we see that $\Tr \left[\Lambda_{\mathcal{P},\vec{s}}  \left( \rho_{\mathrm{ideal}}\right)\right] = 1/4^{N-1}$. The output state after measuring syndrome $\vec{s}$ and applying protocol $\mathcal{P}$ is then
\begin{equation}
    \frac{\Lambda_{\mathcal{P},\vec{s}}  \left( \rho_{\mathrm{ideal}}\right)}{\Tr \left[\Lambda_{\mathcal{P},\vec{s}}  \left( \rho_{\mathrm{ideal}}\right)\right]} = \Big( \mathcal{P}_{0}(\vec{s}) \cdot \mathcal{P}_{N}(\vec{s}) \cdot \prod_{j = 1}^{N-1} s_j \mathcal{P}_{j}(\vec{s}) \Big)_N \left( \ketbra{\Psi_{00}}_{0,N}\right).
\end{equation}
For the correctness condition $\Lambda_{\mathcal{P},\vec{s}}  \left( \rho_{\mathrm{ideal}}\right) = \ketbra{\Psi_{00}}_{0,N}$ to hold, we therefore require
\begin{equation}
    \mathcal{P}_{0}(\vec{s}) \cdot \mathcal{P}_{N}(\vec{s}) \cdot \prod_{j = 1}^{N-1} s_j \mathcal{P}_{j}(\vec{s}) = \pm I.
\end{equation}
\end{proof}

\begin{lemma}
    For any swap-and-correct protocol $\mathcal{P}$ and syndrome $\vec{s}$, we have
    \begin{equation}
        \Lambda_{\mathcal{P},\vec{s}} \circ \bigcirc_{k=1}^N \mathcal{B}_{k-1,1} = \Lambda_{\mathrm{seq},\vec{I}}  \circ \bigcirc_{k=1}^N \mathcal{B}_{k-1,1},
        \label{eqn:syndrome_independence_twirled_inputs}
    \end{equation}
where $\vec{I} = (I,\dots,I)$ is the syndrome with outcome $00$ at every repeater node, and $\mathrm{seq}$ is the sequential swapping protocol. In particular, if Bell-diagonal twirling is firstly applied to all initial states, then the end-to-end state is independent of $\vec{s}$ and $\mathcal{P}$. 
\label{lem:s&c_bell_diag_P,s_indep}
\end{lemma}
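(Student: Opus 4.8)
The plan is to put $\Lambda_{\mathcal{P},\vec s}$ into a normal form consisting of a fixed string of single‑qubit Pauli channels followed by the all‑$00$ Bell projections, then expand the twirl $\mathcal{T}:=\bigcirc_{k=1}^{N}\mathcal{B}_{k-1,k}$ as a sum over correlated Paulis, and finally absorb the fixed Pauli string into that sum by a reindexing whose feasibility is precisely the correctness condition of Lemma~\ref{lem:correctness_property}.

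First I would split each $(M_{s_k})_k=(M_I)_k\circ(s_k)_k$ using (\ref{eqn:BSM_breakdown}), commute the Pauli channels past each other and past the $(M_I)_{k'}$ on other nodes (Lemma~\ref{lem:properties_paulis_and_twirl}(i)), and merge $(s_k)_k$ with the correction $(\mathcal{P}_k(\vec s))_k$ on the same node, obtaining
\[
\Lambda_{\mathcal{P},\vec s}=\mathcal{M}\circ\mathcal{C},\qquad \mathcal{M}:=\bigcirc_{k=1}^{N-1}(M_I)_k,\qquad \mathcal{C}:=\bigcirc_{j=0}^{N}(c_j)_j,
\]
where $c_0:=\mathcal{P}_0(\vec s)$, $c_N:=\mathcal{P}_N(\vec s)$ and $c_k:=s_k\,\mathcal{P}_k(\vec s)$ for $1\le k\le N-1$ are \emph{fixed} Pauli operators (the syndrome $\vec s$ being fixed), all of whose channels commute. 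Lemma~\ref{lem:correctness_property} then says $c_0 c_N\prod_{j=1}^{N-1}c_j=\pm I$, which after reordering (allowed, up to sign, for Paulis) reads $c_0c_1\cdots c_N=\pm I$. On the other side $\Lambda_{\mathrm{seq},\vec I}=\mathcal{M}$, because every correction of the sequential protocol is trivial on the all‑$00$ syndrome; so it suffices to prove $\mathcal{M}\circ\mathcal{C}\circ\mathcal{T}=\mathcal{M}\circ\mathcal{T}$.

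Next I would write $\mathcal{T}=4^{-N}\sum_{\vec u\in A^{N}}\mathcal{T}_{\vec u}$, where $\mathcal{T}_{\vec u}$ is the composition of the single‑Pauli channels $(u_k)_{k-1}$ and $(u_k)_k$ coming from $\mathcal{B}_{k-1,k}$; these act on pairwise distinct physical qubits (each repeater node $k$ carries one qubit of state $k$ and one of state $k+1$), hence commute. Composing with $\mathcal{C}$ and invoking the fact established around (\ref{eqn:Pauli_correction_k_2}) — that which qubit of a repeater node a Pauli acts on is immaterial once that node's Bell projection is applied — I would, for each $k=1,\dots,N-1$, collapse the three Pauli channels sitting on node $k$ (namely $c_k$ from $\mathcal{C}$, $u_k$ from $\mathcal{B}_{k-1,k}$, and $u_{k+1}$ from $\mathcal{B}_{k,k+1}$) into a single $(c_k u_k u_{k+1})_k$ in front of $(M_I)_k$, while node $0$ carries $(c_0 u_1)_0$ and node $N$ carries $(c_N u_N)_N$. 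The identical collapse applied to $\mathcal{M}\circ\mathcal{T}$ alone leaves $(u_1)_0$, $(u_k u_{k+1})_k$ for $1\le k\le N-1$, and $(u_N)_N$. Now substitute $t_k:=c_0c_1\cdots c_{k-1}\,u_k$ (so $t_1=c_0 u_1$): since each $c_j$ is a fixed Pauli, $u_k\mapsto t_k$ is a bijection of $A$ up to phase, hence a legitimate reindexing of the sum, and a one‑line Pauli computation gives $c_0 u_1=\pm t_1$, $c_k u_k u_{k+1}=\pm t_k t_{k+1}$ (using $(c_0\cdots c_{k-1})^2=\pm I$), and $c_N u_N=\pm t_N$ — this last equality being exactly where $c_0c_1\cdots c_N=\pm I$ enters. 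Term by term the reindexed sum for $\mathcal{M}\circ\mathcal{C}\circ\mathcal{T}$ then coincides with the sum for $\mathcal{M}\circ\mathcal{T}$, which proves (\ref{eqn:syndrome_independence_twirled_inputs}); independence of $\vec s$ and $\mathcal{P}$ is immediate since $\Lambda_{\mathrm{seq},\vec I}\circ\mathcal{T}$ depends on neither.

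The main obstacle is the bookkeeping in the middle step: repeater nodes hold two qubits, the twirls $\mathcal{B}_{k-1,k}$ and $\mathcal{B}_{k,k+1}$ touch different qubits of node $k$, and $\mathcal{P}_k(\vec s)$ may be applied to either of them, so bringing everything to the normal form in which the reindexing applies requires repeatedly combining the commutation identities of Lemma~\ref{lem:properties_paulis_and_twirl} with the ``which‑qubit‑is‑irrelevant‑before‑a‑BSM'' identity (\ref{eqn:Pauli_correction_k_2}). Once that normal form is reached, the rest is a short, if slightly fiddly, exercise in Pauli arithmetic that is sealed off by the correctness condition.
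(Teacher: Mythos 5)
Your proof is correct, and it reaches the result by a genuinely different mechanism than the paper, even though both share the same skeleton: reduce $\Lambda_{\mathcal{P},\vec s}$ to the normal form $\bigcirc_k (M_I)_k$ followed by a fixed Pauli string (as in (\ref{eqn:Lambda_P,S_BSM_breakdown})), and seal the argument with the correctness condition of Lemma \ref{lem:correctness_property}. The paper never expands the twirl: it keeps each $\mathcal{B}_{k-1,k}$ intact as a channel and uses the intertwining identity $(s)_{k-1}\circ\mathcal{B}_{k-1,k}=\mathcal{B}_{k-1,k}\circ(s)_k$ (property (iii) of Lemma \ref{lem:properties_paulis_and_twirl}) to slide every correction and syndrome Pauli node-by-node onto register $N$, where their product is $\pm I$ by correctness and hence the identity channel. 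You instead expand $\bigcirc_k\mathcal{B}_{k-1,k}$ into the $4^N$ correlated Pauli strings, collapse the per-node Paulis in front of each Bell projection via the which-qubit identity (\ref{eqn:Pauli_correction_k_2}), and absorb the fixed string $c_0,\dots,c_N$ by the telescoping reindexing $t_k=c_0\cdots c_{k-1}u_k$, with correctness entering exactly once to identify $c_0\cdots c_{N-1}$ with $\pm c_N$ at the last node; I checked the Pauli arithmetic ($c_0u_1=\pm t_1$, $c_ku_ku_{k+1}=\pm t_kt_{k+1}$, $c_Nu_N=\pm t_N$, and bijectivity of $u_k\mapsto t_k$ on $A$ modulo phase) and it goes through. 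What each buys: the paper's channel-level hop is shorter and sidesteps most of the two-qubit bookkeeping at repeater nodes, since property (iii) does the reindexing locally and invisibly; your expanded version is more elementary and self-contained (you essentially re-derive property (iii) globally), makes the bijection structure explicit, and anticipates almost verbatim the change-of-variables machinery the paper deploys later in the proof of Theorem \ref{thm:swap_and_correct_equivalence}, at the cost of the heavier bookkeeping you yourself flag.
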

\begin{proof}
 Recalling the identity (\ref{eqn:Lambda_P,S}) for $\Lambda_{\mathcal{P},\vec{s}}$ and property ($iii$) in Lemma \ref{lem:properties_paulis_and_twirl}, we see that 
\begin{align}
    \Lambda_{\mathcal{P},\vec{s}} \circ \bigcirc_{l=1}^N \mathcal{B}_{l-1,l} = \bigcirc_{k=1}^{N-1} (M_{s_k})_k \circ \bigcirc_{l=1}^N \mathcal{B}_{l-1,l} \circ \bigcirc_{j = 0}^N (\mathcal{P}_{j}(\vec{s}))_N . 
\end{align}
In particular, we have moved all Pauli corrections to act at register $N$. Recalling (\ref{eqn:BSM_breakdown}), we can do the same with all Pauli operators arising from the syndrome, to obtain
\begin{align}
    \Lambda_{\mathcal{P},\vec{s}} \circ \bigcirc_{l=1}^N \mathcal{B}_{l-1,l} &= \bigcirc_{k=1}^{N-1} (M_{I})_k \circ \bigcirc_{l=1}^N \mathcal{B}_{l-1,1} \circ \bigcirc_{j = 1}^{N-1} (s_j \mathcal{P}_{j}(\vec{s}))_N \circ (\mathcal{P}_{0}(\vec{s}))_N \circ (\mathcal{P}_{N}(\vec{s}))_N \nonumber \\ &=  \bigcirc_{k=1}^{N-1} (M_{I})_k \circ \bigcirc_{l=1}^N \mathcal{B}_{l-1,1} \circ \Big( \mathcal{P}_{0}(\vec{s}) \cdot \mathcal{P}_{N}(\vec{s}) \cdot \prod_{j = 1}^{N-1} s_j \mathcal{P}_{j}(\vec{s}) \Big)_N . \label{eqn:move_paulis_to_end_twirling}
\end{align}
By Lemma \ref{lem:correctness_property}, due to the correctness property of $\mathcal{P}$, the term inside the brackets is simply the identity.
Then,
\begin{align}
    \Lambda_{\mathcal{P},\vec{s}} \circ \bigcirc_{l=1}^N \mathcal{B}_{l-1,1} = \bigcirc_{k=1}^{N-1} (M_{I})_k \circ \bigcirc_{l=1}^N \mathcal{B}_{l-1,1} .
\end{align}
We now note that $\Lambda_{\mathrm{seq},\vec{I}} = \bigcirc_{k=1}^{N-1} (M_{I})_k $, where $\mathrm{seq}$ denotes the sequential swapping protocol. This is because, given syndrome $\vec{I}$, all corrections specified by this protocol are the identity. Therefore,
\begin{align}
    \Lambda_{\mathcal{P},\vec{s}} \circ \bigcirc_{l=1}^N \mathcal{B}_{l-1,1} =   \Lambda_{\mathrm{seq},\vec{I}} \circ \bigcirc_{l=1}^N \mathcal{B}_l.
    \label{eqn:discussion_Lambda_seq,I}
\end{align}
\end{proof}

\begin{proof}[Proof of Theorem \ref{thm:swap_and_correct_equivalence}]
Recalling Definition \ref{def:Lambda_P,s}, we may write down the map corresponding to \textit{non-postselected} swapping with swap-and-correct protocol $\mathcal{P}$ as
\begin{equation}
    \Lambda_{\mathcal{P}} \coloneqq \sum_{\vec{s}\in A^{N-1}} \Lambda_{\mathcal{P},\vec{s}},
    \label{eqn:unconditional_s&c}
\end{equation}
where $\Lambda_{\mathcal{P},\vec{s}}$ is given in (\ref{eqn:Lambda_P,S}). We now claim that 
\begin{equation}
    \mathcal{B}_{0,N} \circ \Lambda_{\mathcal{P}} = \Lambda_{\mathrm{seq},\vec{I}} \circ \bigcirc_{k=1}^N \mathcal{B}_{k-1,k},
    \label{eqn:T.Lambda=Lambda_seq.T}
\end{equation}
where 
\begin{equation}
    \mathcal{B}_{0,N} = \frac{1}{4}\sum_{s\in A} (s)_0 \circ (s)_N
    \label{eqn:twirling_e2e}
\end{equation}
denotes the Bell-diagonal twirling of the end-to-end state. We firstly make use of (\ref{eqn:unconditional_s&c}) to write
\begin{align}
    \mathcal{B}_{0,N} \circ \Lambda_{\mathcal{P}} = \sum_{\vec{s}}  \mathcal{B}_{0,N} \circ \Lambda_{\mathcal{P},\vec{s}},
    \label{eqn:T.Lambda_linearity}
\end{align}
Recalling (\ref{eqn:Lambda_P,S_BSM_breakdown}), we have 
\begin{align}
    \mathcal{B}_{0,N} \circ \Lambda_{\mathcal{P},\vec{s}} &= \mathcal{B}_{0,N} \circ \bigcirc_{k=1}^{N-1} (M_{I})_k  \circ \bigcirc_{j=1}^{N-1} (s_j \mathcal{P}_{j}(\vec{s}))_j  \circ (\mathcal{P}_{0}(\vec{s}) \mathcal{P}_{N}(\vec{s}))_N \\ &= \frac{1}{4}\sum_{t\in A}  \bigcirc_{k=1}^{N-1} (M_{I})_k \circ (t)_0  \circ \bigcirc_{j=1}^{N-1} (s_j \mathcal{P}_{j}(\vec{s}))_j  \circ ( t \mathcal{P}_{0}(\vec{s}) \mathcal{P}_{N}(\vec{s}))_N. \label{eqn:T.Lambda_P,S_breakdown}
\end{align}
In the first step, we have used (\ref{eqn:Lambda_P,S_BSM_breakdown}), and then moved the correction $(\mathcal{P}_{0}(\vec{s}))_0$ to act on register $N$. This is possible by applying property ($iii$) of Lemma \ref{lem:properties_paulis_and_twirl} together with the twirling operator $\mathcal{B}_{0,N}$. In the second step, we have expanded $\mathcal{B}_{0,N}$ according to (\ref{eqn:twirling_e2e}).

We now perform a change of variables: for $j=1,\dots, N-1$, we let
\begin{align}
    r_0 \coloneqq t, \;\;\; r_j \coloneqq s_{\alpha(j)} \mathcal{P}_{\alpha(j)}(\vec{s}) \text{  for  }j = 1,\dots,N-1, \label{eqn:mapping(t,s)-r}
\end{align}
where $\alpha$ is the permutation (associated with $\mathcal{P}$) in which BSMs are carried out -- see Definition \ref{def:swap_and_correct_protocol_technical}. 
We now claim that (\ref{eqn:mapping(t,s)-r}) defines a bijective map $(t,\vec{s}) \leftrightarrow \vec{r}$, for $\vec{r} = (r_0,\dots, r_{N-1})\in A^{N}$ and $(t,\vec{s}) \in A^{N}$. It suffices to show that the map (\ref{eqn:mapping(t,s)-r}) is invertible. We show invertibility by explicitly writing down the inverse of (\ref{eqn:mapping(t,s)-r}). Given $\vec{r}$, one may recursively deduce $(t,\vec{s})$ with the map
\begin{align*}
    t &= r_0 \\ 
    s_{\alpha(1)} &= \mathcal{P}_{\alpha(1)}(\vec{s})  r_1  \\ 
    s_{\alpha(2)} &= \mathcal{P}_{\alpha(2)}(\vec{s}) r_2 \\ 
    &\vdots \\ 
    s_{\alpha(N-1)} &= \mathcal{P}_{\alpha(N-1)}(\vec{s})  r_{N-1}, \;\;\;\;\;\;\;\;\; \text{up to a factor of }\pm 1.
\end{align*}
In the above, at each step, the multiplier $\mathcal{P}_{\alpha(k)}(\vec{s})$ may be computed using the values $(s_{\alpha(1)},\dots,s_{\alpha(k-1)})$ which have been found in the previous steps. This is due to the physicality property of the swap-and-correct protocol (Property A in Definition \ref{def:swap_and_correct_protocol_technical}). The property enforces that corrections at a given node must only depend on the BSM outcomes that have previously been obtained.
Recalling the identity (\ref{eqn:condition_from_correctness}) that followed from the correctness condition, we have 
\begin{align}
    t\mathcal{P}_{0}(\vec{s}) \mathcal{P}_{N}(\vec{s}) = \pm \prod_{j=0}^{N-1} r_j.
    \label{eqn:product_operator_node_N}
\end{align}
Given the bijective map $(t,\vec{s}) \leftrightarrow \vec{r}$, we may therefore combine (\ref{eqn:T.Lambda_linearity}) with (\ref{eqn:T.Lambda_P,S_breakdown}) and (\ref{eqn:product_operator_node_N}) to write
\begin{equation}
    \mathcal{B}_{0,N}\circ \Lambda_{\mathcal{P}} = \frac{1}{4}\sum_{\vec{r}\in A^{N-1}} \left[ \bigcirc_{k=1}^{N-1} (M_{I})_k \circ (r_0)_0  \circ \bigcirc_{j=1}^{N-1} (r_j)_j  \circ \left(\prod_{i=0}^{N-1}r_i \right)_N \right].
    \label{eqn:unconditional_swapping_in_r}
\end{equation}
We now claim that the above is equal to $ \Lambda_{\mathrm{seq},\vec{I}} \circ \bigcirc_{k=1}^N \mathcal{B}_{k-1,k}$. to see this, we perform another change of variables to $\vec{u} = (u_0,\dots,u_{N-1})\in A^{N-1}$, given by 
\begin{align}
    u_0 = r_0, \;\;\; u_k = \prod_{j=0}^k r_j,\;\;\; k = 1, \dots, N-1
\end{align}
The above map has inverse 
\begin{align}
    r_0 = u_0, \;\;\; r_k = u_{k-1} u_k, \;\;\; \text{up to a factor of }\pm 1,
\end{align}
and is therefore bijective.
Performing this change of variable on (\ref{eqn:unconditional_swapping_in_r}), we obtain
\begin{align}
        \mathcal{B}_{0,N}\circ \Lambda_{\mathcal{P}} &= \frac{1}{4}\sum_{\vec{u}\in A^{N-1}} \left[ \bigcirc_{k=1}^{N-1} (M_{I})_k \circ (u_0)_0  \circ \bigcirc_{j=1}^{N-1} (u_{j-1} u_j)_j  \circ \left(u_{N-1} \right)_N \right] \\ &= \frac{1}{4}\sum_{\vec{u}\in A^{N-1}} \left[ \bigcirc_{k=1}^{N-1} (M_{I})_k \circ \bigcirc_{j=0}^{N-1} \big( (u_j)_j \circ (u_j)_{j+1} \big) \right].
\end{align}
Recalling (\ref{eqn:bell_diag_twirl_s&c}), this may be rewritten as 
\begin{align}
    \mathcal{B}_{0,N}\circ \Lambda_{\mathcal{P}} &= \frac{4^N}{4} \bigcirc_{k=1}^{N-1} (M_{I})_k \circ \bigcirc_{j=1}^{N} \mathcal{B}_{j-1,j} \\ &= 4^{N-1} \Lambda_{\mathrm{seq},\vec{I}} \circ \bigcirc_{j=1}^{N} \mathcal{B}_{j-1,j}
    \label{eqn:B.Lambda_P,s=Lambda_seq.B}
\end{align}
where have have recalled that $\Lambda_{\mathrm{seq},\vec{I}} = \bigcirc_{k=1}^{N-1} (M_{I})_k$. In particular, from Lemma (\ref{lem:s&c_bell_diag_P,s_indep}) we recall that, for any syndrome $\vec s$, we have $$\Lambda_{\mathcal{P},\vec{s}} \circ \bigcirc_{k=1}^N \mathcal{B}_{k-1,1} = \Lambda_{\mathrm{seq},\vec{I}}  \circ \bigcirc_{k=1}^N \mathcal{B}_{k-1,1}.$$ Then, noting that $|A^{N-1}|=4^{N-1}$, from (\ref{eqn:B.Lambda_P,s=Lambda_seq.B}) we have
\begin{align}
    \mathcal{B}_{0,N}\circ \Lambda_{\mathcal{P}} &= \sum_{\vec s\in A^{N-1}} \Lambda_{\mathcal{P},\vec{s}} \circ \bigcirc_{k=1}^N \mathcal{B}_{k-1,1} \\ &= \Lambda_{\mathcal{P}} \circ \bigcirc_{k=1}^N \mathcal{B}_{k-1,1},
\end{align}
where we have used the identity (\ref{eqn:unconditional_s&c}) for $\Lambda_{\mathcal{P}}$.
\end{proof}

\begin{proof}[Proof of Lemma \ref{lem:bound_e2e_fidelity_BD}]
We write the eigenvalues of $\mathcal{B}(\rho_1)$ and $\mathcal{B}(\rho_2)$ as 
\begin{align*}
    \mathcal{B}(\rho_1) \equiv  (F_1, (1-F_1)\vec{u})^T, \\ 
    \mathcal{B}(\rho_2) \equiv (F_2, (1-F_2)\vec{v})^T,
    \label{eqn:form_bell_diag}
\end{align*}
where $\vec{u}$  and $\vec{v}$ are length-3 vectors such that $u_1+u_2+u_3 = v_1+v_2+v_3 = 1$ and $u_i,v_i \geq 0$. By Lemma \ref{lem:swap_bell_diagonal}, The fidelity of the end-to-end state is
\begin{equation*}
    F' = F_1 F_2 + (1-F_1)(1-F_2)(\vec{u}\cdot \vec{v}),
\end{equation*}
where $\vec{u}\cdot \vec{v} = u_1 v_1 + u_2 v_2 + u_3 v_3$. The end-to-end fidelity $F'$ satisfies
\begin{equation}
    F_1 F_2 \leq F' \leq F_1 F_2 + (1-F_1) (1-F_2).
    \label{eqn:bounds_fidelity_BD_N=2}
\end{equation}
For the lower bound, we have used $ \vec{u}\cdot \vec{v} \geq 0$. Note that this is saturated whenever $\vec{u}$ and $\vec{v}$ are orthogonal. For the upper bound, we have used 
\begin{equation*}
    \vec{u}\cdot \vec{v} \leq (u_1 + u_2 + u_3)(v_1 + v_2 + v_3) = 1,
\end{equation*}
which is saturated exactly when $\vec{u} = \vec{v}$ and $\vec{u}$ is one of $(1,0,0)^T$, $(0,1,0)^T$, or $(0,0,1)^T$. In particular, this is when the initial states $\mathcal(B)(\rho_k)$ are of rank two and have the same non-zero eigenvectors.

To show the bounds for a length-$N$ repeater chain, we apply the bounds (\ref{eqn:bounds_fidelity_BD_N=2}) inductively.  For the upper bound, we notice that 
\begin{equation}
    F_1 F_2 + (1-F_1) (1-F_2) = \frac{1}{2}(2F_1-1)(2F_2-1) + \frac{1}{2},
    \label{eqn:rank_two_fidelity_decay}
\end{equation}
and applying the upper and lower bounds from (\ref{eqn:bounds_fidelity_BD_N=2}) inductively gives
\begin{equation}
    \prod_{k=1}^N F_k \leq F' \leq \frac{1}{2}\prod_{k=1}^N(2F_k-1)+ \frac{1}{2}.
\end{equation}
By Lemma \ref{lem:swap_bell_diagonal}, the end-to-end state is the same for both postselected and non-postselected swapping.
\end{proof}

\begin{proof}[Proof of Theorem \ref{thm:werner_approximation_accuracy}]
Let $\mathcal{W}_{[N]}$ denote the Bell-diagonal twirling of all initial states $\rho_{\mathrm{in}} = \bigotimes_{k=1}^N \rho_k$ such that
 \begin{equation}
     \mathcal{W}_{[N]} \left(\rho_{\mathrm{in}}\right) = \otimes_{k=1}^N \mathcal{W}(\rho_k).
 \end{equation}
Then,
\begin{align}
    F' &= \bra{\Psi_{00}} \Lambda_{\mathcal{P}}(\rho_{\mathrm{in}})\ket{\Psi_{00}} \\ 
    F'_{\mathcal{W}} &= \bra{\Psi_{00}}\Lambda_{\mathcal{P}}\left( \mathcal{W}_{[N]} \left(\rho_{\mathrm{in}}\right)\right) \ket{\Psi_{00}}
\end{align}
are the true end-to-end fidelity and the end-to-end fidelity with the Werner approximation.
\begin{align}
    F' = \bra{\Psi_{00}} \Lambda_{\mathcal{P}}(\rho_{\mathrm{in}})\ket{\Psi_{00}} &= \bra{\Psi_{00}} \mathcal{B}\left(\Lambda_{\mathcal{P}}(\rho_{\mathrm{in}})\right) \ket{\Psi_{00}} \\ &= \bra{\Psi_{00}} \Lambda_{\mathcal{\mathrm{seq}}}( \mathcal{B}_{[N]}\left(\rho_{\mathrm{in}})\right) \ket{\Psi_{00}},  \label{eqn:F'_sequential_swapping}
\end{align}
where in the first step we have used the fact that the fidelity is invariant under Bell-diagonal twirling, and in the second step we have applied Theorem \ref{thm:swap_and_correct_equivalence}.
Now, since Werner states are Bell-diagonal, we have $\mathcal{B}\left(\mathcal{W}(\rho_k)\right) = \mathcal{W}(\rho_k)$ and $\mathcal{B}_{[N]}\left(\mathcal{W}_{[N]}(\rho_{\mathrm{in}})\right) =\mathcal{W}_{[N]}(\rho_{\mathrm{in}}) $. In particular, we see that
\begin{align}
    \Lambda_{\mathcal{P}}\left( \mathcal{W}_{[N]} \left(\rho_{\mathrm{in}}\right)\right) = \Lambda_{\mathcal{P}}\left( \mathcal{B}_{[N]}\left( \mathcal{W}_{[N]} \left(\rho_{\mathrm{in}}\right)\right)\right)  &= \mathcal{B}\left(\Lambda_{\mathrm{seq}}\left( \mathcal{W}_{[N]} \left(\rho_{\mathrm{in}}\right)\right)\right) \\ &= \Lambda_{\mathrm{seq}}\left( \mathcal{W}_{[N]} \left(\rho_{\mathrm{in}}\right)\right),
\end{align}
where in the first step we have applied Theorem \ref{thm:swap_and_correct_equivalence}, and in the second step we have used the fact that the sequential swapping of Bell-diagonal states results in a Bell-diagonal state (see Lemma \ref{lem:swap_bell_diagonal}). Then,
\begin{equation}
    F'_{\mathcal{W}} = \bra{\Psi_{00}}\Lambda_{\mathcal{P}}\left( \mathcal{W}_{[N]} \left(\rho_{\mathrm{in}}\right)\right) \ket{\Psi_{00}} = \bra{\Psi_{00}} \Lambda_{\mathrm{seq}}\left( \mathcal{W}_{[N]} \left(\rho_{\mathrm{in}}\right)\right) \ket{\Psi_{00}}.
    \label{eqn:F'_W_sequential_swapping}
\end{equation}
In particular, both (\ref{eqn:F'_sequential_swapping}) and (\ref{eqn:F'_W_sequential_swapping}) are the end-to-end fidelity after swapping $N$ Bell-diagonal states. For (\ref{eqn:F'_sequential_swapping}), the fidelity of the $k$th state is $\bra{\Psi_{00}} \mathcal{B}(\rho_k) \ket{\Psi_{00}} = F_k$. For (\ref{eqn:F'_W_sequential_swapping}), the fidelity of the $k$th state is again $\bra{\Psi_{00}} \mathcal{W}(\rho_k) \ket{\Psi_{00}} = F_k$. In particular, the bounds from Lemma \ref{lem:bound_e2e_fidelity_BD} apply to both $F'$ and $F'_{\mathcal{W}}$. Letting $\epsilon_k = 1-F_k$, we then have
\begin{align}
    |F'-F'_{\mathcal{W}}| &\leq \frac{1}{2}\prod_{k=1}^N(2F_k-1)+ \frac{1}{2} - \prod_{k=1}^N F_k \\ &= \frac{1}{2}\prod_{k=1}^N(1-2\epsilon_k)+ \frac{1}{2} - \prod_{k=1}^N (1-\epsilon_k) \\ &\stackrel{\mathrm{i}}{=} \frac{1}{2} \left(1-2\sum_{k}\epsilon_k + 4 \sum_{k,l}\epsilon_k \epsilon_l + \mathcal{O}(N^3 \epsilon^3) \right) + \frac{1}{2}- \left( 1 - \sum_{k}\epsilon_k + \sum_{k,l}\epsilon_k \epsilon_l + \mathcal{O}(N^3 \epsilon^3) \right)
    \\ &\stackrel{\mathrm{ii}}{=} \sum_{k,l}\epsilon_k \epsilon_l + \mathcal{O}(N^3 \epsilon^3)
    \\ &\stackrel{\mathrm{iii}}{\leq} {N \choose 2} \epsilon^2  + \mathcal{O}(N^3 \epsilon^3),
\end{align}
where in (i) we have performed a series expansion to second order in the infidelity, in (ii) we have noticed that the first- and second-order terms cancel, and in (iii) we have used $\epsilon_k = 1-F_k < \epsilon$ for all $k=1,\dots, N$ and the fact that there are ${N \choose 2}$ second-order terms in the sum.
\end{proof}

\newpage

\section{Postselected swapping}
\label{app:conditional_swapping}
\begin{proof}[Proof of Proposition \ref{prop:p_1>p_2}]
Consider $\rho \in S_{p_1}$. By definition, this can be written as 
\begin{equation}
    \rho = p_1 \ketbra{\Psi_{00}} + (1-p_1)  \sigma,
    \label{eqn:noisy_state_p_1}
\end{equation}
where $\sigma$ is a valid density matrix. 
Letting $p_1 = p_2 + (p_1 - p_2)$, we may rewrite $\rho$ as
    \begin{equation*}
        \rho_1 = p_2 \ketbra{\Psi_{00}}  + (1-p_2) \Big[ \frac{p_1 - p_2}{1-p_2} \ketbra{\Psi_{00}} + \frac{1-p_1}{1-p_2} \sigma \Big].
    \end{equation*}
    Since $p_1 > p_2$, the term in the brackets is a valid ensemble of states and therefore also a density matrix. For the converse, consider for example the states 
    \begin{equation*}
        \rho = p_2 \ketbra{\Psi_{00}} + (1-p_2) \ketbra{\Psi_{11}}.
    \end{equation*}
    and suppose that this can be written in the form (\ref{eqn:noisy_state_p_1}). Then, one may show that 
    \begin{equation*}
        \ketbra{\Psi_{11}} = \frac{p_1 - p_2}{1-p_2} \ketbra{\Psi_{00}} + \frac{1-p_1}{1-p_2} \sigma
    \end{equation*}
    computing the fidelity to $\ket{\Psi_{00}}$ then yields
    \begin{equation*}
        0 = \frac{p_1 - p_2}{1-p_2}  + \frac{1-p_1}{1-p_2} \bra{\Psi_{00}} \sigma \ket{\Psi_{00}} 
    \end{equation*}
    which results in a contradiction as the RHS is positive ($\sigma$ is a density matrix).
\end{proof}

\begin{proposition}[Proposition \ref{prop:p_1>p_2}, general version]
    For fixed $p_1$, $p_2$, $F_1$, $F_2$, let 
    \begin{align*}
        F'_{ij,\max} &\coloneqq \max \left\{ F'_{ij}(\rho_1, \rho_2) \text{ s.t. } \rho_k \in S_{p_k,F_k}\right\} \\ F'_{ij,\min} &\coloneqq \min \left\{ F'_{ij}(\rho_1, \rho_2) \text{ s.t. } \rho_k \in S_{p_k,F_k}\right\},
    \end{align*}
    where $F'_{ij}$ is given in Lemma \ref{lem:formulae_swapping_noisy_states_id}.
    Then, the above quantities are independent of $(i,j)$, or alternatively
    \begin{align*}
        F'_{ij,\max} &=  F'_{00,\max} \equiv F'_{\max} \\
        F'_{ij,\min} &=  F'_{00,\min} \equiv F'_{\min}.
    \end{align*}
    \label{prop:mmt_outcome_independent_non_id}
\end{proposition}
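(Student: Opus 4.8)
The plan is to reduce the general case (distinct parameters $p_1,p_2,F_1,F_2$) to the identical-parameter case treated in Proposition~\ref{prop:mmt_outcome_independent}, using the same Pauli-rotation trick referenced in its proof sketch. The key observation is that the set $S_{p_k,F_k}$ is closed under the maps $\rho_k \mapsto (X^a Z^b) \rho_k (X^a Z^b)^{\dagger}$ composed with a suitable relabelling that sends $\ket{\Psi_{00}}$ to another Bell vector — more precisely, since $\ket{\Psi_{ij}} = (I_2 \otimes X^i Z^j)\ket{\Psi_{00}}$, conjugating a state of the form $p\ketbra{\Psi_{00}} + (1-p)\sigma$ by $I_2 \otimes X^i Z^j$ yields $p\ketbra{\Psi_{ij}} + (1-p)\sigma'$ with $\sigma'$ still a density matrix, and the fidelity to $\ket{\Psi_{ij}}$ is preserved. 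What we actually need is slightly different: given a target outcome $ij$, we want to move the \emph{analysis} of outcome $ij$ to the analysis of outcome $00$ on a modified pair of states that still lies in $S_{p_1,F_1}\times S_{p_2,F_2}$.

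The steps in order: (1) Recall from Lemma~\ref{lem:formulae_swapping_noisy_states_id} the explicit formulas for $F'_{ij}(\rho_1\otimes\rho_2)$ and $p'_{ij}$ in terms of $\tilde F'_{ij}(\sigma_1\otimes\sigma_2)$ and $\tilde p'_{ij}(\sigma_1\otimes\sigma_2)$; these show the dependence on $(i,j)$ enters only through the noise components. (2) Fix an arbitrary outcome $ij$ and a pair $\rho_k = p_k\ketbra{\Psi_{00}} + (1-p_k)\sigma_k \in S_{p_k,F_k}$. Following the idea flagged in the proof of Proposition~\ref{prop:mmt_outcome_independent}, define $\omega_2 = (I_2\otimes X^i Z^j)_{A_2 B_2}\,\rho_2\,(I_2 \otimes X^i Z^j)_{A_2 B_2}^{\dagger}$ (or the appropriate single-register rotation acting on $A_2$, chosen so that the BSM projector $\ketbra{\Psi_{ij}}_{A_1 A_2}$ is mapped to $\ketbra{\Psi_{00}}_{A_1 A_2}$ inside the trace, while the output projector $\ketbra{\Psi_{ij}}_{B_1 B_2}$ is handled by the Pauli correction $Z^j X^i$ already present in Definition~\ref{def:post_swap_fid_and_prob}). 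Verify that $\omega_2 \in S_{p_2,F_2}$: the weight $p_2$ on a Bell vector is unchanged, the residual term is still a valid density matrix, and the fidelity condition is preserved because the rotation maps $\ket{\Psi_{00}}$ to $\ket{\Psi_{ij}}$ and $\omega_2$'s overlap with $\ket{\Psi_{00}}$ equals $\rho_2$'s overlap with $\ket{\Psi_{ij}}$ — but one must check this is still $F_2$; here one uses that the relevant fidelity is the one with respect to the \emph{rotated} target, which after absorbing the Pauli correction is again $\ket{\Psi_{00}}$. (3) Conclude $F'_{ij}(\rho_1\otimes\rho_2) = F'_{00}(\rho_1\otimes\omega_2)$ and $p'_{ij}(\rho_1\otimes\rho_2) = p'_{00}(\rho_1\otimes\omega_2)$, so the map $(\rho_1,\rho_2)\mapsto(\rho_1,\omega_2)$ is a bijection of $S_{p_1,F_1}\times S_{p_2,F_2}$ onto itself intertwining $F'_{ij}$ with $F'_{00}$. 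Taking max/min over the domain then gives $F'_{ij,\max} = F'_{00,\max}$ and $F'_{ij,\min} = F'_{00,\min}$, independent of $(i,j)$.

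The main obstacle is bookkeeping the precise form of the register-wise Pauli rotation and confirming it genuinely preserves \emph{both} the fidelity parameter $F_2$ and the ensemble-weight parameter $p_2$ simultaneously — i.e., that $S_{p_2,F_2}$ is invariant under it. The subtlety is that $S_{p_k,F_k}$ is defined via the overlap with the \emph{fixed} vector $\ket{\Psi_{00}}$, whereas the natural symmetry moves that vector around the Bell basis; the resolution is that the Pauli correction $(Z^j X^i)_{B}$ built into the definition of $F'_{ij}$ (Definition~\ref{def:post_swap_fid_and_prob}, Eq.~\eqref{eqn:post_swap_fid_def}) exactly compensates, so after accounting for it the effective target is always $\ket{\Psi_{00}}$ and the parameters match. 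Once that is pinned down the rest is a routine substitution into the Lemma~\ref{lem:formulae_swapping_noisy_states_id} formulas; indeed, since those formulas express everything in terms of $\tilde p'_{ij}$ and $\tilde F'_{ij}$, an alternative (cleaner) route is to prove directly that $\{(\tilde p'_{ij}(\sigma_1\otimes\sigma_2), \tilde F'_{ij}(\sigma_1\otimes\sigma_2)) : \sigma_k \text{ density matrices with } \bra{\Psi_{00}}\sigma_k\ket{\Psi_{00}} = \tilde F_k\}$ is the same set for all $(i,j)$, again via the Pauli-rotation bijection on the $\sigma_k$'s, and then the $(i,j)$-independence of $F'_{ij,\max}$ and $F'_{ij,\min}$ follows because the objective in \eqref{eqn:formula_post_swap_fidelity_noisy_states} depends on $(i,j)$ only through this set.
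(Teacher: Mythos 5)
Your overall strategy (a Pauli-rotation bijection reducing outcome $ij$ to outcome $00$) is the same as the paper's, but the specific rotation you choose does not work, and this is exactly the point you flag as the ``main obstacle'' and then resolve incorrectly. You conjugate $\rho_2$ by a Pauli on a \emph{single} register ($I_2\otimes X^iZ^j$ on $A_2B_2$, or ``the appropriate single-register rotation on $A_2$''). With such a rotation the pure component $p_2\ketbra{\Psi_{00}}$ is mapped to $p_2\ketbra{\Psi_{ij}}$ and the overlap with $\ket{\Psi_{00}}$ becomes $\bra{\Psi_{ij}}\rho_2\ket{\Psi_{ij}}$, which is unconstrained; hence $\omega_2\notin S_{p_2,F_2}$ in general, and the max/min over the constraint set cannot be transported. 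Your proposed fix --- that the correction $Z^jX^i$ built into Definition \ref{def:post_swap_fid_and_prob} ``compensates'' --- is not valid: that Pauli concerns the output projector $\ketbra{\Psi_{ij}}_{B_1B_2}$, whereas membership in $S_{p_2,F_2}$ is a constraint on the \emph{input} state relative to the fixed vector $\ket{\Psi_{00}}$, and no bookkeeping at the output changes it. Moreover, even setting the constraint aside, a single-register rotation fails to intertwine the two fidelities: rotating only $A_2$ converts $\ketbra{\Psi_{00}}_{A_1A_2}$ into $\ketbra{\Psi_{ij}}_{A_1A_2}$ but leaves the output projector as $\ketbra{\Psi_{00}}_{B_1B_2}$ instead of $\ketbra{\Psi_{ij}}_{B_1B_2}$, while rotating only $B_2$ leaves the measurement projector untouched and drops out of $p'_{00}$ entirely; in neither case do you get $F'_{ij}(\rho_1\otimes\rho_2)=F'_{00}(\rho_1\otimes\omega_2)$.

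The missing idea is to apply the \emph{same} Pauli to \emph{both} qubits of $\rho_2$, as the paper does: $\omega_1=\rho_1$ and $\omega_2=(Z^jX^i\otimes Z^jX^i)\,\rho_2\,(X^iZ^j\otimes X^iZ^j)$ acting on $A_2B_2$. By the flip-flop identity $(M\otimes I)\ket{\Psi_{00}}=(I\otimes M^T)\ket{\Psi_{00}}$ this conjugation leaves $\ketbra{\Psi_{00}}$ invariant, so the weight $p_2$ and the fidelity $F_2$ are preserved and $\omega_2\in S_{p_2,F_2}$; simultaneously, in the traces defining $p'_{00}$ and the numerator of $F'_{00}$, the Pauli on $A_2$ rotates $\ketbra{\Psi_{00}}_{A_1A_2}$ to $\ketbra{\Psi_{ij}}_{A_1A_2}$ and the Pauli on $B_2$ rotates $\ketbra{\Psi_{00}}_{B_1B_2}$ to $\ketbra{\Psi_{ij}}_{B_1B_2}$ (and cancels by cyclicity in the denominator), giving $p'_{00}(\rho_1\otimes\omega_2)=p'_{ij}(\rho_1\otimes\rho_2)$ and $F'_{00}(\rho_1\otimes\omega_2)=F'_{ij}(\rho_1\otimes\rho_2)$; since the map is an involution of $S_{p_1,F_1}\times S_{p_2,F_2}$, the claimed outcome-independence of the max and min follows. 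Note that your ``cleaner route'' at the end inherits the same defect: the constraint $\bra{\Psi_{00}}\sigma_k\ket{\Psi_{00}}=\tilde F_k$ is only preserved under the both-qubit rotation, not under a single-qubit one.
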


\begin{proof}[Proof of Proposition \ref{prop:mmt_outcome_independent_non_id}]
    Consider $\rho_k \in S_{p_k,F_k}$. We now show that for any $(i,j)$ there is $\omega_k \in S_{p_k,F_k}$ such that 
    \begin{align}
        F'_{ij}(\rho_1,\rho_2) = F'_{00}(\omega_1,\omega_2). 
        \label{eqn:fidelity_outcomes_rho_sigma}
    \end{align}
    We claim that this is the case for $\omega_1 =\rho_1$ and $\omega_2 =(Z^j X^i  \otimes  Z^j X^i )\rho_2 ( X^i Z^j \otimes X^i Z^j)$. Firstly, it is clear that $\omega_1 \in S_{p_1,F_1}$. Also,
    \begin{align}
        \omega_2 &= ( Z^j X^i  \otimes Z^j X^i  )\rho_2 (  X^i  Z^j \otimes X^i Z^j) \nonumber \\
        &= ( Z^j X^i  \otimes Z^j X^i  )\left( p_2 \ketbra{\Psi_{00}} + (1-p_2)\sigma_2 \right) (  X^i  Z^j \otimes X^i Z^j) \nonumber
        \\ &= p_2 \ketbra{\Psi_{00}} + (1-p_2) \eta_2 \label{eqn:sigma_2_noisy_comp}
    \end{align}
    where 
    \begin{equation*}
       \eta_2  = (X^i Z^j \otimes X^i Z^j )\sigma_2 ( Z^j X^i \otimes Z^j X^i)
    \end{equation*}
    is the noisy component of $\omega_2$. In (\ref{eqn:sigma_2_noisy_comp}), we have used the flip-flop trick, which means that the Paulis applied to both registers have no effect on the pure $\ket{\Psi_{00}}$ component. One may use the same trick to show that 
    \begin{align}
        \bra{\Psi_{00}} \omega_2 \ket{\Psi_{00}} &= \bra{\Psi_{00}} (Z^j X^i  \otimes  Z^j X^i )\rho_2 ( X^i Z^j \otimes X^i Z^j) \ket{\Psi_{00}} \nonumber \\ &=  \bra{\Psi_{00}} \rho_2 \ket{\Psi_{00}} = F_2,
        \label{eqn:sigma_2_fidelity}
    \end{align}
    and that therefore by (\ref{eqn:sigma_2_noisy_comp}) and (\ref{eqn:sigma_2_fidelity}), $\omega_2 \in S_{p_2,F_2}$. We now show (\ref{eqn:fidelity_outcomes_rho_sigma}). Recalling Definition \ref{def:post_swap_fid_and_prob}, we have
    \begin{align}
        p'_{00}(\omega_1\otimes \omega_2)  &=\Tr \left[ \ketbra{\Psi_{00}}_{A_1 A_2} \rho_1 \otimes \omega_2 \right] \\ &= \Tr \left[ \ketbra{\Psi_{00}}_{A_1 A_2} \omega_1 \otimes (Z^j X^i)_{A_2} (Z^j X^i)_{B_2}\rho_2 ( X^i Z^j)_{A_2} (X^i Z^j)_{B_2}  \right] \\ &= \Tr \left[ \ketbra{\Psi_{ij}}_{A_1 A_2} \rho_1 \otimes \rho_2   \right] \\ &= p'_{ij}(\rho_1\otimes \rho_2).
    \end{align}
    One may similarly show that 
    \begin{equation}
       \Tr \left[ \ketbra{\Psi_{00}}_{B_1 B_2} \ketbra{\Psi_{00}}_{A_1 A_2} \omega_1 \otimes \omega_2 \right] = \Tr \left[ \ketbra{\Psi_{ij}}_{B_1 B_2} \ketbra{\Psi_{ij}}_{A_1 A_2} \rho_1 \otimes \rho_2 \right],
    \end{equation}
    from which we see that 
    \begin{align}
        F'_{00}(\omega_1 \otimes \omega_2) &= \frac{1}{p'_{00}(\omega_1 \otimes \omega_2)} \Tr \left[ \ketbra{\Psi_{00}}_{B_1 B_2} \ketbra{\Psi_{00}}_{A_1 A_2} \omega_1 \otimes \omega_2 \right] \\ &= \frac{1}{p'_{ij}(\rho_1 \otimes \rho_2)} \Tr \left[ \ketbra{\Psi_{ij}}_{B_1 B_2} \ketbra{\Psi_{ij}}_{A_1 A_2} \rho_1 \otimes \rho_2 \right] \\ &= F'_{ij}(\rho_1 \otimes \rho_2).
    \end{align}
\end{proof}

We now derive the formula for the end-to-end fidelity as contained in Lemma \ref{lem:formulae_swapping_noisy_states_id}. In the main text, the result is stated for the swapping of identical states for simplicity. Here, for the completeness we state and prove the same Lemma for non-identical states.
\begin{lemma}[Lemma \ref{lem:formulae_swapping_noisy_states_id}, general version]
Consider performing a postselected swap on a pair of two-qubit states $\rho_1 \otimes \rho_2$ such that $\rho_k \in S_{p_k,F_k}$ for $k=1,2$. Let $F'_{ij} \equiv F'_{ij}(\rho_1\otimes \rho_2)$ denote the end-to-end fidelity after measuring outcome $ij$ in the BSM, and $p'_{ij}$ the probability of measuring that outcome (Definition \ref{def:post_swap_fid_and_prob}). Then, 
\begin{equation}
    p'_{ij} = \frac{p_1 + p_2 - p_1 p_2}{4} +  (1-p_1)(1-p_2) \tilde{p}'_{ij}
\label{eqn:formula_swap_prob_noisy_states_non_id}
\end{equation}
and 
\begin{equation}
    F'_{ij} = \frac{ p_1 F_2 + p_2 F_1 - p_1 p_2 + 4(1-p_1)(1-p_2)\tilde{p}'_{ij}\tilde{F}'_{ij} }{p_1 + p_2 - p_1 p_2 + 4(1-p_1)(1-p_2)\tilde{p}'_{ij}},
    \label{eqn:formula_post_swap_fidelity_noisy_states_non_id}
\end{equation}
where $\tilde{F}'_{ij} \coloneqq F_{ij}'(\sigma_1\! \otimes \! \sigma_2)$ and $ \tilde{p}'_{ij} \coloneqq p_{ij}'(\sigma_1\! \otimes \! \sigma_2)$ are the corresponding swap statistics of the noisy components.
\label{lem:formulae_swapping_noisy_states_non_id}
\end{lemma}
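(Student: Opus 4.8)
The plan is to expand $\rho_1\otimes\rho_2$ using the decomposition $\rho_k = p_k\ketbra{\Psi_{00}} + (1-p_k)\sigma_k$ and exploit linearity. Writing $P\equiv\ketbra{\Psi_{00}}$, the product splits into the four terms $p_1p_2\,P\otimes P$, $p_1(1-p_2)\,P\otimes\sigma_2$, $(1-p_1)p_2\,\sigma_1\otimes P$ and $(1-p_1)(1-p_2)\,\sigma_1\otimes\sigma_2$. Since, by Definition \ref{def:post_swap_fid_and_prob}, the swap probability $p'_{ij}$ and the \emph{unnormalised} fidelity numerator $N_{ij}\coloneqq\Tr[\ketbra{\Psi_{ij}}_{B_1B_2}\ketbra{\Psi_{ij}}_{A_1A_2}\,\rho_1\otimes\rho_2]$ are each a trace of a fixed operator against $\rho_1\otimes\rho_2$, they are linear in the input, so it suffices to evaluate them on each of the four terms and sum. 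The $\sigma_1\otimes\sigma_2$ term contributes $(1-p_1)(1-p_2)\tilde p'_{ij}$ to $p'_{ij}$ and $(1-p_1)(1-p_2)\tilde p'_{ij}\tilde F'_{ij}$ to $N_{ij}$, directly by the definitions of $\tilde p'_{ij}$ and $\tilde F'_{ij}$.

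Next I would reduce to $ij=00$. By the Pauli-rotation argument from the proof of Proposition \ref{prop:mmt_outcome_independent_non_id}, for any $\rho_1,\rho_2$ there is a state $\omega_2$, obtained from $\rho_2$ by conjugating both qubits with $X^iZ^j$, such that $p'_{ij}(\rho_1\otimes\rho_2) = p'_{00}(\rho_1\otimes\omega_2)$ and $F'_{ij}(\rho_1\otimes\rho_2) = F'_{00}(\rho_1\otimes\omega_2)$; moreover $\omega_2\in S_{p_2,F_2}$ with noisy component $\eta_2$ a Pauli conjugate of $\sigma_2$, and applying the same identity to $\sigma_1\otimes\sigma_2$ gives $\tilde p'_{ij}(\sigma_1\otimes\sigma_2)=\tilde p'_{00}(\sigma_1\otimes\eta_2)$ and $\tilde F'_{ij}(\sigma_1\otimes\sigma_2)=\tilde F'_{00}(\sigma_1\otimes\eta_2)$, so the general formula follows from the $ij=00$ case applied to $\rho_1\otimes\omega_2$. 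For $ij=00$ (so that no correction Pauli appears): identity (\ref{eqn:project_Psi_00_on_Psi_00x2}) shows that swapping two ideal Bell pairs yields the ideal Bell pair with outcome probability $\tfrac14$, so $P\otimes P$ contributes $\tfrac14$ to both $p'_{00}$ and $N_{00}$. For the mixed term $P\otimes\sigma_2$, contracting $\bra{\Psi_{00}}_{A_1A_2}$ against $\ket{\Psi_{00}}_{B_1A_1}$ via the flip-flop trick (\ref{eqn:flip-flop}) turns the swap into perfect teleportation: the conditional unnormalised output on $B_1B_2$ is $\tfrac14$ times a faithful copy of $\sigma_2$ with register $A_2$ relabelled to $B_1$. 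Its trace is $\tfrac14$ and its overlap with $\ket{\Psi_{00}}$ equals the fidelity of $\sigma_2$, namely $\tfrac{F_2-p_2}{1-p_2}$ (using $F_2 = p_2 + (1-p_2)\bra{\Psi_{00}}\sigma_2\ket{\Psi_{00}}$); so $P\otimes\sigma_2$ contributes $\tfrac14$ to $p'_{00}$ and $\tfrac14\cdot\tfrac{F_2-p_2}{1-p_2}$ to $N_{00}$, and symmetrically $\sigma_1\otimes P$ contributes $\tfrac14$ and $\tfrac14\cdot\tfrac{F_1-p_1}{1-p_1}$.

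Summing the four contributions, $p'_{00} = \tfrac14\big(p_1p_2+p_1(1-p_2)+(1-p_1)p_2\big) + (1-p_1)(1-p_2)\tilde p'_{00} = \tfrac14\big(p_1+p_2-p_1p_2\big)+(1-p_1)(1-p_2)\tilde p'_{00}$, which is (\ref{eqn:formula_swap_prob_noisy_states_non_id}); and $N_{00} = \tfrac14\big(p_1p_2 + p_1(F_2-p_2) + p_2(F_1-p_1)\big) + (1-p_1)(1-p_2)\tilde p'_{00}\tilde F'_{00} = \tfrac14\big(p_1F_2+p_2F_1-p_1p_2\big)+(1-p_1)(1-p_2)\tilde p'_{00}\tilde F'_{00}$. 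Dividing $N_{00}$ by $p'_{00}$ and cancelling the overall factor $\tfrac14$ yields (\ref{eqn:formula_post_swap_fidelity_noisy_states_non_id}). I expect the main obstacle to be the bookkeeping in the two mixed terms: carefully applying the flip-flop trick to reposition the Pauli operators and register labels, and in particular recognising that the relevant overlap there is the fidelity of the \emph{noisy component} $\sigma_k$ rather than of $\rho_k$ — it is precisely this that produces the factors $(F_k-p_k)/(1-p_k)$ and, after simplification, the asymmetric combination $p_1F_2+p_2F_1-p_1p_2$ in the numerator.
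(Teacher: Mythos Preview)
Your proposal is correct and follows essentially the same approach as the paper: expand $\rho_1\otimes\rho_2$ by linearity into four terms, recognise the three terms containing at least one $\ketbra{\Psi_{00}}$ as instances of perfect teleportation (contributing probability $\tfrac14$ each, and fidelity $1$, $\tilde F_2$, $\tilde F_1$ respectively), then use $F_k=p_k+(1-p_k)\tilde F_k$ to simplify. The only difference is that the paper handles the general outcome $ij$ directly---the teleportation identity gives $(X^iZ^j)$-conjugated copies of $\sigma_k$, whose trace and overlap with $\ket{\Psi_{ij}}$ are unchanged---whereas you first reduce to $ij=00$ via Proposition~\ref{prop:mmt_outcome_independent_non_id}; this detour is harmless but unnecessary.
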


\begin{proof}[Proof of Lemma \ref{lem:formulae_swapping_noisy_states_non_id}]
We start with two states of the form
\begin{align*}
    \rho_1 = p_1 \ketbra{\Psi_{00}} + (1 - p_1)\sigma_1 \\ 
    \rho_2 = p_2 \ketbra{\Psi_{00}} + (1 - p_2)\sigma_2,
\end{align*}
where $F_1  = \bra{\Psi_{00}} \rho_1 \ket{\Psi_{00}}$ and $F_2 = \bra{\Psi_{00}} \rho_2 \ket{\Psi_{00}}$. We carry out the usual swapping protocol: we start with the state $\rho_{1} \otimes \rho_{2}$, which may be expanded as 
\begin{multline}
    \rho_{1} \otimes \rho_{2} = p_1 p_2 \cdot \ket{\Psi_{00}}\bra{\Psi_{00}} \otimes \ket{\Psi_{00}}\bra{\Psi_{00}} + p_1(1-p_2)\cdot \ket{\Psi_{00}}\bra{\Psi_{00}} \otimes \sigma_2 \\ + (1-p_1)p_2 \cdot \sigma_1 \otimes \ket{\Psi_{00}}\bra{\Psi_{00}} + (1-p_1)(1-p_2) \cdot \sigma_1 \otimes \sigma_2.
\end{multline}
Supposing that the middle station measures the BSM outcome $ij$, the output state is
\begin{equation}
   \rho'_{ij} = \frac{1}{p'_{ij}}\bra{\Psi_{ij}} \rho_{1} \otimes \rho_{2} \ket{\Psi_{ij}}_{A_1 A_2},
\end{equation}
where $p'_{ij}$ is the probability of obtaining the BSM outcome $ij$, given by
\begin{equation}
    p'_{ij} = \text{Tr}_{B_1 B_2} \left[ \bra{\Psi_{ij}} \rho_1 \otimes \rho_2 \ket{\Psi_{ij}}_{A_1 A_2} \right].
    \label{eqn:p_ij_trace}
\end{equation}
We now obtain expressions for $p'_{ij}$ and $\rho'_{ij}$ in terms of $p_i$, $F_i$ and $\sigma_i$. From (\ref{eqn:bell_basis}), we have 
\begin{multline}
    \bra{\Psi_{ij}} \rho_1\otimes \rho_2 \ket{\Psi_{ij}}_{A_1 A_2} = \frac{p_1 p_2}{4} \ket{\Psi_{ij}}\bra{\Psi_{ij}} +\frac{p_1(1-p_2)}{4} (X^i Z^j \otimes I_2) \sigma_2 (Z^j X^i \otimes I_2) \\ +\frac{p_2(1-p_1)}{4} (I_2 \otimes X^i Z^j) \sigma_1 (I_2 \otimes Z^j X^i) + (1-p_1)(1-p_2) \bra{\Psi_{ij}} \sigma_1 \otimes \sigma_2 \ket{\Psi_{ij}}_{A_1 A_2},
    \label{eqn:rho_1_times_rho_2_expansion}
\end{multline}
where the first three terms correspond to perfect teleportation (without Pauli corrections). From (\ref{eqn:p_ij_trace}), we now calculate $p'_{ij}$ by taking the trace of the above. Notice that the first three terms are proportional to valid density matrices. Therefore,
\begin{equation}
        p'_{ij} = \frac{p_1 p_2}{4} + \frac{p_1(1-p_2)}{4} + \frac{p_2(1-p_1)}{4} + (1-p_1)(1-p_2) \tilde{p}'_{ij},
        \label{eqn:p_ij_expansion}
\end{equation}
where $\tilde{p}'_{ij} = p'_{ij}(\sigma_1 \otimes \sigma_2)$. Then,
\begin{align*}
   F'_{ij} &= \bra{\Psi_{ij}}\rho'_{ij}  \ket{\Psi_{ij}}_{B_1 B_2} \\  &= \frac{1}{p_{ij}} \left(\frac{p_1 p_2}{4} + \frac{p_1 (1-p_2)}{4} \tilde{F}_2 + \frac{p_2 (1-p_1)}{4} \tilde{F}_1  + (1-p_1)(1-p_2) \tilde{F}'_{ij} \tilde{p}'_{ij} \right)
\end{align*}
where $\tilde{F}_k = \bra{\Psi_{00}} \sigma_k \ket{\Psi_{00}}$ is the fidelity of the noisy components, and $\tilde{F}'_{ij} = F'_{ij}(\sigma_1 \otimes \sigma_2)$.
Recalling the fidelity constraint on our initial states $\rho_k$, we may rewrite
\begin{equation*}
    F_k = p_k + (1-p_k)\tilde{F}_k
\end{equation*}
and simplify our formula for the end-to-end fidelity to 
\begin{equation}
    F'_{ij} = \frac{1}{p'_{ij}}\left(\frac{p_1 p_2}{4} + \frac{p_1 (F_2-p_2)}{4}  + \frac{p_2 (F_1-p_1)}{4}   + (1-p_1)(1-p_2) \tilde{F}'_{ij} \tilde{p}'_{ij}\right).
\end{equation}
\end{proof}

\begin{theorem}[Theorem \ref{thm:tight_upper_bound}, general version]
    Consider performing a postselected swap on a pair of two-qubit states $\rho_1 \otimes \rho_2$ such that $\rho_k \in S_{p_k,F_k}$ for $k=1,2$. Then, 
    \begin{equation}
        F'_{00}(\rho_1\otimes\rho_2) \leq 1 - p_1 (1-F_2) - p_2(1-F_1).
        \label{eqn:upper_bound_non-id}
    \end{equation}
In particular, for the case $p_1 = p_2$, $F_1 = F_2$, the above bound is tight and therefore
    \begin{equation}
        F'_{\max}(p,F) = 1 - 2p(1-F),
    \end{equation}
where $F'_{\max}(p,F)$ is defined in (\ref{eqn:def_F'max}).
\label{thm:tight_upper_bound_non_id}
\end{theorem}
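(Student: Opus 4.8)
\textbf{Overview.} The plan is to prove the general upper bound \eqref{eqn:upper_bound_non-id} first, using the simplified formula from Lemma~\ref{lem:formulae_swapping_noisy_states_non_id}, and then to specialise to $p_1=p_2=p$, $F_1=F_2=F$ and verify that the state $\rho_{\mathrm{opt}}^{\otimes 2}$ built from \eqref{eqn:optimal_state_main_text}--\eqref{eqn:psi_best_state_main_text} saturates it, giving $F'_{\max}(p,F)=1-2p(1-F)$. By Proposition~\ref{prop:mmt_outcome_independent_non_id} it suffices to work with the outcome $ij=00$ throughout.

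\textbf{The upper bound.} Starting from \eqref{eqn:formula_post_swap_fidelity_noisy_states_non_id}, write
\begin{equation}
F'_{00} = \frac{p_1F_2 + p_2F_1 - p_1p_2 + 4(1-p_1)(1-p_2)\tilde p'_{00}\tilde F'_{00}}{p_1+p_2-p_1p_2 + 4(1-p_1)(1-p_2)\tilde p'_{00}}.
\end{equation}
This is a Möbius function of the single nonnegative quantity $x := 4(1-p_1)(1-p_2)\tilde p'_{00}\tilde F'_{00}$ for fixed value of the denominator, but more usefully it is monotone in $\tilde F'_{00}\in[0,1]$ with $\tilde p'_{00}$ held fixed: since the numerator and denominator differ by the $F$-independent part, $F'_{00}$ is increasing in $\tilde F'_{00}$ (the coefficient of $\tilde F'_{00}$ in the numerator is the same positive weight $4(1-p_1)(1-p_2)\tilde p'_{00}$ that appears in the denominator). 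Hence the maximum over all noisy components $\sigma_k$ is attained in the limit $\tilde F'_{00}\to 1$, and then
\begin{equation}
F'_{00} \le \frac{p_1F_2 + p_2F_1 - p_1p_2 + 4(1-p_1)(1-p_2)\tilde p'_{00}}{p_1+p_2-p_1p_2 + 4(1-p_1)(1-p_2)\tilde p'_{00}}.
\end{equation}
The right-hand side is again a Möbius function of $t := 4(1-p_1)(1-p_2)\tilde p'_{00}\ge 0$ of the form $(a+t)/(b+t)$ with $a = p_1F_2+p_2F_1-p_1p_2 \le b = p_1+p_2-p_1p_2$ (because $F_k\le 1$), so it is increasing in $t$ and bounded above by its $t\to\infty$ value, which is $1$. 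That bound ($F'_{00}\le 1$) is too weak, so the real content is that $\tilde F'_{00}$ cannot actually reach $1$ unless the noisy components are themselves nearly maximally entangled, which is constrained by the fixed fidelities $\tilde F_k = (F_k-p_k)/(1-p_k)$. The clean way to capture this: use the trace expansion \eqref{eqn:rho_1_times_rho_2_expansion} directly to bound $F'_{00}\,p'_{00} = \tfrac{p_1p_2}{4} + \tfrac{p_1(F_2-p_2)}{4} + \tfrac{p_2(F_1-p_1)}{4} + (1-p_1)(1-p_2)\tilde F'_{00}\tilde p'_{00}$ from above using $\tilde F'_{00}\tilde p'_{00} = \bra{\Psi_{00}}\bra{\Psi_{00}}\sigma_1\otimes\sigma_2\ket{\Psi_{00}}_{A_2B_2}\ket{\Psi_{00}}_{A_1B_1} \le \min(\tilde F_1,\tilde F_2)$ (the overlap of a state with $\ket{\Psi_{00}}$ on one pair of registers is at most the single-copy fidelity; apply Cauchy--Schwarz / the fact that $\Tr[P\,\sigma_1\otimes\sigma_2] \le \Tr[P_1\sigma_1]$ for a projector refinement), together with the matching lower bound $p'_{00}\ge \tfrac14(p_1+p_2-p_1p_2)$ from dropping the nonnegative $\tilde p'_{00}$ term. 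Combining these and simplifying with $\tilde F_k(1-p_k) = F_k-p_k$ collapses to $F'_{00}\le 1-p_1(1-F_2)-p_2(1-F_1)$. I expect this overlap inequality $\tilde F'_{00}\tilde p'_{00}\le\min(\tilde F_1,\tilde F_2)$ (and getting the denominator estimate to line up exactly) to be the main obstacle — it is the one genuinely nontrivial inequality, and care is needed because $p'_{00}$ appears in the denominator so one cannot bound numerator and denominator independently without losing tightness.

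\textbf{Tightness.} For $p_1=p_2=p$, $F_1=F_2=F$, take $\rho_{\mathrm{opt}} = p\ketbra{\Psi_{00}} + (1-p)\ketbra{\psi}$ with $\ket\psi = \sqrt{\tilde F}\ket{\Psi_{00}} + \sqrt{1-\tilde F}\ket{\Psi_{11}}$ and $\tilde F = (F-p)/(1-p)$. First check $\rho_{\mathrm{opt}}\in S_{p,F}$: the noisy component is $\ketbra\psi$, a valid density matrix, and $\bra{\Psi_{00}}\rho_{\mathrm{opt}}\ket{\Psi_{00}} = p + (1-p)\tilde F = F$. Then compute the swap statistics of the noisy components $\sigma_1=\sigma_2=\ketbra\psi$ directly: since $\ket\psi$ is supported on $\mathrm{span}\{\ket{\Psi_{00}},\ket{\Psi_{11}}\}$, which is a rank-two Bell-diagonal subspace (coefficients $(\tilde F,0,0,1-\tilde F)$ after dephasing — but $\ketbra\psi$ is pure, not Bell-diagonal, so one must carry the off-diagonal Bell term through the swap), use \eqref{eqn:swap_pure_bell_states_indices} / the flip-flop trick to see that swapping $\ket\psi\otimes\ket\psi$ with outcome $\Psi_{00}$ leaves the state in the $\{\Psi_{00},\Psi_{11}\}$ span, and a short computation gives $\tilde p'_{00} = \tfrac14(\tilde F^2 + (1-\tilde F)^2)$ and $\tilde F'_{00} = \tilde F^2/(\tilde F^2+(1-\tilde F)^2)$, hence $\tilde p'_{00}\tilde F'_{00} = \tilde F^2/4$. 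Substituting into \eqref{eqn:formula_post_swap_fidelity_noisy_states_non_id} with $p_1=p_2=p$, $F_1=F_2=F$:
\begin{equation}
F'_{00}(\rho_{\mathrm{opt}}^{\otimes 2}) = \frac{2pF - p^2 + (1-p)^2\tilde F^2}{2p - p^2 + (1-p)^2(\tilde F^2+(1-\tilde F)^2)}.
\end{equation}
Using $(1-p)\tilde F = F-p$ and expanding, both numerator and denominator should factor so that the ratio equals $1-2p(1-F)$ exactly; this is a routine algebraic verification (substitute $(1-p)^2\tilde F^2 = (F-p)^2$ and $(1-p)^2(1-\tilde F)^2 = (1-F)^2$). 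This establishes $F'_{\max}(p,F) \ge 1-2p(1-F)$, matching the general bound \eqref{eqn:upper_bound_non-id} specialised to the identical case, and completes the proof. Finally, one records the physical interpretation: $\ket\psi$ is $\ket{\Psi_{00}}$ with one qubit rotated by $R_Y(2\arccos\sqrt{\tilde F})$ (up to the $X^1Z^1$ convention in the Bell basis), so $\rho_{\mathrm{opt}}$ is the state obtained by applying that $Y$-rotation with probability $1-p$.
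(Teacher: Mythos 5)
Your upper-bound argument has a genuine gap at exactly the step you flagged. The inequality $\tilde{F}'_{00}\tilde{p}'_{00}\le\min(\tilde{F}_1,\tilde{F}_2)$ is false. Writing $\tilde{F}'_{00}\tilde{p}'_{00}=\bra{\phi}\sigma_1\otimes\sigma_2\ket{\phi}$ with $\ket{\phi}=\ket{\Psi_{00}}_{A_1A_2}\otimes\ket{\Psi_{00}}_{B_1B_2}$, there is no operator inequality $\ketbra{\phi}\le\ketbra{\Psi_{00}}_{B_1A_1}\otimes I_{A_2B_2}$ (the projectors do not nest: the reduced state of $\ket{\phi}$ on $B_1A_1$ is maximally mixed), so the ``projector refinement'' has no basis; and a concrete counterexample is $\sigma_1=\sigma_2=\ketbra{\Psi_{11}}$, for which Lemma \ref{lem:swap_bell_diagonal} gives a perfect Bell pair for every BSM outcome, so $\tilde{F}'_{00}\tilde{p}'_{00}=1/4$ while $\tilde{F}_1=\tilde{F}_2=0$. (Your own saturating state $\ket{\psi}$ is also a counterexample whenever $\tilde{F}<1/4$, so the two halves of your proposal contradict each other.) Moreover, even granting such an inequality, pairing an upper bound on the numerator with the independent lower bound $p'_{00}\ge\tfrac14(p_1+p_2-p_1p_2)$ cannot reproduce the claimed bound: for $p_1=p_2=1/2$, $F_1=F_2=3/4$ your combination evaluates to $4/3>1$. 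The step that actually works (and is how the paper argues) is: (i) lower-bound the denominator by replacing $\tilde{p}'_{00}$ with the smaller quantity $\tilde{F}'_{00}\tilde{p}'_{00}$ (valid since $\tilde{F}'_{00}\le 1$), so that $F'_{00}\le(a+x)/(b+x)$ is a nondecreasing function of the single variable $x=4(1-p_1)(1-p_2)\tilde{F}'_{00}\tilde{p}'_{00}$, with $a=p_1F_2+p_2F_1-p_1p_2\le b=p_1+p_2-p_1p_2$; and (ii) use $\tilde{F}'_{00}\tilde{p}'_{00}=\bra{\phi}\sigma_1\otimes\sigma_2\ket{\phi}\le 1/4$, which holds because $\ket{\phi}$ is maximally entangled across the cut $(B_1A_1){:}(A_2B_2)$ with local dimension $4$ and $\sigma_1\otimes\sigma_2$ is separable across that cut, so its fidelity with a maximally entangled state is at most $1/d=1/4$. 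Substituting $x=(1-p_1)(1-p_2)$ makes the denominator exactly $1$ and yields $1-p_1(1-F_2)-p_2(1-F_1)$; the correct, fidelity-independent ceiling is $1/4$, not $\min(\tilde{F}_1,\tilde{F}_2)$.

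The tightness computation is also incorrect, and for the same reason: you discard the coherences. Carrying the Bell-basis expansion of $\ket{\psi}^{\otimes 2}$ through the projection, the two diagonal terms $\ket{\Psi_{00}}\ket{\Psi_{00}}$ and $\ket{\Psi_{11}}\ket{\Psi_{11}}$ both project onto $\tfrac12\ket{\Psi_{00}}_{B_1B_2}$ and add coherently, while the cross terms project onto $\pm\tfrac12\ket{\Psi_{11}}_{B_1B_2}$ with opposite signs and cancel; hence, for the relevant BSM outcome, the unnormalised output is exactly $\tfrac12\ket{\Psi_{00}}_{B_1B_2}$, i.e.\ $\tilde{p}'_{00}=1/4$ and $\tilde{F}'_{00}=1$, saturating the $1/4$ bound for every $\tilde{F}$. (Which outcome label carries the perfect state depends on the Pauli-placement convention for $\ket{\Psi_{11}}$ on each link; this is harmless by Proposition \ref{prop:mmt_outcome_independent_non_id}, or by conjugating one input with a local Pauli.) Your claimed statistics $\tilde{p}'_{00}=\tfrac14(\tilde{F}^2+(1-\tilde{F})^2)$ and $\tilde{F}'_{00}\tilde{p}'_{00}=\tilde{F}^2/4$ do not describe the pure input, and the deferred ``routine algebraic verification'' fails: substituting them into \eqref{eqn:formula_post_swap_fidelity_noisy_states_non_id} gives $F'_{00}=F^2/\bigl(1-2(1-F)(F-p)\bigr)$, which at $p=0$ equals $F^2/(F^2+(1-F)^2)<1$ rather than the required $1-2p(1-F)=1$. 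So, as written, neither the bound nor its saturation is established; the interference computation is the essential ingredient in both halves.
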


\begin{proof}[Proof of Theorem \ref{thm:tight_upper_bound_non_id}]
    From Lemma \ref{lem:formulae_swapping_noisy_states_non_id}, after measuring  we obtain an outcome fidelity of 
    \begin{equation}
        F'_{00} = \frac{p_1 F_2 + p_2 F_1 - p_1 p_2 + 4(1-p_1)(1-p_2)\tilde{F}_{00}'\tilde{p}_{00}'}{p_1 + p_2 - p_1 p_2 + 4(1-p_1)(1-p_2) \tilde{p}_{00}'}.
        \label{eqn:formula_F'_noisy_state_appendix}
    \end{equation}
    Now, since $\tilde{F}_{00} \leq 1$, we have 
    \begin{equation}
         F'_{00} \leq \frac{p_1 F_2 + p_2 F_1 - p_1 p_2 + 4(1-p_1)(1-p_2)\tilde{F}_{00}'\tilde{p}_{00}'}{p_1 + p_2 - p_1 p_2 + 4(1-p_1)(1-p_2) \tilde{F}_{00}'\tilde{p}_{00}'}.
        \label{eqn:bound_1}
    \end{equation}
    We notice that the RHS of (\ref{eqn:bound_1}) is of the form 
    \begin{equation}
        \frac{a+x}{b+x} = 1 - \frac{b-a}{b+x},
        \label{eqn:rational_fn_x}
    \end{equation}
    for $b-a = p_1(1-F_2) + p_2 (1-F_1)\geq 0$. Then, (\ref{eqn:rational_fn_x}) is a non-decreasing function of $x = \tilde{F}_{00}'\tilde{p}_{00}'$. 
    We recall  that
    \begin{equation}
       \tilde{F}_{00}'\tilde{p}_{00}' = \bra{\phi} \sigma_1 \otimes \sigma_2 \ket{\phi},
    \end{equation}
    where
    \begin{align}
        \ket{\phi} &=  \ket{\Psi_{00}}_{A_1 A_2} \otimes \ket{\Psi_{00}}_{B_1 B_2}.
    \end{align}
Since the state $\ket{\phi}$ is a product of two maximally entangled qubit states, this is a maximally entangled state between two registers of dimension $d = 4$, shared between the four-dimensional registers $B_1 A_1$ and $A_2 B_2$. Therefore, since $\sigma_1 \otimes \sigma_2$ is unentangled with repect to these registers, from \cite{horodecki1999general}, the fidelity to $\ket{\phi}$ is bounded as
\begin{equation}
    \tilde{F}_{00}'\tilde{p}_{00}' =  \bra{\phi} \sigma_1 \otimes \sigma_2 \ket{\phi} \leq \frac{1}{d} = \frac{1}{4}.
\end{equation}
Combining the above results, it follows that 
    \begin{align}
         F'_{00} &\leq \frac{p_1 F_2 + p_2 F_1 - p_1 p_2 + 4(1-p_1)(1-p_2)\cdot \frac{1}{4}}{p_1 + p_2 - p_1 p_2 + 4(1-p_1)(1-p_2)\cdot \frac{1}{4}}\nonumber  \\ &= 1 - p_1 (1-F_2) - p_2 (1-F_1).
    \end{align}
For the case $p_1 = p_2 = p$, $F_1 = F_2 = F$, we now show that the above bound is tight. Letting $\rho_k \in S_{p,F}$ such that 
\begin{align}
    \rho_1 = \rho_2 = p\ketbra{\Psi_{00}} + (1-p)\ketbra{\psi},
    \label{eqn:optimal_state}
\end{align}
with 
\begin{equation}
    \ket{\psi} = \sqrt{\tilde{F}} \ket{\Psi_{00}} + \sqrt{1-\tilde{F}} \ket{\Psi_{11}}
    \label{eqn:psi_best_state}
\end{equation}
where $\tilde{F} = (F-p)/(1-p)$, we now compute the values of $\tilde{F}'_{00}$ and $\tilde{p}'_{00}$ for $\sigma_1 = \sigma_2 = \ket{\psi}$.

We firstly expand the initial state as
\begin{equation}
    \ket{\psi}^{\otimes 2} = \tilde{F} \ket{\Psi_{00}}^{\otimes 2} +(1- \tilde{F} ) \ket{\Psi_{11}}^{\otimes 2} + \sqrt{\tilde{F}(1-\tilde{F})} \ket{\Psi_{00}} \otimes \ket{\Psi_{11}} + \sqrt{\tilde{F}(1-\tilde{F})} \ket{\Psi_{11}} \otimes \ket{\Psi_{00}}.
    \label{eqn:projection_optimal_state}
\end{equation}
We compute the action of a Bell-state measurement on each component as the following. Recalling that $\ket{\Psi_{11}}_{AB} = I \otimes X Z \ket{\Psi_{00}}_{AB}$ and that 
\begin{equation}
    \bra{\Psi_{00}}_{A_1 A_2}\Big[ \ket{\Psi_{00}} \otimes \ket{\Psi_{00}} \Big] = \frac{1}{2} \ket{\Psi_{00}}_{B_1 B_2},
\end{equation}
we have \Guus{I think there should be a minus in the below equation, XZ picks up a minus when pingponging; equivalently, $\ket{\Psi_{00}}$ is stabilized by $-YY$, not by $YY$.} \bd{I don't think this matters because $-Y\otimes Y = - i XZ \otimes i XZ =  XZ \otimes XZ$. So $\ket{\Psi_{00}}$ is also stabilised by $XZ$. The code did not agree because of the definition of the Bell basis I think: if we swap $\ket{\psi}^{\otimes 2}$ in the code, different BSM outcomes give unit end-to-end fidelity. However, if in the code we instead swap $\ket{\psi}\otimes (\sqrt{\tilde{F}} \ket{\Psi_{00}} - \sqrt{1-\tilde{F}} \ket{\Psi_{11}}) $ (essentially defining the Bell basis such that the Pauli operators are applied to the qubits in the middle) then it gives the same outcome.}
\begin{align}
    \bra{\Psi_{00}}_{A_1 A_2}  \Big[ \ket{\Psi_{11}} \otimes \ket{\Psi_{11}} \Big]  &= (XZ)_{B_1} (XZ)_{B_2} \bra{\Psi_{00}}_{A_1 A_2}  \Big[ \ket{\Psi_{00}} \otimes \ket{\Psi_{00}} \Big] \\ &= \frac{1}{2} (XZ)_{B_1} (XZ)_{B_2} \ket{\Psi_{00}}_{B_1 B_2} \\ &= \frac{1}{2} \ket{\Psi_{00}}_{B_1 B_2},
\end{align}
where we have used the flip-flop trick (\ref{eqn:flip-flop}) to move the Pauli gates from one register to the other. Using the same method, it may be shown that
\begin{align}
    \bra{\Psi_{00}}_{A_1 A_2}  \Big[ \ket{\Psi_{00}} \otimes \ket{\Psi_{11}} \Big] &= (XZ)_{B_2}  \frac{1}{2} \ket{\Psi_{00}}_{B_1 B_2} \nonumber \\ &= \frac{1}{2} \ket{\Psi_{11}}_{B_1 B_2} 
\end{align}
and 
\begin{align}
    \bra{\Psi_{00}}_{A_1 A_2}  \Big[ \ket{\Psi_{11}} \otimes \ket{\Psi_{00}} \Big] &= (XZ)_{B_1} \cdot \frac{1}{2} \ket{\Psi_{00}}_{B_1 B_2} \\ &= (ZX)_{B_2}  \frac{1}{2} \ket{\Psi_{00}}_{B_1 B_2} \\ &= - \frac{1}{2} \ket{\Psi_{11}}_{B_1 B_2}.
\end{align}
Then, recalling (\ref{eqn:projection_optimal_state}), we see that
\begin{align}
    \bra{\Psi_{00}}_{A_1 A_2} \left[ \ket{\psi}^{\otimes 2}\right] &= \left(\frac{\tilde{F}}{4} + \frac{1- \tilde{F}}{4}\right)\ket{\Psi_{00}}_{B_1 B_2} + \sqrt{\tilde{F}(1-\tilde{F})}\left( \ket{\Psi_{11}} - \ket{\Psi_{11}} \right) \\ &= \frac{1}{4}\ket{\Psi_{00}}_{B_1 B_2}.
\end{align}
We therefore see that 
\begin{equation}
    \tilde{p}'_{00} = \frac{1}{4}, \;\;\;\;\;\; \tilde{F}'_{00} = 1,
\end{equation}
and substituting these values into (\ref{eqn:formula_F'_noisy_state_appendix}), we see that the bound is saturated.
\Guus{Is this the right equation? It's in the future!}
\end{proof}
\begin{proposition}[Proposition \ref{prop:analytical_lower_bound}, general version]
Consider performing a postselected swap on a pair of two-qubit states $\rho_1 \otimes \rho_2$ such that $\rho_k \in S_{p_k,F_k}$ for $k=1,2$. Then, 
    \begin{equation}
        F'_{00}(\rho_1 \otimes \rho_2) \geq  \frac{p_1 F_2 + p_2 F_1 - p_1 p_2}{1+(1-p_1)(1-p_2) }.
    \end{equation}
\label{prop:analytical_lower_bound_non_id}
\end{proposition}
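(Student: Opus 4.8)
The plan is to work directly from the closed-form expression for the postselected end-to-end fidelity established in Lemma~\ref{lem:formulae_swapping_noisy_states_non_id}, namely
\begin{equation*}
F'_{00}(\rho_1 \otimes \rho_2) = \frac{p_1 F_2 + p_2 F_1 - p_1 p_2 + 4(1-p_1)(1-p_2)\,\tilde{F}'_{00}\,\tilde{p}'_{00}}{p_1 + p_2 - p_1 p_2 + 4(1-p_1)(1-p_2)\,\tilde{p}'_{00}},
\end{equation*}
where $\tilde{F}'_{00} = F'_{00}(\sigma_1 \otimes \sigma_2)$ and $\tilde{p}'_{00} = p'_{00}(\sigma_1 \otimes \sigma_2)$ are the swap statistics of the noise components, and $\rho_k = p_k \ketbra{\Psi_{00}} + (1-p_k)\sigma_k$. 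Since $\sigma_1,\sigma_2$ are genuine density matrices we have $\tilde{F}'_{00} \ge 0$ and $\tilde{p}'_{00} \ge 0$, and since $p_k \le F_k$ we have $p_1 F_2 + p_2 F_1 - p_1 p_2 = p_1(F_2 - p_2) + p_2 F_1 \ge 0$. The first step is therefore to discard the nonnegative term $4(1-p_1)(1-p_2)\tilde{F}'_{00}\tilde{p}'_{00}$ from the numerator, which only decreases the value:
\begin{equation*}
F'_{00}(\rho_1 \otimes \rho_2) \ge \frac{p_1 F_2 + p_2 F_1 - p_1 p_2}{p_1 + p_2 - p_1 p_2 + 4(1-p_1)(1-p_2)\,\tilde{p}'_{00}}.
\end{equation*}

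Next I would bound the denominator from above by showing $\tilde{p}'_{00} \le \tfrac12$. Exactly as in the proof of Theorem~\ref{thm:tight_upper_bound_non_id}, write $\tilde{p}'_{00} = \Tr[\,\ketbra{\Psi_{00}}_{A_1 A_2}\,\sigma_1 \otimes \sigma_2\,] = \bra{\Psi_{00}}_{A_1 A_2}\big(\Tr_{B_1}\sigma_1 \otimes \Tr_{B_2}\sigma_2\big)\ket{\Psi_{00}}$; this is the overlap of a product --- hence separable --- two-qubit state across the $A_1$--$A_2$ cut with the maximally entangled state $\ket{\Psi_{00}}$ of that $d=2$ dimensional system, so the bound of \cite{horodecki1999general} gives $\tilde{p}'_{00} \le 1/d = \tfrac12$. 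Since the numerator is nonnegative and the denominator positive (it equals $4p'_{00}$), enlarging the denominator only decreases the fraction, so
\begin{equation*}
F'_{00}(\rho_1 \otimes \rho_2) \ge \frac{p_1 F_2 + p_2 F_1 - p_1 p_2}{p_1 + p_2 - p_1 p_2 + 2(1-p_1)(1-p_2)} = \frac{p_1 F_2 + p_2 F_1 - p_1 p_2}{1 + (1-p_1)(1-p_2)},
\end{equation*}
using the identity $p_1 + p_2 - p_1 p_2 + 2(1-p_1)(1-p_2) = 2 - p_1 - p_2 + p_1 p_2 = 1 + (1-p_1)(1-p_2)$. Specialising to $p_1 = p_2 = p$, $F_1 = F_2 = F$ recovers Proposition~\ref{prop:analytical_lower_bound}.

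The whole argument is elementary monotonicity of a positive rational function (drop a nonnegative summand from the numerator, inflate the denominator) together with the inequality $p_k \le F_k$; the one ingredient with genuine content is $\tilde{p}'_{00} \le \tfrac12$, which I expect to be the part worth stating carefully, although it is nothing more than the separable-state singlet-fraction bound already invoked for the upper bound in Theorem~\ref{thm:tight_upper_bound_non_id}, now used in dimension $2$ instead of $4$. The only edge case to dispatch is that $F'_{00}$ is defined only when $p'_{00} > 0$; this fails only if $p_1 = p_2 = 0$ and $\tilde{p}'_{00} = 0$, in which case the claimed lower bound is zero and the statement is vacuous.
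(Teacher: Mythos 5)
Your proposal is correct and follows essentially the same route as the paper's proof: start from the formula of Lemma \ref{lem:formulae_swapping_noisy_states_non_id}, drop the nonnegative term $4(1-p_1)(1-p_2)\tilde{F}'_{00}\tilde{p}'_{00}$ from the numerator, and bound $\tilde{p}'_{00}\leq 1/2$ via the separable-state overlap bound with a $d=2$ maximally entangled state, then simplify the denominator. Your extra remarks (nonnegativity of the numerator via $p_k\leq F_k$, and the $p'_{00}=0$ edge case) are sound refinements of details the paper leaves implicit.
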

\begin{proof}[Proof of Proposition \ref{prop:analytical_lower_bound_non_id}]
    From Lemma \ref{lem:formulae_swapping_noisy_states_non_id}, after measuring  we obtain an outcome fidelity of 
    \begin{equation}
        F'_{00} = \frac{p_1 F_2 + p_2 F_1 - p_1 p_2 + 4(1-p_1)(1-p_2)\tilde{F}_{00}'\tilde{p}_{00}'}{p_1 + p_2 - p_1 p_2 + 4(1-p_1)(1-p_2) \tilde{p}_{00}'}.
    \end{equation}
Since $\tilde{F}_{00}'\tilde{p}_{00}'\geq 0$, we have
    \begin{equation}
        F'_{00} \geq \frac{p_1 F_2 + p_2 F_1 - p_1 p_2}{p_1 + p_2 - p_1 p_2 + 4(1-p_1)(1-p_2) \tilde{p}_{00}'}.
    \end{equation}
Now, recalling that 
\begin{align}
    \tilde{p}'_{00} = \Tr_{B_1 B_2} \left[\ketbra{\Psi_{00}}_{A_1 A_2} \sigma_1 \otimes \sigma_2  \right],
\end{align}
we see that $\tilde{p}'_{00} \leq 1/2$ as this is the fidelity between a separable state and a maximally entangled state with $d=2$ (see proof of Theorem \ref{thm:tight_upper_bound_non_id}). Therefore, 
    \begin{align}
        F'_{00} &\geq \frac{p_1 F_2 + p_2 F_1 - p_1 p_2}{p_1 + p_2 - p_1 p_2 + 2(1-p_1)(1-p_2) } \\ &=  \frac{p_1 F_2 + p_2 F_1 - p_1 p_2}{1+(1-p_1)(1-p_2) }.
    \end{align}
Recalling (\ref{eqn:def_F'min}), for the case $p_1 = p_2 = p$ and $F_1 = F_2 = F$, the result 
\begin{equation}
    F'_{\min}(p,F) \geq \frac{2pF - p^2}{2p - p^2 + 2(1-p)^2 }
\end{equation}
follows directly.
\end{proof}

\newpage 

\section{Further details of SDP formulation}
\subsection{SDP symmetrisation}
\label{app:sdp}

Here, we perform a symmetry reduction of the optimisation problem (\ref{opt:PPT1}), which we restate below for convenience:
\begin{mini}|s|
{\sigma}{ \mathrm{Tr}\big[ \ket{\Psi_{00}}\bra{\Psi_{00}}_{B_1 B_2} \ket{\Psi_{00}}\bra{\Psi_{00}}_{A_1 A_2} \sigma \big]}
{}{}
\addConstraint{ \mathrm{Tr}\big[\ket{\Psi_{00}}\bra{\Psi_{00}}_{A_1 A_2} \sigma \big] = \tilde{\delta}}
\addConstraint{\mathrm{Tr}\big[\ket{\Psi_{00}}\bra{\Psi_{00}}_{A_1 B_1} \sigma \big] = \tilde{F}}
\addConstraint{\mathrm{Tr}\big[\ket{\Psi_{00}}\bra{\Psi_{00}}_{A_2 B_2} \sigma \big] = \tilde{F}}
\addConstraint{\mathrm{Tr}[\sigma]= 1}
\addConstraint{\sigma\geq 0, \;\;\;\sigma^{\Gamma}\geq 0}.\label{opt:PPT1_app}
\end{mini}
In the above, the number of parameters involved in the optimisation is the number required to parameterise a quantum state over four qubits, which is of the order of $ 16^2 = 256$. In the following, we will reduce this number by identifying symmetries of the above optimisation problem, which will then enable us to find the solution more efficiently.

Firstly, we rewrite the above in terms of a rotated target state. In particular, we notice that since the set of PPT states is invariant under the application of local unitaries, it follows that the above is equivalent to the following. Applying the $ZX$ operator to registers $A_1$ and $B_2$, the objective function of the above transforms to
\begin{multline}
    \mathrm{Tr}\big[ \ketbra{\Psi_{00}}_{B_1 B_2} \ketbra{\Psi_{00}}_{A_1 A_2} (ZX)_{B_1} (ZX)_{A_2} \sigma (XZ)_{B_1} (XZ)_{A_2} \big] \\ = \mathrm{Tr}\big[ \ketbra{\Psi_{11}}_{B_1 B_2} \ket{\Psi_{11}}\bra{\Psi_{11}}_{A_1 A_2}  \sigma  \big],
\end{multline}
and the other constraints transform similarly, and so (\ref{opt:PPT1_app}) is equivalent to
\begin{mini}|s|
{\sigma}{ \mathrm{Tr}\big[ \ket{\Psi_{11}}\bra{\Psi_{11}}_{B_1 B_2} \ket{\Psi_{11}}\bra{\Psi_{11}}_{A_1 A_2}  \sigma \big]}
{}{}
\addConstraint{ \mathrm{Tr}\big[\ket{\Psi_{11}}\bra{\Psi_{11}}_{A_1 A_2} \sigma \big] = \tilde{\delta}}
\addConstraint{\mathrm{Tr}\big[\ket{\Psi_{11}}\bra{\Psi_{11}}_{A_1 B_1} \sigma \big] = \tilde{F}}
\addConstraint{\mathrm{Tr}\big[\ket{\Psi_{11}}\bra{\Psi_{11}}_{A_2 B_2} \sigma \big] = \tilde{F}}
\addConstraint{\mathrm{Tr}[\sigma]= 1} \addConstraint{\sigma\geq 0, \;\;\;\sigma^{\Gamma}\geq 0},\label{opt:PPT1_rotation}
\end{mini}
The reason for studying (\ref{opt:PPT1_rotation}) instead of (\ref{opt:PPT1_app}) is due to the symmetry properties of the $\ket{\Psi_{11}}$ state. In particular, for any one-qubit unitary $U$, the state $\ket{\Psi_{11}}$ satisfies
\begin{equation}
    (U\otimes U) \ketbra{\Psi_{11}} (U\otimes U)^{\dag} = \ketbra{\Psi_{11}}.
\end{equation} 
Therefore, under the transformation
\begin{equation}
    \sigma \mapsto (U^{\otimes 4}) \sigma (U^{\otimes 4})^{\dag},
    \label{eqn:map_four_unitary}
\end{equation}
it can be seen that the objective function and constraints of (\ref{opt:PPT1_rotation}) are invariant. If $\sigma_{\mathrm{opt}}$ is an optimal solution of (\ref{opt:PPT1_rotation}), then 
\begin{equation}
   \overline{\sigma}_{\mathrm{opt}} = \int (U^{\otimes 4}) \sigma_{\mathrm{opt}} (U^{\otimes 4})^{\dag} \dd U
\end{equation}
is also optimal, where the integration is over the Haar measure. The state $\overline{\sigma}_{\mathrm{opt}}$ is invariant under the map (\ref{eqn:map_four_unitary}). In order to solve (\ref{opt:PPT1_rotation}) it therefore suffices to optimise over the set of operators that are invariant under the symmetry (\ref{eqn:map_four_unitary}). These states are given by 
\begin{equation}
   \sum_{\tau\in S_4} r_{\tau} M_{\tau} : r_{\tau} \in \mathbb{C},
   \label{eqn:invariant_states}
\end{equation}
where $S_4$ is the symmetric group. In the above, 
\begin{equation}
    M_{\tau} = \sum_{\vec{k}\in \{0,1\}^4} \ket{\tau (\vec{k})} \bra{\vec{k}}
\end{equation}
where $\tau(\vec{k}) = (k_{\tau^{-1}(1)},k_{\tau^{-1}(2)},k_{\tau^{-1}(3)},k_{\tau^{-1}(4)})$. Then, $M_{\tau}$ is the operator that permutes the four registers according to the permutation $\tau$.
The unitary operators $M_{\tau}$ form a representation of the symmetric group $S_4$, and in particular satisfy
\begin{equation}
    M_{\tau_1} M_{\tau_2} = M_{\tau_1 \tau_2}.
\end{equation}
We therefore see from (\ref{eqn:invariant_states}) that in order to parameterise a state that is invariant under the map (\ref{eqn:map_four_unitary}), one requires a maximum of $2\cdot|S_4|=48$ parameters. Now, we make use of the identity 
\begin{equation}
    \ketbra{\Psi_{11}} = \frac{1}{2}\left(I_2 - \mathrm{SWAP}\right),
\end{equation}
where $\mathrm{SWAP}$ is the operation that swaps the two qubits, and we will rewrite the constraints and objective function of (\ref{opt:PPT1_rotation}).
In the following, we will denote a permutation by its decomposition into cycles \cite{sagan2013symmetric}. For example, the permutation that swaps registers $A_1$ and $A_2$ is denoted by $(A_1 A_2)$. Then, given the symmetrised form (\ref{eqn:invariant_states}), the first constraint of (\ref{opt:PPT1_rotation}) becomes
\begin{align}
    \tilde{\delta} &= \Tr\left[\ketbra{\Psi_{11}}_{A_1 A_2} \sigma \right] \\ &= \Tr\left[\frac{1}{2}\left(M_e - M_{(A_1 A_2)}\right) \sum_{\tau \in S_4} r_{\tau} M_{\tau} \right] \\ &= \frac{1}{2} \Tr \left[\sum_{\tau \in S_4} r_{\tau}\left(M_{\tau} -  M_{(A_1 A_2)\tau}\right) \right] \\ &= \frac{1}{2} \sum_{\tau \in S_4} r_{\tau}\left( \Tr \left[M_{\tau} \right]- \Tr \left[M_{(A_1 A_2)\tau} \right]\right) \\ &= v^T r,
\end{align}
where $v$ is a vector indexed by elements of $S_4$, with $v_{\tau}\coloneqq \left( \Tr \left[M_{\tau} \right]- \Tr \left[M_{(A_1 A_2)\tau} \right]\right)/2$.
This is a linear constraint on the vector $r$. We may perform the same procedure for all constraints and the objective function in (\ref{opt:PPT1_rotation}), transforming this into
\begin{mini}|s|
{r\in \mathbb{C}^{|S_4|}}{  u^T r }
{}{}
\addConstraint{ v^T r = \tilde{\delta}}
\addConstraint{w_1^T r  = \tilde{F}, \;\;\; w_2^T r = \tilde{F}}
\addConstraint{x^T r= 1}
\addConstraint{\sum_{\tau \in S_4} r_{\tau}M_{\tau} = \Big(\sum_{\tau \in S_4} r_{\tau}M_{\tau}\Big)^{\dagger} }
\addConstraint{\sum_{\tau \in S_4} r_{\tau}M_{\tau}\geq 0, \;\;\;\sum_{\tau \in S_4} r_{\tau}(M_{\tau})^{\Gamma}\geq 0 }.\label{opt:PPT_symmetrised}
\end{mini}
In the above, $u$, $v$ $w$ and $x$ are all vectors indexed by elements of $S_4$, with  
\begin{align}
u_{\tau}&\coloneqq \frac{1}{4}\left( \Tr \left[M_{\tau} \right]- \Tr \left[M_{(A_1 A_2)\tau} \right] - \Tr \left[M_{(A_1 A_2)\tau} \right] + \Tr \left[M_{(A_1 A_2)(B_1 B_2)\tau} \right]\right) \\
    v_{\tau}&\coloneqq \frac{1}{2}\left( \Tr \left[M_{\tau} \right]- \Tr \left[M_{(A_1 A_2)\tau} \right]\right) \\ (w_k)_{\tau} &\coloneqq \frac{1}{2}\left( \Tr \left[M_{\tau} \right]- \Tr \left[M_{(A_k B_k)\tau} \right]\right) \\ x_{\tau}&\coloneqq \Tr \left[M_{\tau} \right] .
\end{align}
The values $\Tr\left[M_{\tau} \right]$, and therefore the vectors $u$, $v$, $w$ and $x$, may be computed using the following proposition.
\begin{proposition}
    Suppose that $\tau, \tau' \in S_4$ are conjugate, i.e. there is a $ \nu$ such that $\tau' = \nu^{-1} \tau \nu$. Then, 
    \begin{equation*}
        \mathrm{Tr}(M_{\tau'}) = \mathrm{Tr}(M_{\tau}).
    \end{equation*}
    In particular, $\mathrm{Tr}[M_{\tau}]$ is determined by the conjugacy class (cycle type) of $\tau$, of which there are the following five possibilities:
    \begin{align}
        \mathrm{Tr}[M_{\tau}] = \begin{cases}
            16 \:\text{ if cycle type 1+1+1+1} \\ 
            8 \:\text{ if 2+1+1}
            \\ 
            4 \:\text{ if 2+2}
            \\ 
            4 \:\text{ if 3+1}
            \\ 
            2 \:\text{ if 4}.
        \end{cases}
        \label{eqn:conj_classes_trace}
    \end{align}
    \label{lem:trace_permutation_operators}
\end{proposition}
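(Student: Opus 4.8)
The plan is to compute $\mathrm{Tr}[M_\tau]$ directly from the definition by evaluating it in the computational basis, and then to read off the dependence on the cycle type.

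First I would record that $\tau \mapsto M_\tau$ is a group homomorphism (as stated just above, $M_{\tau_1}M_{\tau_2} = M_{\tau_1\tau_2}$, and $M_e = I$), so each $M_\tau$ is invertible with $M_\tau^{-1} = M_{\tau^{-1}}$. Hence if $\tau' = \nu^{-1}\tau\nu$ then $M_{\tau'} = M_\nu^{-1} M_\tau M_\nu$, and cyclicity of the trace gives $\mathrm{Tr}[M_{\tau'}] = \mathrm{Tr}[M_\tau]$. This establishes the first claim; equivalently one may simply note that conjugate permutations in $S_4$ have the same cycle type, which is all the computation below actually uses.

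Next I would compute the trace. Using the basis $\{\ket{\vec k} : \vec k \in \{0,1\}^4\}$ and the definition $M_\tau = \sum_{\vec k} \ket{\tau(\vec k)}\bra{\vec k}$ with $\tau(\vec k) = (k_{\tau^{-1}(1)},\dots,k_{\tau^{-1}(4)})$, one has
\begin{equation}
\mathrm{Tr}[M_\tau] = \sum_{\vec k \in \{0,1\}^4} \langle \vec k \,|\, \tau(\vec k) \rangle = \#\big\{\vec k \in \{0,1\}^4 : \tau(\vec k) = \vec k\big\}.
\end{equation}
A string satisfies $\tau(\vec k) = \vec k$, i.e. $k_{\tau^{-1}(i)} = k_i$ for all $i$, precisely when $\vec k$ is constant along every cycle of $\tau$ acting on $\{1,2,3,4\}$ --- the appearance of $\tau^{-1}$ rather than $\tau$ is immaterial, since $\tau$ and $\tau^{-1}$ have the same cycles. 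Writing $c(\tau)$ for the number of cycles of $\tau$ (counting fixed points as $1$-cycles), the number of such strings is $2^{c(\tau)}$, so $\mathrm{Tr}[M_\tau] = 2^{c(\tau)}$.

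Finally I would tabulate $c(\tau)$ over the five conjugacy classes of $S_4$: cycle type $1{+}1{+}1{+}1$ has $c = 4$; $2{+}1{+}1$ has $c = 3$; $2{+}2$ has $c = 2$; $3{+}1$ has $c = 2$; $4$ has $c = 1$. Substituting into $2^{c(\tau)}$ yields $16,\,8,\,4,\,4,\,2$ respectively, which is exactly (\ref{eqn:conj_classes_trace}). There is no genuine obstacle here: the only point demanding a moment's care is matching the index convention in the definition of $\tau(\vec k)$ so that ``fixed string'' is correctly identified with ``string constant on cycles'', after which the result is pure counting. (The same argument over $(\mathbb{C}^d)^{\otimes n}$ gives more generally $\mathrm{Tr}[M_\tau] = d^{c(\tau)}$.)
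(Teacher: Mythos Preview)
Your proof is correct and follows essentially the same approach as the paper for the conjugacy-invariance part: both use that $\tau\mapsto M_\tau$ is a homomorphism, write $M_{\tau'}=M_\nu^{-1}M_\tau M_\nu$, and invoke cyclicity of the trace. For the actual values, the paper simply says to compute $\mathrm{Tr}[M_\tau]$ on one representative of each of the five cycle types, whereas you derive the uniform formula $\mathrm{Tr}[M_\tau]=2^{c(\tau)}$ by counting fixed bit-strings; your version is a bit cleaner and also yields the general $d^{c(\tau)}$ statement, but the underlying argument is the same.
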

\begin{proof}
Since $\tau' = \nu^{-1} \tau \nu$, we have $M_{\tau'} = M_{\nu^{-1} \tau \nu} = M_{\nu^{-1}} M_{\tau} M_{\nu} = M_{\nu}^{-1} M_{\tau} M_{\nu} $. Then, $$ \Tr[M_{\tau'}] = \Tr[M_{\nu}^{-1} M_{\tau} M_{\nu}] = \Tr[M_{\tau}].$$ In particular, $\Tr[M_{\tau'}] = \Tr[M_{\tau}]$ for all $\tau' \in \mathrm{Cl}(\tau)$, where $$ \mathrm{Cl}(\tau) = \{ \nu^{-1} \tau \nu : \nu \in \mathrm{Sym}(4)\}$$ is the conjugacy class of $\tau'$. Now, for the symmetric group, the conjugacy class is determined by the cycle type \cite{sagan2013symmetric}. The group $\mathrm{Sym}(4)$ has five cycle types. One may then compute the values (\ref{eqn:conj_classes_trace}) by computing $\Tr[M_{\tau}]$ for a given example $\tau$ of each cycle type.
\end{proof}

\subsection{Feasible region of $\delta$}
\label{sec:delta_feasible_region}
Here, we explain how to compute the feasible region for the numerical optimisation over $\delta$. This is used to compute the tightened lower bound. For clarity, we restate here the expression of the lower bound presented in Section \ref{sec:sdp} of the main text,
\begin{equation}
    F'_{\min}(p,F) \geq \min_{\delta} \frac{1}{\delta}\left( \frac{Fp}{2} - \frac{p^2}{4} +(1-p)^2 H_{\mathrm{rel}}^*(p,F,\delta)\right),
    \label{eqn:lower_bound_with_SDP_appendix}
\end{equation}
where $H_{\mathrm{rel}}^*(p,F,\delta)$ is the optimal value of (\ref{opt:PPT1}), which is the lower bound on the post-swap fidelity, with $\delta$ fixed as the probability of obtaining the swap outcome $\ketbra{\Psi_{00}}$. We now show how to find the feasible region $[\delta_{\min},\delta_{\max}]$ within which it suffices to optimise in order to solve (\ref{eqn:lower_bound_with_SDP_appendix}). To find $\delta_{\min}$, we solve the following optimisation problem with SDP,
\begin{mini}|s|
{\sigma \in \mathrm{PPT}}{\mathrm{Tr}\big[\ketbra{\Psi_{00}}_{A_1 A_2} \sigma \big]}
{}{}
\addConstraint{\mathrm{Tr}\big[\ketbra{\Psi_{00}}_{B_1 A_1} \sigma \big] = \tilde{F}(p,F)}
\addConstraint{\mathrm{Tr}\big[\ketbra{\Psi_{00}}_{B_2 A_2} \sigma \big] = \tilde{F}(p,F)}
\addConstraint{\mathrm{Tr}[\sigma]= 1}
\addConstraint{\sigma\geq 0, \;\;\; \sigma^{\Gamma}\geq 0.}
\label{opt:PPT_for_delta}
\end{mini}
Letting $\tilde{\delta}_{\min}$ be the solution to the above, the minimum value of $\delta$ is $$\delta_{\min} = \frac{p^2}{2} - \frac{p^2}{4} + (1-p)^2 \tilde{\delta}_{\min}. $$ To find $\delta_{\max}$, we go through the same procedure, except the objective function of (\ref{opt:PPT_for_delta}) is instead maximised. Moreover, the problem (\ref{opt:PPT_for_delta}) may be symmetrised using the same proceduce as described in the previous subsection, which reduces the number of free parameters. This is how the solutions are computed in \cite{github_repo}.

\newpage

\section{Further analysis of SDP lower bound}
\label{app:more_SDP_lower_bound}
In this appendix, we further investigate the behaviour of the SDP lower bound for $F'_{\min}(p,F)$, which was presented in Section \ref{sec:sdp}. We firstly note that one may formulate an upper bound for $F'_{\max}(p,F)$ with the same method (i.e. performing a maximisation of the objective functions of (\ref{opt:PPT1}) and (\ref{eqn:lower_bound_with_SDP}) instead of a minimisaton). Although this is not necessary, because in Theorem \ref{thm:tight_upper_bound} we have an explicit solution for $F'_{\max}(p,F)$, we computed this solution in order to better understand the range of the end-to-end fidelity after the PPT relaxation. Interestingly, the result of this was always $F'_{\max}(p,F)$: in all cases tested, the corresponding SDP upper bound was tight. More specifically, it has a simple linear form in terms of $p$ and $F$, and the corresponding probability of measuring the outcome ($\delta^*$) takes a constant value of $1/4$.  This matches the example of the optimal state given in Theorem \ref{thm:tight_upper_bound}.

In the following, we will see that the SDP lower bound does not have these characteristics. 

\begin{figure}[t!]
    \centering
    \begin{subfigure}[b]{0.5\textwidth}
        \centering
        \includegraphics[width=80mm]{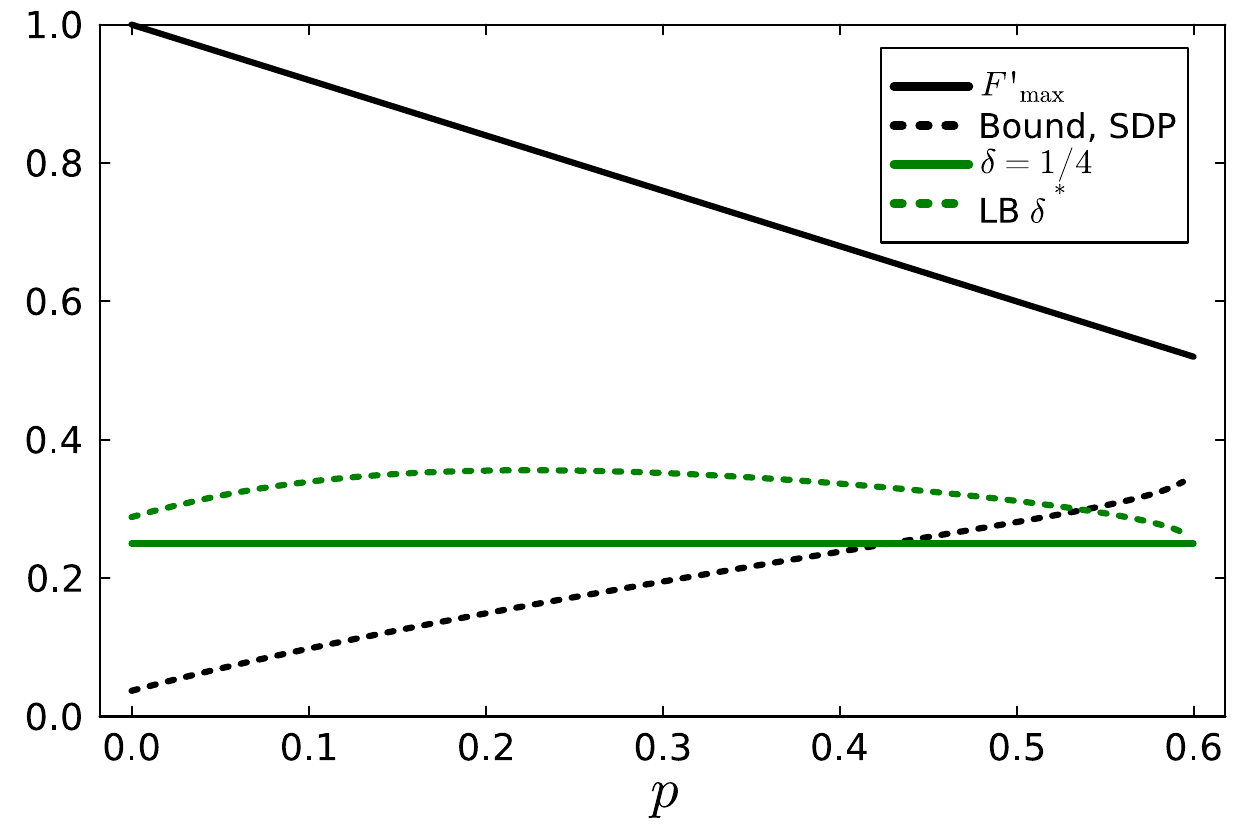}
        \caption{}
    \end{subfigure}%
    ~ 
    \begin{subfigure}[b]{0.5\textwidth}
        \centering
        \includegraphics[width=80mm]{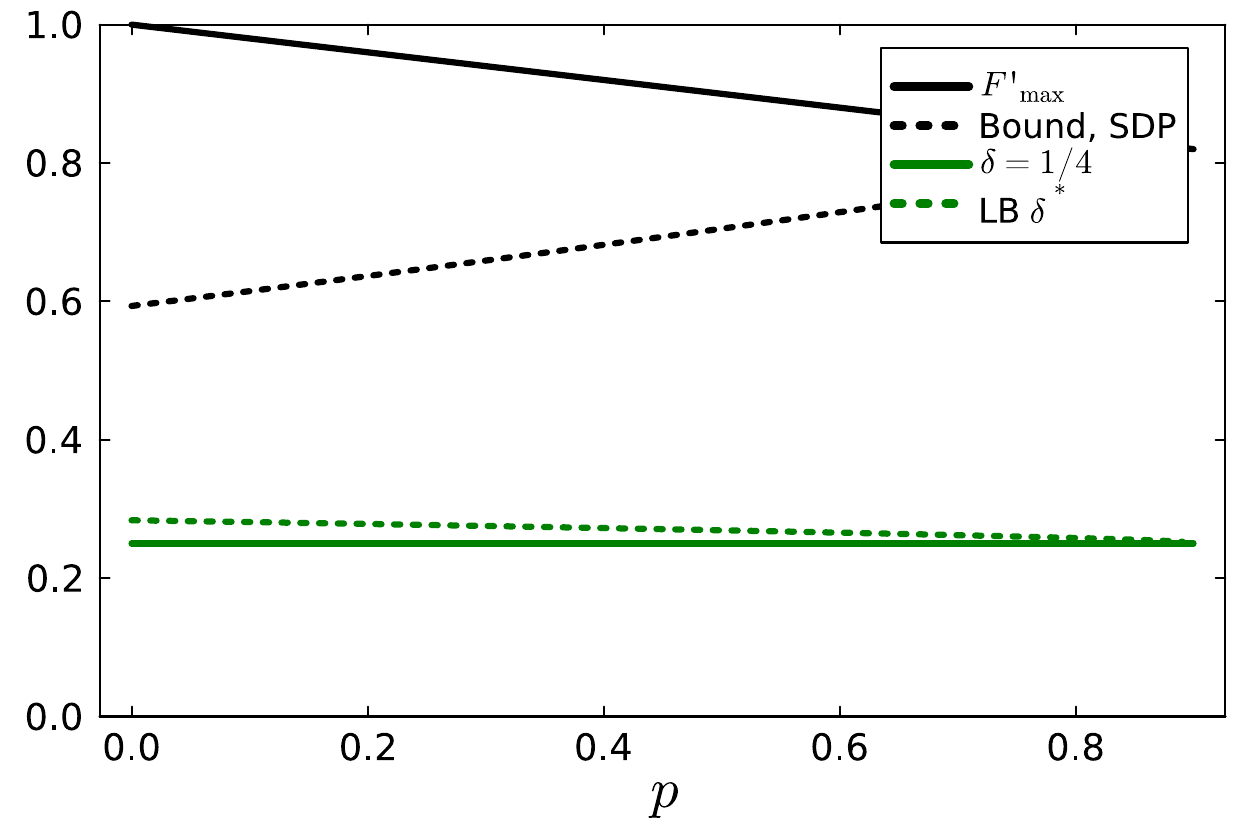}
        \caption{}
    \end{subfigure}
            \caption{\textbf{Probability of the optimal postselected swap outcome plotted against $p$}, for (a) $F=0.6$ and (b) $F=0.9$. Each is plotted for 100 values of $p$, uniformly spaced in the interval $[0,F]$. The black solid line is $F'_{\max}(p,F)$, and the green solid line is the postselected swap probability for the states saturating this value ($\delta = 1/4$). The black dotted line is the lower bound found with SDP, and the green dotted line is the postselected swap probability of the optimal state for the lower bound.
            \Guus{I think it's a little confusing that there is both F and $\delta$ in one figure, especially since the Y axis says fidelity. Maybe have the right-hand-side Y axis say post-swap fidelity or something?}
            }
            \label{fig:bounds_and_optimal_delta}
\end{figure}

\subsection{Optimal value of $\delta$}
Despite the fact that the SDP upper bound is tight and has a simple analytical form, in the case of no permutation symmetry we did not observe the same for the SDP lower bound. In order to further understand its behaviour, one may analyse the value of the postselected swap probability $\delta^*$ that minimises the expression of the lower bound (\ref{eqn:lower_bound_with_SDP}), i.e. 
\begin{align}
    F'_{\min}(p,F) &\geq \min_{\delta} \frac{1}{\delta}\left( \frac{Fp}{2} - \frac{p^2}{4} +(1-p)^2 H_{\mathrm{rel}}^*(p,F,\delta)\right) \label{eqn:lower_bound_with_SDP_appendix2} \\ &= \frac{1}{\delta^*}\left( \frac{Fp}{2} - \frac{p^2}{4} +(1-p)^2 H_{\mathrm{rel}}^*(p,F,\delta^*)\right). \label{eqn:delta_star}
\end{align}
Recalling that the states saturating the upper bound have constant postselected swap probability of $ 1/4$, we see from Figure \ref{fig:bounds_and_optimal_delta} that $\delta^*$ usually lies above this value, and is not constant. We were not able to find a good functional fit in terms of $p$ and $F$ for $\delta^*$. Similarly, we were not able to find a functional fit for the SDP lower bound: although for large values of $F$ this appears to be linear in some range of $p$ (see Figures \ref{fig:p_vs_F'_1} and \ref{fig:p_vs_F'_2} in the main text), we see from Figure \ref{fig:bounds_and_optimal_delta} that for $F=0.6$ this is not the case. We leave further analysis of the behaviour of $\delta^*$ to future work.


\subsection{Dependence on $\delta$}
\begin{figure}[t]
\includegraphics[width=90mm]{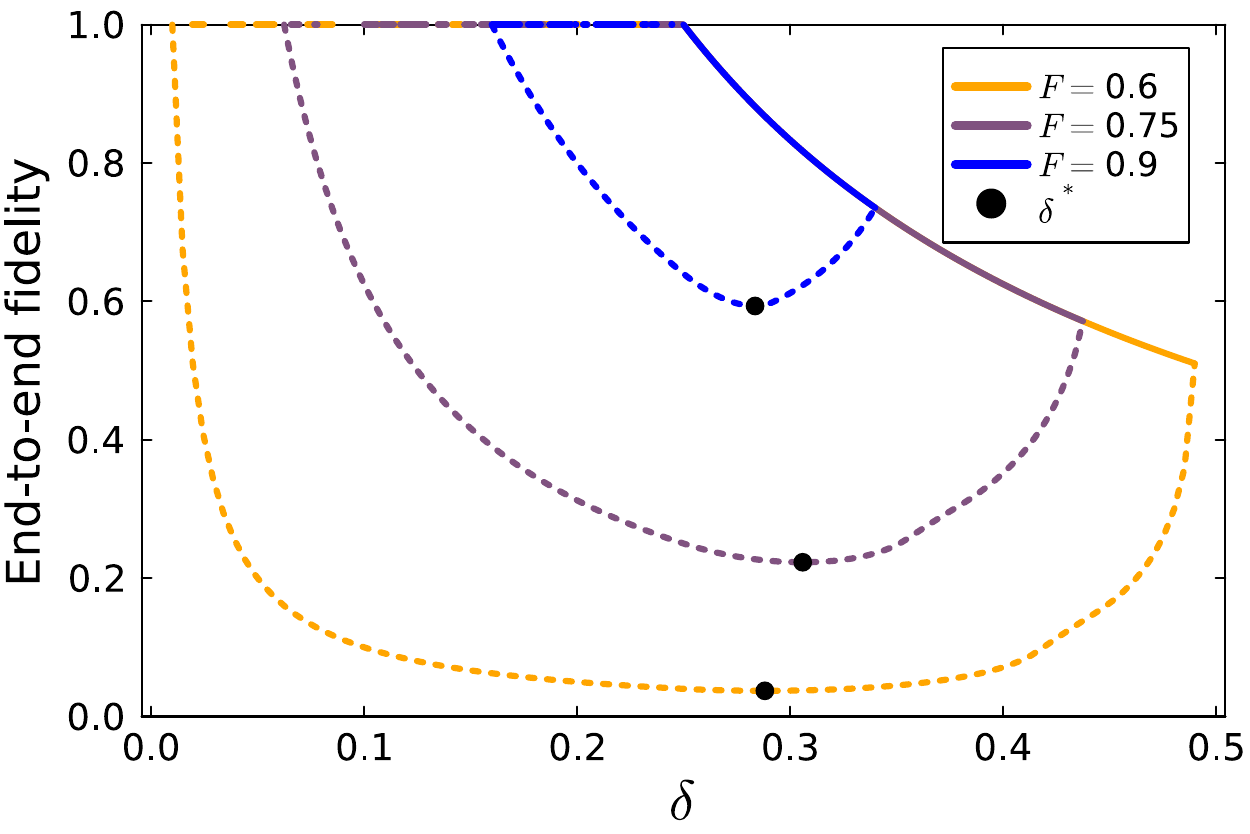}
\caption{\label{fig:show_delta_variation} \textbf{Bounds on the end-to-end fidelity, given a postselected swap probability $\delta$.} For three different values of $F$ and $p=0$, each is plotted for 100 values of $\delta$, uniformly spaced in the feasible region $[\delta_{\min},\delta_{\max}]$ for each value of $F$. The dotted lines are the lower bounds, bound by solving (\ref{opt:PPT1_appendix}). The solid lines are the upper bounds, found by solving the corresponding maximisation problem of (\ref{opt:PPT1_appendix}).
Also shown are the values $\delta^*$ that form the SDP lower bound for $F'_{\min}(p,F)$ from (\ref{eqn:delta_star}).}
\end{figure}

It was mentioned in the main text that fixing the parameter $\delta$ while optimising the end-to-end fidelity can aid to further understand the trade-off between rate and fidelity inherent to the entanglement swapping process. In particular, solving the problem (\ref{opt:PPT1}) provides an answer to the question `given that the probability of the postselected swap is $\delta$, how small (large) can my fidelity become after swapping'? In order to answer how small it can become, one may solve the following semi-definite program
\begin{mini}|s|
{\sigma }{ \mathrm{Tr}\big[ \ketbra{\Psi_{00}}_{B_1 B_2} \ketbra{\Psi_{00}}_{A_1 A_2} \sigma \big]}
{}{}
\addConstraint{ \mathrm{Tr}\big[\ketbra{\Psi_{00}}_{A_1 A_2} \sigma \big] = \tilde{\delta}(p,\delta)}
\addConstraint{\mathrm{Tr}\big[\ketbra{\Psi_{00}}_{B_1 A_1} \sigma \big] = \tilde{F}(p,F)}
\addConstraint{\mathrm{Tr}\big[\ketbra{\Psi_{00}}_{B_2 A_2} \sigma \big] = \tilde{F}(p,F)}
\addConstraint{\mathrm{Tr}[\sigma]= 1}
\addConstraint{\sigma\geq 0, \;\;\; \sigma^{\Gamma}\geq 0.}
\label{opt:PPT1_appendix}
\end{mini}
This was also given in (\ref{opt:PPT1}) in the main text. To find bounds on how large the fidelity can become, one may perform instead a maximisation of (\ref{opt:PPT1_appendix}), or simply replace the objective function by a minus sign. These bounds are shown in Figure \ref{fig:show_delta_variation}. These are plotted for $p=0$ and three different values of $F$, in the feasible range of $\delta$. Interestingly, the upper bounds for each of the three fidelity values always coincide in their corresponding feasible region.
This extends the observation from the beginning of Section \ref{sec:advantages_full}, where we saw that no matter the value of the initial fidelity, it is always possible to obtain a unit end-to-end fidelity (from Figure \ref{fig:show_delta_variation}, we observe this for values of $\delta$ below $1/4$, where all upper bounds are equal to one). Recalling the state $\ket{\psi} = \sqrt{F} \ket{\Psi_{00}} + \sqrt{1-F} \ket{\Psi_{11}}$ that swaps to unit fidelity with probability $1/4$, we therefore conclude that at the point $\delta=1/4$ the upper bound is tight.
Then, $\ket{\psi}$ is optimal in the sense that it has the maximum probability of swapping to perfect fidelity. Indeed, beyond $\delta = 1/4$, we see there is necessarily a decrease in the end-to-end fidelity if we demand that $\delta$ is larger than this value. It can also be seen from the figure that at the extremal values of the feasible region of $\delta$, the upper and lower bounds meet. In particular, when $\delta$ becomes close to $\delta_{\min}$, the lower bound goes to one. We conclude that if the postselected swap probability is made as small as possible, the end-to-end fidelity will necessarily increase to one. Despite this, the lower bound is not monotonic in $\delta$: for large values of $\delta$, we see from the figure that it increases again before joining up with the upper bound at $\delta_{\max}$. From this behaviour we may conclude that, for the values of $F$ tested, the numerical optimiser over $\delta$ that is employed in (\ref{eqn:lower_bound_with_SDP_appendix2}) to find the SDP lower bound is indeed finding the global minimum. This is highlighted by the black circles in Figure \ref{fig:show_delta_variation}, which are the values $\delta^*$ of the postselected swap probability that minimise the lower bound. The value of the objective function at each point is the SDP lower bound for $F_{\min}(0,F)$.

\newpage
\section{Invariance of secret-key fraction under Bell-diagonal twirling}
\label{app:SKF}
As discussed in Section \ref{sec:qkd}, the secret-key fraction depends on the QBER according to (\ref{eqn:skf}). The QBER is defined as the probability that, when both nodes measure their state in the X (Z) basis, they obtain different outcomes. Letting $\sigma$ be the entangled state shared between the two nodes, the probability of obtaining an error when measuring in the $Z$ basis is given by
\begin{align}
   Q_Z &=  \bra{01}\sigma\ket{01} + \bra{10}\sigma\ket{10} 
\end{align}
Noting that $$\ketbra{01} + \ketbra{10} = \frac{1}{2}\left(I_4 - Z\otimes Z\right)\sigma ,$$ we have
\begin{align}
    Q_Z(\sigma) &= \Tr\left[(\ketbra{01} + \ketbra{10}) \sigma \right] \\ 
    &=\frac{1}{2} \Tr\left[ \left(I_4 - Z\otimes Z\right)\sigma \right] \\
    &= \frac{1}{2} (1-\Tr\left[ (Z\otimes Z) \sigma \right]).
\end{align}
Now, since we have $(Z\otimes Z)\ket{\Psi_{ij}} = \pm \ket{\Psi_{ij}}$ for all Bell states $\ket{\Psi_{ij}}$, we see that $\Tr\left[ (Z\otimes Z) \sigma \right]$ only depends on the Bell-diagonal elements of $\sigma$, and therefore $Q_Z(\sigma) = Q_Z(\mathcal{B}(\sigma))$. The same holds for measuring in the $X$-basis:
\begin{align}
    Q_X(\sigma) &= \frac{1}{2} (1-\Tr\left[ (X\otimes X) \sigma \right]) \\ &= \frac{1}{2} (1-\Tr\left[ (X\otimes X) \mathcal{B}(\sigma) \right]) \\ &= Q_X(\mathcal{B}(\sigma)).
\end{align}
By (\ref{eqn:skf}), we therefore have $\text{SKF}(\sigma) = \text{SKF}(\mathcal{B}(\sigma))$.

\end{document}